\def\and{\mathrm{and}}
\newtheorem{prop}{Proposition}
\newtheorem{thm}{Theorem}
\newcommand{\be}{\begin{equation}}
\newcommand{\ee}{\end{equation}}
\newcommand{\bea}{\begin{eqnarray}}
\newcommand{\eea}{\end{eqnarray}}
\newcommand{\beas}{\begin{eqnarray*}}
\newcommand{\eeas}{\end{eqnarray*}}
\newcommand{\ba}{\begin{array}}
\newcommand{\ea}{\end{array}}
\newcommand{\w}{\omega}
\newcommand{\nbox}{{\,\lower0.9pt\vbox{\hrule \hbox{\vrule height 0.2 cm \hskip 0.19 cm \vrule height 0.2 cm}\hrule}\,}}
\def\href#1#2{#2}
\begin{document}

\begin{titlepage}
\hfill
\vbox{
    \halign{#\hfil         \cr
           } 
      }  

\hbox to \hsize{{}\hss \vtop{ \hbox{}

}}

%

\vspace*{20mm}
\begin{center}

{\large \textbf{Detecting topology change via correlations and entanglement}}

{\normalsize \vspace{5mm} }

{\large \textbf{from gauge/gravity correspondence} }

{\Large \vspace{ 20mm} }

{\normalsize {Hai Lin${}^{1}$, Keyou Zeng${}^{2}$}  }

{\normalsize \vspace{10mm} }

{\small \emph{${}^1$\textit{Yau Mathematical Sciences Center, Tsinghua University,
Beijing 100084, P. R. China
}} }

{\normalsize \vspace{0.2cm} }

{\small \emph{$^2$\textit{Department of Physics, Tsinghua University,
Beijing 100084, P. R. China
\\
}} }

{\normalsize \vspace{0.4cm} }

\end{center}

\begin{abstract}

We compute a momentum space version of the entanglement spectrum and entanglement entropy of general Young tableau states, and one-point functions on Young tableau states. These physical quantities are used to measure the topology of the dual spacetime geometries in the context of gauge/gravity correspondence. The idea that Young tableau states can be obtained by superposing coherent states is explicitly verified. In this quantum superposition, a topologically distinct geometry is produced by superposing states dual to geometries with a trivial topology. Furthermore we have a refined bound for the overlap between coherent states and the rectangular Young tableau state, by using the techniques of symmetric groups and representations. This bound is exponentially suppressed by the total edge length of the Young tableau. It is also found that the norm squared of the overlaps is bounded above by inverse powers of the exponential of the entanglement entropies. We also compute the overlaps between Young tableau states and other states including squeezed states
and multi-mode entangled states which have similarities with those appeared in quantum information theory.

\end{abstract}

\end{titlepage}

\vskip 1cm

\section{Introduction}

\label{sec_Introduction}\vspace{1pt}\renewcommand{\theequation}{1.%
\arabic{equation}} \setcounter{equation}{0}

The gauge/gravity correspondence \cite%
{Maldacena:1997re,Gubser:1998bc,Witten:1998qj} is a nontrivial duality
between a quantum system without gravity and a quantum theory incorporating
gravity in the bulk. It provides a model for studying quantum gravity by
quantum field theory on the boundary of the spacetime. The duality reveals
the emergence of spacetime geometry from the degrees of freedom on the
asymptotic boundary, and the bulk spacetime emerges dynamically from the
quantum mechanical description that lives in fewer dimensions \cite%
{Rangamani:2016dms,VanRaamsdonk:2010pw,Horowitz:2006ct,Koch:2009gq}. The
most studied example of gauge/gravity correspondence is an exact equivalence
between Type-IIB String Theory on $AdS_{5}\times S^{5}$ and $\mathcal{N}=4$
Supersymmetric Yang-Mills Theory (SYM) on $4$-$d$ Minkowski spacetime. This
correspondence allows us to perform calculations relevant to the string
theory while working in the quantum field theory side. It further provides
us a way to investigate new quantitative features of non-perturbative effect
in quantum gravity, and it greatly enriches our knowledge about
non-perturbative aspects of string theory.

In the context of gauge/gravity correspondence, there are backreacted
geometries that correspond to highly excited states in the field theory
side, such as the bubbling geometries \cite%
{Lin:2004nb,Berenstein:2004kk,Corley:2001zk}. On the gravity side, they have
complicated topologies and geometries, and have interesting features
including backreaction and topology changes \cite{Lin:2004nb,Horowitz:2006ct}%
. On the field theory side, states in the Hilbert space of the quantum field
theory are explicitly mapped to the gravity side by associating the
corresponding droplet configuration to the boundary value at the interior of
the spacetimes \cite{Berenstein:2004kk,Corley:2001zk,Lin:2004nb}.

Study in the field theory side shows that these different configurations
\cite{Lin:2004nb,Berenstein:2004kk,Corley:2001zk} live in the same Hilbert
space. Different geometries correspond to excited states which are treated
on equal footing. Since they live in the same Hilbert space, one can perform
operations that are allowed by quantum mechanics, and can superpose states
and compute transition probabilities between different states, for example
\cite{Berenstein:2017abm,Diaz:2015tda,Brown:2006zk}. Different microstates
can be distinguished from each other, by looking carefully at correlation
functions \cite%
{Skenderis:2007yb,Christodoulou:2016nej,Balasubramanian:2007qv}.

In this paper we focus on states which possess interesting geometric
properties in the gravity side. One class of known examples are composite
states labeled by Young tableaux, also called Young tableau states. Their
geometric properties in the gravity side are known to us \cite%
{Lin:2004nb,Berenstein:2004kk,Corley:2001zk}. Another interesting type of
states are coherent states \cite{Berenstein:2017abm}. The coherent states
are interesting, since the gravity dual of these states have descriptions in
terms of semiclassical geometries \cite%
{Lin:2004nb,Berenstein:2004kk,Corley:2001zk}. There are some interesting
phenomena related to these two kind of states. As pointed out in \cite%
{Berenstein:2017abm} and further verified in our paper that while the
coherent states around vacuum have trivial topology, after superposing them,
one can produce Young tableau states dual to geometries with a distinct
topology. Besides, the number of annuli of the geometries can be
independently predicted from the field theory side \cite{Berenstein:2017abm}%
. In this paper we analyze measuring topology and detecting topology change,
via correlations and entanglement in the context of gauge/gravity
correspondence. Topology change is important and should be taken into
account in a quantum theory of gravity for consistency reasons \cite%
{Hawking:1979zw,Horowitz:1990qb,Dowker:2002hm}. In order to understand the
transition probabilities from the coherent states to Young tableau states,
we need to compute the overlaps between them, and this is what we have done
in this paper. The probabilities calculated from these overlaps describe the
probabilities of topology-changing transition from the geometry with a
trivial topology to the geometry with a distinct non-trivial topology. For
instance, during this transition, the number of the black annuli in the
geometries has changed. These configurations also have similarities with the
fuzzballs \cite{Mathur:2005ai}, which have provided important insights into
the information loss problem.

Quantum entanglement is a generic feature of a many-body quantum system.
Quantum correlations are also an important resource for information
processing, and are important in quantum information theory \cite{Quantum
information}. In our study of the gauge/gravity correspondence, the quantum
field theory side of the duality is an example of a many-body quantum
system. This also indicates that many-body states are very important in the
field theory side. Indeed, there are states that are of interest both in our
setup and in quantum information theory, which will be discussed in our
paper. Among other things, we compute a momentum space version of the
entanglement spectrum and entanglement entropy of Young tableau states.

The theory of the symmetric group naturally arises in the study of many-body
quantum mechanics since the permutation of particles naturally defines an
action of symmetric group on the Hilbert space. Also, there is a useful way
to label many-body identical particles by using symmetric groups. We will
study heavy states labeled by Young tableaux corresponding to the
representations of symmetric groups. Mathematically speaking, the approach
and derivation here use the techniques of symmetric groups and the theories
of representations and characters \cite{Ramgoolam:2008yr,Sagan,James
Kerber,Fulton,Goldschmidt,Corley:2001zk}. From a mathematical point of view,
the study of the symmetric group is intimately related to the theory of
symmetric functions. Therefore, we see that the symmetric function theory is
a powerful tool in our study of quantum many-body physics. On the other
hand, physical intuition may also lead to new insights into mathematics. And
we will present such an example that relates the characters of the symmetric
group to the topological properties of the bubbling geometry.

The organization of this paper is as follows. In Section \ref{sec_Young
tableau states and entanglement}, we describe general Young tableau states
and their entanglement spectrum and entanglement entropy. In Section \ref%
{sec_Coherent states and general Young tableau states}, we compute the inner
product between coherent states and Young tableau states. Afterwards in
Section \ref{sec_Bound of overlap and entanglement entropy}, we analyze the
bound of the overlap between coherent states and Young tableau states. And
then in Section \ref{sec_Generalized superposition formula}, we construct a
generalized expansion formula for general Young tableau states in integral
representation. In Section \ref{sec_Squeezed states, multi-mode entangled
states from Young tableau states}, we discuss squeezed states and multi-mode
entangled states and their overlaps with Young tableau states. Finally, we
discuss our results and draw some conclusions in Section \ref{sec_Discussion}%
.

\section{Young tableau states and entanglement}

\renewcommand{\theequation}{2.\arabic{equation}} \setcounter{equation}{0} %
\renewcommand{\thethm}{2.\arabic{thm}} \setcounter{thm}{0}

\label{sec_Young tableau states and entanglement}

Quantum entanglement is a common feature of a many-body quantum system.
Many-body states are important in the study of the gauge/gravity duality. In
our setup, composite many-body states can be labeled by Young tableaux, in
which the tableaux \cite{Fulton} keep track of appropriate symmetrization of
various indices of many identical particles. Here we consider the composite
states, which are labeled by Young tableaux and sometimes called Young
tableau states. They are not direct product states and we consider their
entanglement entropy.

Let us begin by describing the Hilbert space in our discussion. The Hilbert
space of states have a nature tensor product structure given by the momentum
number $k,$
\begin{equation}
\mathcal{H}=\bigotimes_{k}\mathcal{H}_{k}=\mathcal{H}_{1}\otimes \mathcal{H}%
_{2}\otimes \dots ,
\end{equation}%
where $\mathcal{H}_{k}$ is the Hilbert space for mode $k$. The creation and
annihilation operators for mode $k$ are $a_{k}^{\dagger }$ and $a_{k}$.
Their commutation relations are
\begin{equation}
\lbrack a_{k},~a_{k^{\prime }}^{\dagger }]=k\delta _{kk^{\prime }},
\label{commutation_relation_01}
\end{equation}%
with appropriate normalization convention. The states in mode $k$, with
occupation number $l$, is $t_{k}^{l}=(a_{k}^{\dagger })^{l}\left\vert
0\right\rangle _{k},$where $\left\vert 0\right\rangle _{k}~$is the vacuum of
$\mathcal{H}_{k}$.$~$Constructions in this section work generally for
systems having a similar Hilbert space, but we mention that in the context
of half BPS sector of the SYM in which gauge invariant observables can be
constructed from a single complex matrix $X$, the $t_{k}$ corresponds to $%
\text{Tr}(X^{k})$.

We can also understand this Hilbert space in terms of the conjugacy classes
\cite{Sagan,James Kerber} of the symmetric group. Here $t_{k}$ represents a
conjugacy class of cycle of length $k$. The $\mathcal{H}_{k}$ is spanned by $%
t_{k}^{w_{k}},\;w_{k}=0,1,2\dots $, where $w_{k}$ is the occupation number
for mode $k$. The states $t_{k}^{w_{k}}$ have also been studied in \cite%
{Berenstein:2017abm,Corley:2001zk,
Berenstein:2004kk,Lin:2004nb,Ramgoolam:2008yr,Corley:2002mj,Kristjansen:2002bb,Koch:2008cm,Caputa:2014vaa}%
. A general conjugacy class can be written as $\prod_{k}t_{k}^{w_{k}}$. In
other words, a conjugacy class is uniquely determined and labeled by a
sequence $\vec{w}=(w_{1},w_{2},\dots )$. States in the Hilbert space $%
\mathcal{H}$ can be spanned by the basis
\begin{equation}
\prod_{k}(t_{k})^{w_{k}}:=\prod_{k}\left( a_{k}^{\dagger }\right)
^{w_{k}}\left\vert 0\right\rangle ,
\end{equation}%
where $\left\vert 0\right\rangle =\otimes _{k}\left\vert 0\right\rangle _{k}$
is the vacuum of $\mathcal{H}$. Hence the states $\prod_{k}t_{k}^{w_{k}}$
live in $\mathcal{H}$, with each factor $t_{k}^{w_{k}}$ lives in the subspace%
$~\mathcal{H}_{k}$. The norm is
\begin{equation}
\parallel \prod_{k}t_{k}^{w_{k}}\parallel ^{2}=\prod_{k}k^{w_{k}}w_{k}!,
\end{equation}%
and the inner product $\langle \prod_{k}t_{k}^{w_{k}}\lvert
\prod_{k}t_{k}^{u_{k}}\rangle =0$ if $\vec{w}\neq \vec{u}$. The factor $%
k^{w_{k}}$ is due to the $k$ in the convention of the commutation relation (%
\ref{commutation_relation_01}). In this paper, we sometimes use the symbol $%
\left\vert {t_{k}^{l}}\right\rangle $ to denote the same state ${t_{k}^{l}}$%
, and we also denote $\prod_{k}{t_{k}^{w_{k}}}$ to mean $\otimes _{k}{%
t_{k}^{w_{k}}}$, where the products of states are understood as tensor
products.

For a Young tableau $\lambda $, we write $\lambda \vdash n$ to mean that $%
\lambda $ corresponds to a partition of $n$. From the representation theory
of symmetric group, we know that each Young tableau $\lambda $ is associated
with an irreducible representation of the symmetric group $S_{n}$. We define
a Young tableau state that is associated with the Young tableau $\lambda $ by%
\begin{equation}
\left\vert \lambda \right\rangle =\sum_{\vec{w}\in p(n)}\chi _{\lambda }(%
\vec{w})\prod_{k}\frac{1}{k^{w_{k}}w_{k}!}(t_{k})^{w_{k}},  \label{Def_Young}
\end{equation}%
where $p(n)$ is the set of all partitions of $n$, which is all such $\vec{w}$
that $\sum kw_{k}=w_{1}+2w_{2}+3w_{3}+\dots =n$. Here $\vec{w}$ denotes a
partition and also a conjugacy class. Here $\chi _{\lambda }$ is the
character \cite{James Kerber} of the irreducible representation associated
with $\lambda $, and $\chi _{\lambda }(\vec{w})$ means the value of the
character on the conjugacy class $\vec{w}$, or $\chi _{\lambda
}(\prod_{k}t_{k}^{w_{k}})$. We can view a general Young tableau as a
multipartite system, as it can be expanded by different conjugacy classes of
cycles with various lengths.

These states are dual to bubbling geometries with various droplet
configurations [8$-$12, 14, 31$-$34]. See Figure \ref{Figure_droplet} (a).
The LLM bubbling geometries contain a black and white plane. There is a time
direction and a radial direction perpendicular to this plane. And there is $%
S^{3}\times S^{3}$ fibered over the black and white plane. One $S^{3}$
shrinks smoothly on the black domains and the other $S^{3}$ shrinks smoothly
on the white domains.

\begin{figure}[h]
\centering
\begin{subfigure}{0.27\textwidth}
		\includegraphics[width=\textwidth]{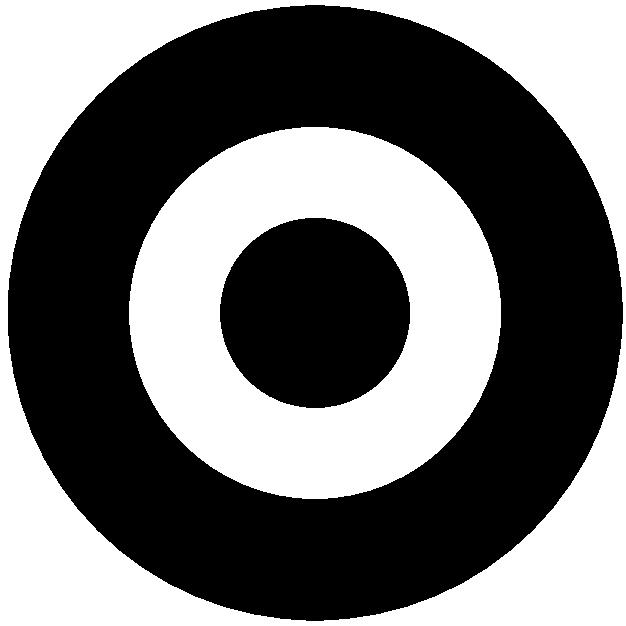}
		\vspace{3pt}
        \caption{State dual to the rectangular Young tableau}
        \label{fig:rectangular01}
	\end{subfigure}
\;\;\;\;\;
\begin{subfigure}{0.3\textwidth}
		\includegraphics[width=\textwidth]{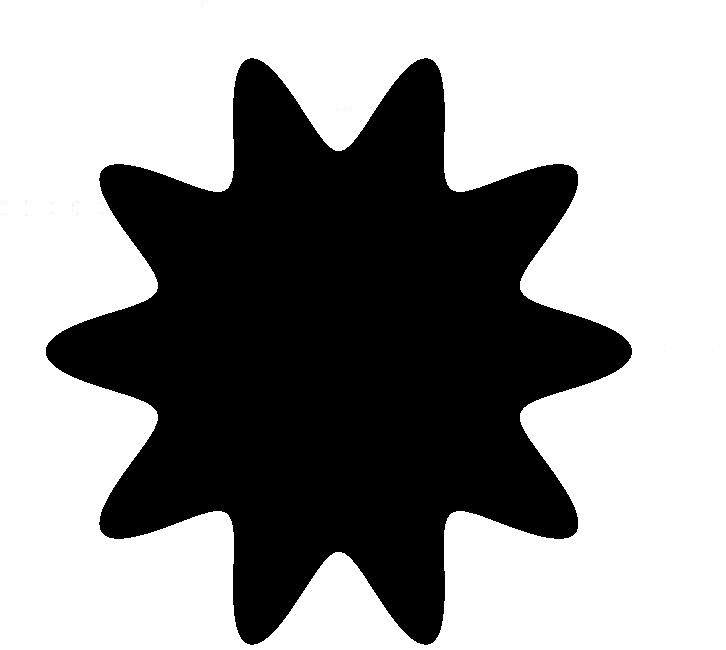}
		\vspace{3pt}
        \caption{State with more symmetry}
        \label{fig:coh_01}
	\end{subfigure}
\;\;\;\;\;
\begin{subfigure}{0.28\textwidth}
		\includegraphics[width=\textwidth]{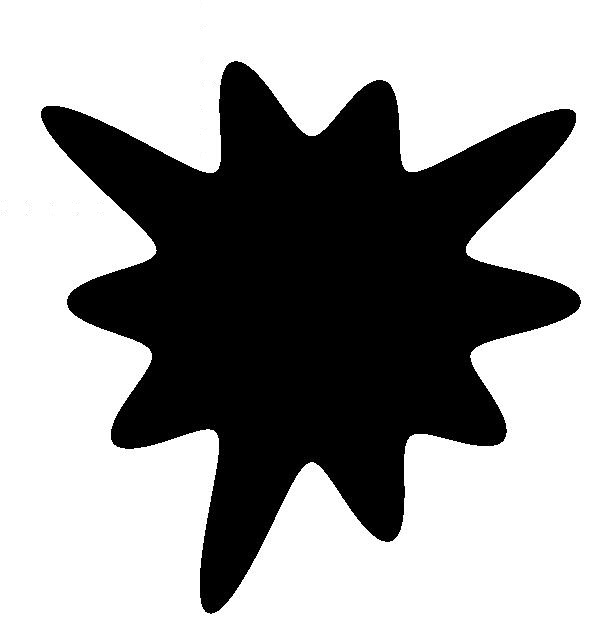}
        \vspace{3pt}
		\caption{State with less symmetry}
		\label{fig:coh_02}
	\end{subfigure}
\caption{This is the black and white plane inside the ten-dimensional LLM
bubbling geometries in string theory. The droplet picture of the geometry
dual to the operator labeled by a Young tableau on the field theory side is
shown in (a). The droplet picture of the geometries dual to the operators
corresponding to the coherent states on the field theory side are shown in
(b, c). The geometries corresponding to states (b) and (c) have the same
topology as the ground state geometry. While, the geometry corresponding to
(a) has a different topology. On the other hand, the states represented by
(b) and (c) participate in the superposition to obtain state (a). }
\label{Figure_droplet}
\end{figure}

Given a Young tableau state $\rvert \lambda \rangle $, it can be mapped to a
droplet configuration, where horizontal edges correspond to white annuli and
vertical edges correspond to black annuli. And the area of the annulus is
determined by the length of the corresponding edge. The operator $\frac{1}{%
\sqrt{j^{l}l!}}(a_{j}^{\dagger })^{l}$ creates excitations whose gravity
dual interpretations are $l$ Kaluza-Klein (KK) gravitons each with momentum $%
j$, moving along the circular direction on the black and white plane in the
bubbling geometries \cite{Lin:2004nb}. We denote the length of each
horizontal edge to be $L_{i}$, and the length of each vertical edge to be $%
M_{i}$. Then the corresponding white and black annuli have areas $L_{i}$ and
$M_{i}$. The corresponding droplets are shown in Figure \ref%
{Figure_Young_general_droplet}.

\begin{figure}[h]
\centering
\vspace{3pt} \includegraphics[width=0.5\textwidth]{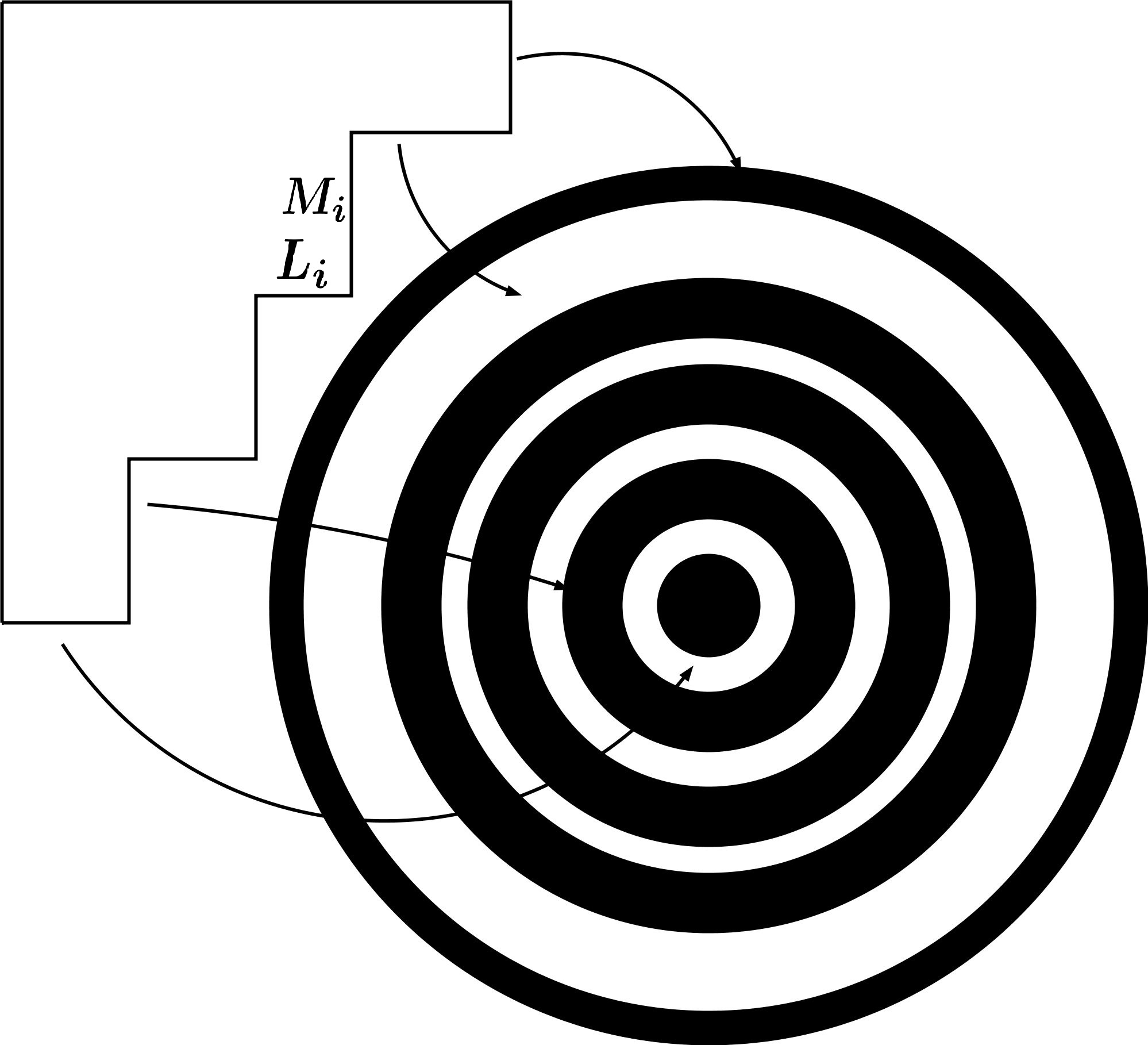}
\vspace{3pt} \vspace{1pt}
\caption{More general Young tableau and the corresponding droplets.}
\label{Figure_Young_general_droplet}
\end{figure}

\vspace{1pt}

There is another type of states, the coherent states \cite%
{Berenstein:2017abm}. A general coherent state can be written as%
\begin{align}
\left\vert Coh\right\rangle & =\prod_{k}\exp (\Lambda _{k}\frac{t_{k}}{k}%
)=\prod_{k}(\sum_{l_{k}=0}^{\infty }\frac{1}{l_{k}!}(\Lambda _{k}\frac{t_{k}%
}{k})^{l_{k}})  \notag \\
& =\sum_{\vec{l}}\prod_{k}\frac{1}{l_{k}!}(\Lambda _{k}\frac{t_{k}}{k}%
)^{l_{k}}  \label{coh_01}
\end{align}%
where $\Lambda _{k}$ are parameters of the coherent states, and the last sum
is over all $\vec{l}=(l_{1},l_{2},\dots )$. Coherent states can also be
mapped to the droplet configuration. They can describe ripples or
deformations with various sizes around vacuum configuration. See Figure \ref%
{Figure_droplet} (b, c) for example.

The Young tableau state $\left\vert \lambda \right\rangle $ is not a direct
product state, and hence it has nonzero entanglement between modes. Their
entanglement entropy can be calculated by explicit partial tracing in the
Hilbert space. Consider the subsystem whose Hilbert space is $\mathcal{H}%
_{j} $, here $j$ parametrizes different modes in the momentum space. The
entanglement entropy for a composite Young tableau state $\left\vert \lambda
\right\rangle $ (where $\lambda \vdash n$) is
\begin{equation}
s_{j}(\lambda )=-\ \mathrm{Tr}\ _{\mathcal{H}_{j}}(\hat{\rho _{j}}\log (\hat{%
\rho _{j}})),  \label{entropy_young_01}
\end{equation}%
where%
\begin{equation}
\hat{\rho _{j}}=\hat{\rho _{j}}(\lambda )=\ \mathrm{Tr}_{j}(\left\vert
\lambda \right\rangle \left\langle \lambda \right\vert ).
\label{density_matrix_young_01}
\end{equation}%
Here $\hat{\rho _{j}}=\hat{\rho _{j}}(\lambda )$ is the density matrix
operator for the subsystem whose Hilbert space is $\mathcal{H}_{j}$. And Eq.
(\ref{entropy_young_01}) is the entanglement entropy in momentum space of
the subsystem whose Hilbert space is $\mathcal{H}_{j}$, and is obtained by
tracing out other subsystems that are complement to it, whose Hilbert space
is $\bigotimes_{k\neq j}\mathcal{H}_{k}$. This is the momentum space version
of the entanglement entropy. It is useful for studying UV/IR entanglement,
where UV and IR modes are entangled. The subsystem in this case, is a region
in the momentum space. The individual Young tableau state is a pure state.
The entanglement entropy $s_{j}$ arises after partial tracing out other
momentum modes living in the momentum space version of the Hilbert space
decomposition. In the calculation of the entropy, we will calculate partial
trace, and we write $\mathrm{Tr}_{k}$ to mean tracing over $%
\bigotimes_{j\neq k}\mathcal{H}_{j}$, or in other words $\mathrm{Tr}_{k}=%
\mathrm{Tr}\ _{\otimes _{j\neq k}\mathcal{H}_{j}}$. This has to be
distinguished from $\mathrm{Tr}\ _{\mathcal{H}_{k}}$ in our notation. These
density matrices and their associated entanglement entropies can be
calculated by using correlation functions or inner products $\langle \lambda
\lvert \prod_{k}{t_{k}^{w_{k}}}\rangle $.

Let us compute%
\begin{align}
\hat{\rho _{j}}(\lambda )& =\ \mathrm{Tr}_{j}(\left\vert \lambda
\right\rangle \left\langle \lambda \right\vert )  \notag \\
& =\ \mathrm{Tr}_{j}(\sum_{\vec{w}\in p(n)}\chi _{\lambda }(\vec{w})\sum_{%
\vec{u}\in p(n)}\chi _{\lambda }(\vec{u})\prod_{k}\frac{1}{k^{w_{k}}w_{k}!}%
\frac{1}{k^{u_{k}}u_{k}!}\left\vert {t_{k}^{w_{k}}}\right\rangle
\left\langle {t_{k}^{u_{k}}}\right\vert ).
\end{align}%
While tracing over $\bigotimes_{k\neq j}\mathcal{H}_{k}$, since $\mathrm{Tr}%
\ _{\mathcal{H}_{k}}(\left\vert {t_{k}^{w_{k}}}\right\rangle \left\langle {%
t_{k}^{u_{k}}}\right\vert )=\langle t_{k}^{u_{k}}\lvert t_{k}^{w_{k}}\rangle
$, so we have
\begin{equation}
\mathrm{Tr}\ _{j}(\prod_{k}\left\vert {t_{k}^{w_{k}}}\right\rangle
\left\langle {t_{k}^{u_{k}}}\right\vert )=\left\vert {t_{j}^{w_{j}}}%
\right\rangle \left\langle {t_{j}^{u_{j}}}\right\vert \prod_{k\neq
j}k^{w_{k}}w_{k}!\delta _{w_{k},u_{k}}~.
\end{equation}%
The vector $\left\vert {t_{j}^{w_{j}}}\right\rangle ~$has to be normalized
in order to calculate the entropy. Define $\left\vert w{_{j}}\right\rangle
_{j}=\frac{1}{\sqrt{j^{w_{j}}w_{j}!}}\left\vert {t_{j}^{w_{j}}}\right\rangle
$, this normalized state has norm 1. Then the density matrix can be written
as%
\begin{equation}
\hat{\rho _{j}}(\lambda )=\sum_{\vec{w}\in p(n)}(\chi _{\lambda }(\vec{w}%
))^{2}(\prod_{k\neq j}(\frac{1}{k^{w_{k}}w_{k}!}))\frac{1}{j^{w_{j}}w_{j}!}%
\left\vert w{_{j}}\right\rangle _{j}\left\langle w{_{j}}\right\vert _{j} ,
\end{equation}%
where $p(n)~$is the set of all partitions of $n$.

We can read out the probability distribution from this expression
\begin{equation}
p_{l}^{(j)}(\lambda )=\sum_{\vec{w}\in p(n),w_{j}=l}(\chi _{\lambda }(\vec{w}%
))^{2}(\prod_{k\neq j}(\frac{1}{k^{w_{k}}w_{k}!}))\frac{1}{j^{l}l!}~.
\label{p^j_l_01}
\end{equation}%
The $p_{l}^{(j)}(\lambda )$ are the non-zero eigenvalues of the density
matrix $\hat{\rho}_{j}(\lambda )$. The $p_{l}^{(j)}(\lambda )$ is also the
probability for mode $j$ to have occupation number $l$, for a given Young
tableau $\lambda $. The number of none-zero eigenvalues of $\hat{\rho _{j}}$
is bounded from above by $n$, which is the total number of boxes of the
Young tableau $\lambda $. \vspace{1pt}The\ Eq. (\ref{p^j_l_01}) is the
entanglement spectrum of a general Young tableau $\lambda $. The $%
p_{l}^{(j)} $ are the corresponding probability distributions, and they
depend on the representation $\lambda $, that is $p_{l}^{(j)}=p_{l}^{(j)}(%
\lambda )$. The density matrices $\hat{\rho}_{i}(\lambda )$ are density
matrices living in the space of Young tableaux.

The characters have the orthogonality relation%
\begin{equation}
\sum_{\vec{w}\in p(n)}\frac{\chi _{\lambda }(\vec{w})\chi _{\mu }(\vec{w})}{%
\prod_{k}k^{w_{k}}w_{k}!}=\delta _{\lambda ,\mu },
\end{equation}%
or specifically
\begin{equation}
\sum_{\vec{w}\in p(n)}\frac{\chi _{\lambda }(\vec{w})^{2}}{%
\prod_{k}k^{w_{k}}w_{k}!}=1.  \label{character_2}
\end{equation}%
We have hence shown that $\sum_{l}p_{l}^{(j)}=1.$

Hence the entropy is%
\begin{equation}
s_{i}(\lambda )=-\sum_{l}p_{l}^{(i)}(\lambda )\log (p_{l}^{(i)}(\lambda ))
\end{equation}%
and this is for a general Young tableau state labeled by $\lambda $.

\begin{thm}
\label{thm_tableaux} For any Young tableau $\lambda $, define the transpose
of $\lambda $, $\lambda ^{T}$ to be the Young tableau whose shape is given
by reflecting the original diagram $\lambda $ along its main diagonal. Then
the states defined by $\lambda $ and $\lambda ^{T}$ have the same
entanglement spectrum, and hence the same entanglement entropy.
\end{thm}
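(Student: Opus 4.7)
The plan is to reduce the statement to a single classical identity from the representation theory of the symmetric group, namely that the irreducible representation labelled by $\lambda^T$ is obtained from the one labelled by $\lambda$ by tensoring with the sign representation. At the level of characters this reads
\begin{equation}
\chi_{\lambda^T}(\vec{w}) \;=\; \mathrm{sgn}(\vec{w})\,\chi_{\lambda}(\vec{w}),
\end{equation}
where $\mathrm{sgn}(\vec{w})=(-1)^{\sum_{k}(k-1)w_{k}}$ is the sign of any permutation in the conjugacy class $\vec{w}$. I would cite this from any standard reference on symmetric group characters (e.g.\ \cite{Sagan,Fulton}) rather than re-derive it.

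Once this identity is in hand, the remainder of the argument is purely mechanical. First I would square both sides to obtain
\begin{equation}
\chi_{\lambda^T}(\vec{w})^{2} \;=\; \chi_{\lambda}(\vec{w})^{2}
\end{equation}
for every conjugacy class $\vec{w}\in p(n)$, since $\mathrm{sgn}(\vec{w})^{2}=1$. Next I would substitute this equality into the explicit formula (\ref{p^j_l_01}) for the entanglement spectrum,
\begin{equation}
p_{l}^{(j)}(\lambda)\;=\;\sum_{\vec{w}\in p(n),\,w_{j}=l}\chi_{\lambda}(\vec{w})^{2}\,\Bigl(\prod_{k\neq j}\frac{1}{k^{w_{k}}w_{k}!}\Bigr)\frac{1}{j^{l}l!},
\end{equation}
and observe that replacing $\lambda$ by $\lambda^T$ in every summand leaves the sum invariant. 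Hence $p_{l}^{(j)}(\lambda^T)=p_{l}^{(j)}(\lambda)$ for every mode $j$ and every occupation $l$. Since both density matrices $\hat{\rho}_{j}(\lambda)$ and $\hat{\rho}_{j}(\lambda^T)$ are diagonal in the normalized occupation-number basis $\lvert w_{j}\rangle_{j}$ with these same eigenvalues, their spectra coincide, and the entanglement entropies $s_{j}(\lambda)=s_{j}(\lambda^T)$ follow immediately from the definition.

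There is essentially no obstacle to overcome; the only potential subtlety is making sure the sign-twist identity is invoked cleanly and that the squaring step is what decouples the final statement from any dependence on the ordering of rows versus columns. If one wished to make the proof self-contained, the mildly nontrivial point to justify would be the sign-twist identity itself, which follows from the fact that the Specht module for $\lambda^T$ is isomorphic to the tensor product of the Specht module for $\lambda$ with the one-dimensional sign representation; this is standard and I would simply quote it.
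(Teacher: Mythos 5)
Your proposal is correct and follows essentially the same route as the paper's own proof: both invoke the sign-twist identity $\chi_{\lambda^T}(\vec{w})=\mathrm{sgn}(\vec{w})\chi_{\lambda}(\vec{w})$ (the paper derives it from the Specht-module isomorphism $V^{\lambda}\cong V^{(1^n)}\otimes V^{\lambda^T}$), square it, and substitute into the formula for $p_{l}^{(j)}$. No gaps; the argument is complete as stated.
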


\begin{proof}
	The Specht modules $V^{\lambda}$ of symmetric group have the following property
	\begin{equation}
	V^{\lambda} \cong  V^{(1^n)}\otimes V^{\lambda^T},
	\end{equation}
	where $V^{(1^n)}$ corresponds to the sign representation. This gives us a formula for the character
	\begin{equation}
	\chi_{\lambda}(\vec{w}) = sgn(\vec{w})\chi_{\lambda^T}(\vec{w}).
	\end{equation}
	Since $sgn(\vec{w}) = \pm 1$, this means that the square of the characters for two tableaux conjugate to each other is the same,
	\begin{equation}
	(\chi_{\lambda}(\vec{w}))^2 = (\chi_{\lambda^T}(\vec{w}))^2.
	\end{equation}
	Using the expression for probability distribution
	\begin{equation}
	p^{(j)}_l(\lambda) = \sum_{\substack{w_1,\dots ,w_{j-1},w_{j+1},\dots ,w_n;\\w_1 + \dots +lj + \dots +nw_n= n}}(\chi_{\lambda}(\vec{w}))^2(\prod_{k \neq j }\frac{1}{k^{w_k}w_k!})(\frac{1}{j^{l}l!}),
	\end{equation}
	it follows straightforwardly that
	\begin{equation}
	p^{(j)}_l(\lambda) = p^{(j)}_l(\lambda^T).
	\end{equation}
	This also proves that they have the same entanglement entropy $s_j(\lambda) = s_j(\lambda^T)$.
\end{proof}

\vspace{1pt}

In passing, we mention that the coherent state (\ref{coh_01}) can be written
as
\begin{equation}
\rvert Coh\rangle =\bigotimes_{s=1}^{\infty }\exp (\Lambda _{s}\frac{%
a_{s}^{\dagger }}{s})\rvert 0\rangle _{s}.  \label{coh_03}
\end{equation}%
It's a general fact that a tensor product state has no entanglement between
modes. Therefore the coherent state here, which can be viewed as a momentum
space version of the coherent state, is a special case of this general
result and has zero entropy for each mode. Note\ that there are also other
types of coherent states \cite{Balasubramanian:2007zt} in real space, which
are different from the ones we consider here.

\subsection{Single row and single column states}

\label{sec_Single row and single column states}

Let us now look at simplest Young tableaux, those with a single row or a
single column. Then we will move on to discuss Young tableaux with more
complicated shapes. The state $\left\vert \lambda \right\rangle $ whose
Young tableau is a single row with length $n$, is called a single row state,
denoted as $\left\vert \Delta \right\rangle _{n}$.\ This correspond to the
totally symmetric representation with $\chi _{\lambda }(\vec{w})=1$. From
the expression (\ref{p^j_l_01}) above, and using some combinatoric
techniques, the probability for occupation number $l$ in mode $j$ is
\begin{equation}
p_{l}^{(j)}(n)=\frac{1}{l!j^{l}}\sum_{k=0}^{\lfloor \frac{n-jl}{j}\rfloor }%
\frac{(-1)^{k}}{k!j^{k}}.
\end{equation}%
The entanglement entropy is \vspace{1pt}%
\begin{equation}
s_{j}(n)=\sum_{l=0}^{n}-p_{l}^{(j)}(n)\log p_{l}^{(j)}(n).
\end{equation}%
Using Incomplete Gamma function $\Gamma (s,x):=\int_{x}^{\infty }t^{s-1}e^{-t}%
\mathrm{d}t$, the probability can be written as
\begin{equation}
p_{l}^{(j)}(n)=\frac{\Gamma (1+\lfloor \frac{n-jl}{j}\rfloor ,\frac{-1}{j})}{%
e^{1/j}j^{l}\Gamma (1+l)\Gamma (1+\lfloor \frac{n-jl}{j}\rfloor )}.
\end{equation}

The expression can be simplified in large $n$ limit, and we always assume a
large $n$ limit in the rest of this section. For large $n$, the expression
for the probability distribution can be simplified as%
\begin{equation}
p_{l}^{(j)}=\frac{1}{l!j^{l}}e^{-\frac{1}{j}},  \label{p^j_l_03}
\end{equation}
where we used $p_{l}^{(j)}$ to represent the probability for large $n$. This
is a Poisson distribution with Fisher information being $j$. Consider $%
\lim_{n\rightarrow \infty }s_{j}(n)=s_{j}$,
\begin{equation}
s_{j}=\frac{1}{j}+\sum_{l=0}^{\infty }e^{-1/j}\frac{\log (l!j^{l})}{l!j^{l}}.
\label{s_j_single_row_01}
\end{equation}

Similarly, the state $\left\vert \lambda \right\rangle $ whose Young tableau
is a single column with length $n$, is called a single column state, denoted
as $\left\vert \bigtriangledown \right\rangle _{n}$. This representation has
$\chi _{\lambda }(\vec{w})=sgn(\vec{w})$. This state has the same
entanglement entropy (\ref{s_j_single_row_01}) as $\left\vert \Delta
\right\rangle _{n}$ by the theorem \ref{thm_tableaux}. This situation is a
special case of theorem \ref{thm_tableaux}.

The entanglement entropy $s_{j}$ of single row states, or single column
states, is the von Neumann entropy of Poisson distribution. This has wide
appearances in the theory of information processing, see for example \cite%
{Poisson}.

We then calculate the Renyi entropy. The result can be expressed by
hypergeometric functions. From the probabilities $p_{l}^{(j)}$, the $q$-th
Renyi entropy has the expression
\begin{equation}
s_{j}^{(q)}=\frac{1}{1-q}\log (\sum_{l}(p_{l}^{(j)})^{q}).
\end{equation}%
Inserting Eq. (\ref{s_j_single_row_01}) into the above expression for Renyi
entropy
\begin{equation}
s_{j}^{(q)}=\frac{1}{1-q}\log (\sum_{l=0}^{\infty }\frac{1}{(l!)^{q}}(\frac{1%
}{j^{q}})^{l}e^{-\frac{q}{j}}).
\end{equation}%
In the above expression, mainly we need to evaluate $\sum_{l=0}^{\infty }%
\frac{1}{(l!)^{q}}x^{l}$. Using the definition of hypergeometric functions
\begin{equation*}
F\left[
\begin{matrix}
a_{1} & a_{2}\dots a_{p} \\
b_{1} & b_{2}\dots b_{s}%
\end{matrix}%
;z\right] =\sum_{l=0}^{\infty }\frac{(a_{1})_{l}(a_{2})_{l}\cdots (a_{p})_{l}%
}{(b_{1})_{l}(b_{2})_{l}\cdots (b_{s})_{l}}\frac{z^{l}}{l!}
\end{equation*}%
where we used expression $(a)_{l}=%
\begin{cases}
1 & l=0 \\
a(a+1)\cdots (a+l-1) & l>0%
\end{cases}%
$, then we have that
\begin{equation*}
\sum_{l=0}^{\infty }\frac{1}{(l!)^{q}}x^{l}=F[\underbrace{%
\begin{matrix}
& 0 &  \\
1 & \dots & 1%
\end{matrix}%
}_{q-1};x].
\end{equation*}%
Inserting this back to the expression of Renyi entropy, we get
\begin{equation}
s_{j}^{(q)}=\frac{1}{1-q}\log (F[\underbrace{%
\begin{matrix}
& 0 &  \\
1 & \dots & 1%
\end{matrix}%
}_{q-1};(\frac{1}{j})^{q}]e^{-\frac{q}{j}}),
\end{equation}%
where we have used the hypergeometric functions. For the second Renyi
entropy, we can express the entropy in terms of the modified Bessel function
of the first kind $I_{\alpha }(x)$ (or hypergeometric function $_{2}F_{1}$).
The second Renyi entropy is
\begin{equation}
s_{j}^{(2)}=-\log (I_{0}(2/j)e^{-\frac{2}{j}}).
\end{equation}

Writing $q=1+\epsilon $, then the formula can be written as
\begin{equation}
s_{j}^{(1+\epsilon )}=-\frac{1}{\epsilon }\log (\sum_{l=0}^{\infty }\frac{1}{%
(l!)}(\frac{1}{j})^{l}e^{-\frac{1}{j}}(1+\epsilon \log ({(\frac{1}{l!})}(%
\frac{1}{j})^{l}e^{-\frac{1}{j}})+O(\epsilon ^{2}))).
\end{equation}%
And taking $q\rightarrow 1$ is the same as taking $\epsilon \rightarrow 0$,
\begin{equation}
\lim_{\epsilon \rightarrow 0}s_{j}^{(1+\epsilon )}=\frac{1}{j}+\sum_{l=0}%
\frac{\log (l!j^{l})}{l!j^{l}}e^{-\frac{1}{j}}=s_{j},
\end{equation}%
which is our previous formula Eq. (\ref{s_j_single_row_01}) for von Neumann
entropy.

\subsection{General tableau states}

\label{sec_General tableau states}

Now let us consider more general Young tableaux. For any Young tableau $%
\lambda $,%
\begin{equation}
\langle {\lambda }\rvert (a_{j}^{\dagger }a_{j})^{k}\left\vert \lambda
\right\rangle =\mathrm{Tr}(\sum_{l=0}^{\infty }p_{l}^{(j)}(a_{j}^{\dagger
}a_{j})^{k}\lvert {l}\rangle _{j}\langle {l}\rvert
_{j})=j^{k}\sum_{l=0}^{\infty }l^{k}p_{l}^{(j)}(\lambda ),
\end{equation}%
where $\lvert {l}\rangle _{j}$ is the normalized state of $t_{j}^{l}$, so it
is $\lvert {l}\rangle _{j}=\frac{1}{\sqrt{j^{l}l!}}t_{j}^{l}$, and $%
a_{j}\lvert {l}\rangle _{j}=\frac{jl}{\sqrt{j^{l}l!}}t_{j}^{l-1}$. In the
above we have used that%
\begin{equation}
(a_{j}^{\dagger }a_{j})\lvert {l}\rangle _{j}=jl\frac{1}{\sqrt{j^{l}l!}}%
t_{j}^{l}=jl\lvert {l}\rangle _{j},
\end{equation}%
and
\begin{equation}
(a_{j}^{\dagger }a_{j})^{k}\lvert {l}\rangle _{j}=(jl)^{k}\lvert {l}\rangle
_{j}.
\end{equation}%
On the other hand, the Hamiltonian for mode $j$ is $H_{j}=a_{j}^{\dagger
}a_{j}$, therefore the above formula can be written as
\begin{equation}
\langle {\lambda }\rvert (\frac{H_{j}}{j})^{k}\left\vert \lambda
\right\rangle =\sum_{l=0}^{\infty }l^{k}p_{l}^{(j)}(\lambda ).
\end{equation}%
The above formula can be regarded as calculating moments of the probability
distribution. The higher order moment we know, the more information we know
about the probability distribution. The first order moment is the average
particle number for mode $j$.

Then we can consider the generating function of the above series,
\begin{align}
\langle {\lambda }\rvert \exp (iH_{j}t)\left\vert \lambda \right\rangle &
=\sum_{k=0}^{\infty }\frac{(it)^{k}}{k!}\langle {\lambda }\rvert
(H_{j})^{k}\left\vert \lambda \right\rangle  \notag \\
& =\sum_{l=0}^{\infty }e^{ijt\times l}p_{l}^{(j)}(\lambda ).
\end{align}%
The function on the right hand side is the characteristic function in the
context of probability theory. We denote the generating function $Z_{\lambda
,j}(t)=\langle {\lambda }\rvert \exp (iH_{j}t)\left\vert \lambda
\right\rangle $. Using the explicit expression for Young tableau state in
terms of the character $\chi _{\lambda }$, we can also give an expression
for $Z_{\lambda ,j}(t)$:
\begin{equation}
Z_{\lambda ,j}(t)=\sum_{\vec{w}\in p(n)}(\chi _{\lambda }(\vec{w}%
))^{2}e^{ijw_{j}t}\prod_{k}\frac{1}{k^{w_{k}}w_{k}!}.
\label{generating_character}
\end{equation}%
Then the probability can be calculated to be
\begin{equation}
p_{l}^{(j)}(\lambda )=j\int_{0}^{\frac{2\pi }{j}}\mathrm{d}tZ_{\lambda
,j}(t)e^{-ijlt}.  \label{generating_probability}
\end{equation}%
Combine the above two formulas Eq. (\ref{generating_character}), (\ref%
{generating_probability}), we can derive our previous formula for the
probability distribution Eq. (\ref{p^j_l_01}). This new formula (\ref%
{generating_probability}) provides us an alternative way to calculate
entanglement entropy.

Now we analyze the entanglement entropy of a general tableau state. A
precise formula is not known unless for some simple cases like what we have
discussed in Section \ref{sec_Single row and single column states}. However,
an analysis of it for the tableaux with all the edges long is feasible and
will reveal some connection with the geometric properties of the gravity
dual.

We have that $\hat{N}_{j}=\frac{1}{j}a_{j}^{\dagger }a_{j}~$is the particle
number operator. The $1/j$ factor in the expression of the particle number
operator, is due to the convention of normalization of the creation and
annihilation operators $a_{j}^{\dagger }$ and $a_{j}$ as in (\ref%
{commutation_relation_01}). On the gravity side, this particle number
corresponds to the graviton number, for gravitons with momentum mode $j$.
The operator $\frac{1}{\sqrt{j^{l}l!}}(a_{j}^{\dagger })^{l}$ creates
excitations whose gravity dual interpretations are $l$ KK gravitons each
with momentum $j$, moving along the circular direction on the black and
white plane. The reduced density matrix can also be understood as the
density matrix for this reduced system of gravitons, and the entropy $%
s_{j}(\lambda )$ can be understood as the von Neumann entropy of this
subsystem.

Since we have the expectation value of the particle number operator
\begin{equation}
\langle \hat{N}_{j}\rangle _{\lambda }=\frac{1}{j}\mathrm{Tr}(\hat{\rho}%
_{j}a_{j}^{\dagger }a_{j}),
\end{equation}%
we can see that%
\begin{eqnarray}
\langle \hat{N}_{j}\rangle _{\lambda } &=&\frac{1}{j}\mathrm{Tr}%
(\sum_{l=0}^{\infty }p_{l}^{(j)}a_{j}^{\dagger }a_{j}\lvert {l}\rangle
_{j}\langle {l}\rvert _{j})  \notag \\
&=&\frac{1}{j}\sum_{l=0}^{\infty }p_{l}^{(j)}\langle {l}\rvert
_{j}a_{j}^{\dagger }a_{j}\lvert {l}\rangle _{j}=\sum_{l=0}^{\infty
}lp_{l}^{(j)}(\lambda ).
\end{eqnarray}

We look at every possible probability distribution, with constraints $%
\sum_{l}x_{l}^{(j)}=1$, and $\sum_{l}lx_{l}^{(j)}=\langle \hat{N}_{j}\rangle
_{\lambda }$. Then our probability distribution $p_{l}^{(j)}$ is one of
them. And we find the probability distribution with largest entropy, then
the entropy that correspond to $p_{l}^{(j)}$ must be smaller than the upper
bound. We use a variational method and consider
\begin{equation}
G=-\sum_{l}x_{l}^{(j)}\log (x_{l}^{(j)})+\alpha
(\sum_{l}x_{l}^{(j)}-1)+\beta (\sum_{l}lx_{l}^{(j)}-\langle \hat{N}%
_{j}\rangle _{\lambda }),
\end{equation}%
where $\alpha ,\beta $ are Lagrangian multipliers. Taking derivatives with
respect to $x_{l}^{(j)},\alpha ,\beta $, we get
\begin{align}
-\log (x_{l}^{(j)})-1+\alpha +\beta l& =0,  \notag \\
\sum_{l}x_{l}^{(j)}-1& =0,  \notag \\
\sum_{l}lx_{l}^{(j)}-\langle \hat{N}_{j}\rangle _{\lambda }& =0.
\end{align}%
Solving these equations, and inserting them into the entropy formula $%
s=-\sum_{l}x_{l}^{(j)}\log (x_{l}^{(j)})$, gives exactly the result
\begin{equation}
s_{\max }=(\langle \hat{N}_{j}\rangle _{\lambda }+1)\log (\langle \hat{N}%
_{j}\rangle _{\lambda }+1)-\langle \hat{N}_{j}\rangle _{\lambda }\log
(\langle \hat{N}_{j}\rangle _{\lambda }).  \label{s_j_02}
\end{equation}%
The above formula holds for general $\lambda $ and $j$. The above is the
upper bound for the entropy $s_{j}(\lambda )$, and since thermal
distribution maximizes the entropy, $s_{\max }$ can actually be viewed as
the entropy for the thermal distribution. There is also an analogy to
temperature that we will later provide.

Now we consider Young tableaux with all the edges long, and as special
examples, the rectangular Young tableaux. Recall that, we denote the length
of each horizontal edge to be $L_{i},$ and the length of each vertical edge
to be $M_{i}$. See Figure \ref{Figure_young_general01}. We can define $%
L=\sum_{i}L_{i}$ and $M=\sum_{i}M_{i}$. Incidentally, a rectangular Young
tableau with $L$ rows and $M$ columns is a special example of them.

\begin{figure}[h]
\centering
\vspace{3pt} \includegraphics[width=0.75\textwidth]{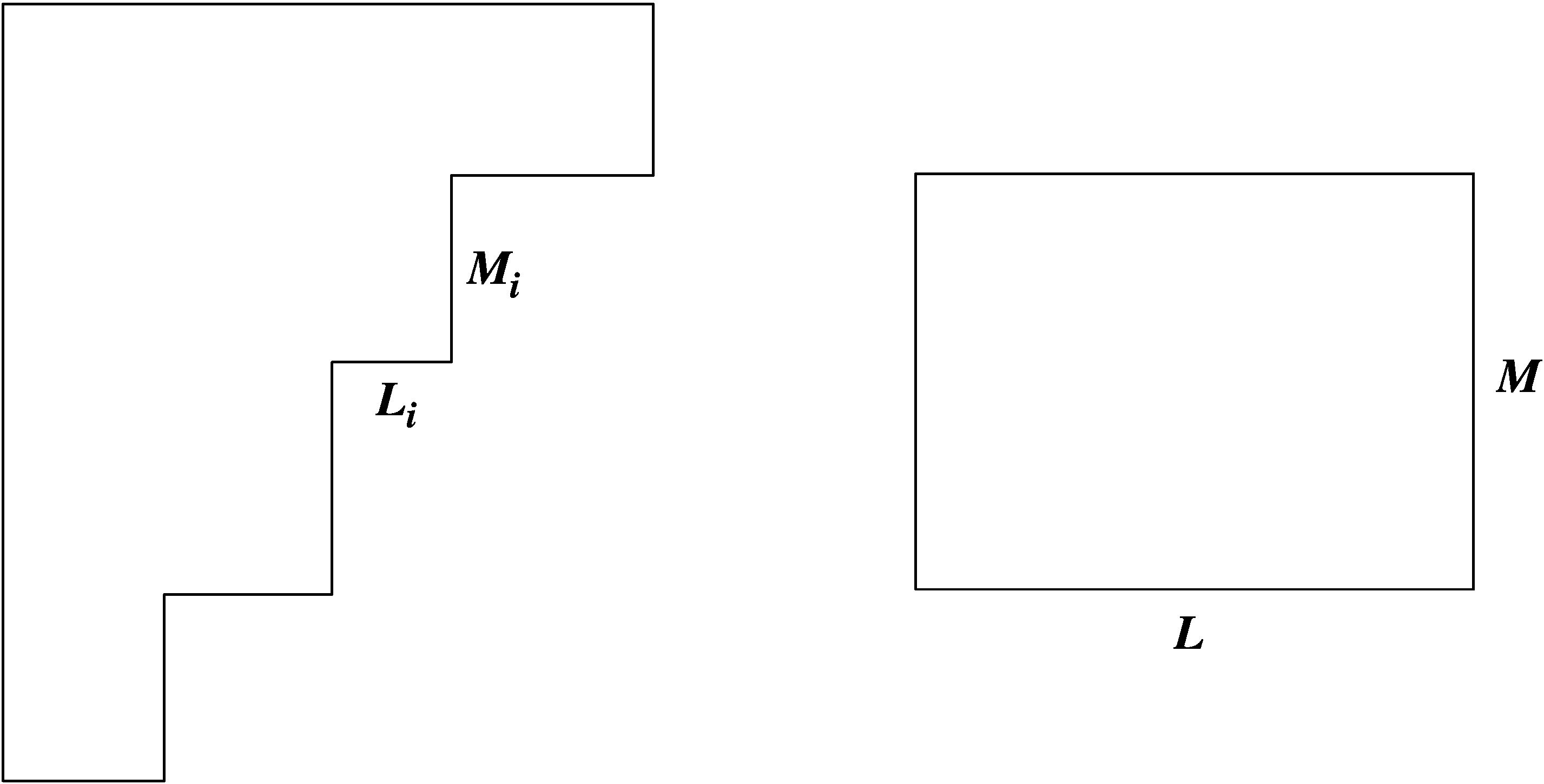}
\vspace{1pt}
\caption{On the left is an example of a general Young tableau with long
vertical edges and long horizontal edges. The $L_i$ and $M_i$ denote the
lengths of each edge. On the right is the rectangular Young tableau with $M$
rows and $L$ columns. The operator labeled by this Young tableau on the
field theory side is dual to the state described in Figure \protect\ref%
{Figure_droplet}(a).}
\label{Figure_young_general01}
\end{figure}

We define a number $n_{\text{cor}}$ to be the number of corners in the lower
right part of the Young tableau. For a Young tableau with all the edges
long, that corresponds to a multi-edge geometry, we have
\begin{equation}
n_{\text{cor}}=n_{\text{anti-edge}},
\end{equation}%
where ${n}_{\mathrm{anti-edge}}$ is the number of the inner edges of the
black annuli in the multi-edge geometry. This is because for such a Young
tableau, it can be mapped to a multi-edge geometry. The number of horizontal
edges is equal to the number of inner edges of black droplets in the
multi-edge geometry, which gives the above formula.

Now we would like to compute quantities like $\langle \hat{N}_{j}\rangle
_{\lambda }$, or more generally $\langle a_{j}^{\dagger k}a_{j}^{k}\rangle
_{\lambda }$. The main technique for the calculation is Murnaghan-Nakayama
rule, which is also presented in \cite{Berenstein:2017abm}. We will
calculate $a_{j}\rvert \lambda \rangle $ or $a_{j}^{k}\rvert \lambda \rangle
$ and then take their norm.

There is a general pattern that when we act a $a_{j}$ on a Young tableau $%
\lambda $, what we get is a sum with coefficient $\pm 1$ of all possible
Young tableau obtained from $\lambda $ by removing, from the lower right
part of $\lambda $, a connected strip-shaped boxes schematically like $%
\ytableausetup{smalltableaux}\ydiagram{2+1,2+1,2+1,3}$ or $%
\ydiagram{5+1,5+1,2+4,2+1,3}$. These are skew Young tableaux.

\vspace{5pt}

We first consider $a_j\rvert \lambda \rangle$ for a multi-edge geometry,
with $j$ small. By $j$ small we mean that%
\begin{equation}
j\leq \min_{i}\{L_{i},M_{i}\},
\end{equation}
where $\min_{i}\{L_{i},M_{i}\}$ denotes the minimum of the set of numbers $%
L_{i},M_{i}$ for all the $i$. In this situation, the possible removing
choice is to remove from every corner a strip of boxes of length $j$ like

\begin{tikzpicture}[xscale=0.9]
\draw   (2,2) -- (2,0.1)--(1,0.1)--(1,0)--(0,0);
\draw [dashed] (1,0)--(2,0)--(2,0.1);
\draw [decorate,decoration={brace,amplitude=5pt,mirror},xshift=0pt,yshift=-3pt]
(1,0) -- (2,0) node [black,midway,yshift=-0.4cm] {\footnotesize $j$};
\node at (2.5,1) {$\pm$};
\draw (5,2) -- (5,0.2)--(4.9,0.2)--(4.9,0.1)--(4.1,0.1)--(4.1,0)--(3,0);
\draw [dashed] (4.1,0)--(5,0)--(5,0.2);
\draw [decorate,decoration={brace,amplitude=5pt,mirror},xshift=0pt,yshift=-3pt]
(4.1,0) -- (5,0) node [black,midway,yshift=-0.4cm] {\footnotesize $j-1$};
\node at (5.5,1) {$\pm$};
\draw  (8,2) -- (8,0.3)--(7.9,0.3)--(7.9,0.1)--(7.2,0.1)--(7.2,0)--(6,0);
\draw [dashed] (7.2,0)--(8,0)--(8,0.3);
\draw [decorate,decoration={brace,amplitude=5pt,mirror},xshift=0pt,yshift=-3pt]
(7.2,0) -- (8,0) node [black,midway,yshift=-0.4cm] {\footnotesize $j-2$};
\node at (8.5,1) {$\pm$};
\node at (9,1) {$\dots$};
\node at (9.5,1) {$\pm$};
\draw (12,2) -- (12,1)--(11.9,1)--(11.9,0)--(10,0);
\draw [dashed] (12,1)--(12,0)--(11.9,0);
\draw [decorate,decoration={brace,amplitude=5pt,mirror},xshift=3pt,yshift=0pt]
(12,0) -- (12,1) node [black,midway,xshift=0.4cm] {\footnotesize $j$};
\end{tikzpicture}

There are signs in the above summation. On the one hand, the sign depend on
the strip shaped boxes that we removed off. The sign is just $(-1)^{h(T)+1}$%
, where $h(T)$ is the height of the strip-shaped boxes. For example $%
\ytableausetup{smalltableaux}\ydiagram{2+1,2+1,2+1,3}$ has $h(T)=4$, and $%
\ydiagram{4+1,4+1,1+4,1+1}$ also has $h(T)=4$. On the other hand, the sign
does not play a role in the final result we want to compute. Because we use
this formula to compute $\frac{1}{j}\langle \lambda \lvert a_{j}^{\dagger
}a_{j}\rvert \lambda \rangle $. And we compute it by computing the norm of $%
a_{j}\rvert \lambda \rangle $. We used Young tableau basis to expand $%
a_{j}\rvert \lambda \rangle =\sum_{\mu }c_{\mu }\rvert \mu \rangle $ as
above. We see in the above formula that $c_{\mu }=1,-1$ or $0$. And since $%
\langle \lambda \lvert a_{j}^{\dagger }a_{j}\rvert \lambda \rangle
=\sum_{\mu }|c_{\mu }|^{2} $, therefore we only need to count the number of
times that $c_{\mu }=\pm 1$.

There are totally $n_{\text{cor}}=n_{\text{anti-edge}}$ such corners and
ever corner contribute $j$ terms, therefore we compute and obtain that%
\begin{equation}
\langle \hat{N}_{j}\rangle _{\lambda }=\frac{1}{j}\mathrm{Tr}(\hat{\rho}%
_{j}a_{j}^{\dagger }a_{j})=\frac{1}{j}\langle {\lambda }\rvert
a_{j}^{\dagger }a_{j}\lvert \lambda \rangle =\frac{1}{j}(jn_{\text{anti-edge}%
})=n_{\text{anti-edge}}~.  \label{N_j_02}
\end{equation}%
When plugging Eq. (\ref{N_j_02}) into Eq. (\ref{s_j_02}), the formula
becomes the same as that of \cite{Berenstein:2017abm}. The derivation here
is alternative to, but in the same spirit as, that in \cite%
{Berenstein:2017abm}, where they used the method of Bogoliubov
transformations. This is also a statement relating the field theory side
with the gravity side. For concentric ring configurations in the bubbling
geometries, the number of black annuli is $n_{\text{\textrm{black-annulus}}%
}=n_{\text{\textrm{anti-edge}}},$ and hence
\begin{equation}
n_{\text{\textrm{black-annulus}}}=\langle \hat{N}_{j}\rangle _{\lambda }.
\label{annulus_01}
\end{equation}%
Since Eq. (\ref{annulus_01}) measures the number of annuli in the geometry,
the field theory quantity tells the information of the topology of the
geometry on the gravity side.

We can also use the definition Eq. (\ref{Def_Young}) of Young tableau state
to calculate $\langle \hat{N}_{j}\rangle _{\lambda }$ and express $\langle
\hat{N}_{j}\rangle _{\lambda }$ in terms of the character corresponding to $%
\lambda $,
\begin{equation}
\sum_{\vec{w}\in p(n)}\frac{\chi _{\lambda }(\vec{w})^{2}w_{j}}{%
\prod_{k}k^{w_{k}}w_{k}!}=n_{\text{anti-edge}},  \label{character_03}
\end{equation}%
with $j\leq \min \{L_{i},M_{i}\}$. We can also derive this equation from the
expression of $p_{l}^{(j)}(\lambda )$ and Eq. (\ref{N_j_02}). The above
formula can be regarded as an extension of the orthogonality relations of
character (Eq. (\ref{character_2})). This formula (\ref{character_03}) also
relates the characters of the representations of the symmetric groups \cite%
{Sagan,James Kerber} to the geometric properties of the bubbling geometries.

The above method can be used to get more relations for the characters of
symmetric group. Consider evaluating the quantity $\langle a_{j}^{\dagger
l}a_{j}^{l}\rangle _{\lambda }$ through calculating $a_{j}^{l}\rvert \lambda
\rangle $ using similar method as above. And we assume that $%
jl<\min_{i}\{L_{i},M_{i}\}$. In this case, when we remove boxes from the
Young tableau, we independently remove boxes from each corner. This is
granted by the condition that $lj<\min_{i}\{L_{i},M_{i}\}$, therefore we
will not reach the boundary of each corner when removing boxes.

As noted in \cite{Berenstein:2017abm} that, removing boxes from the corner
obeys the same rule as creating boxes from the vacuum whenever we do not
exceed the corner. Therefore, we will have
\begin{equation}
\langle a_{j}^{\dagger l}a_{j}^{l}\rangle _{\lambda }=l!j^{l}n_{\text{%
anti-edge}}^{l}.
\end{equation}

The above result gives us an interesting identity involving the character of
symmetric group:
\begin{equation}
\sum_{\vec{w}\in p(n)}\frac{\chi _{\lambda }(\vec{w})^{2}}{%
\prod_{k}k^{w_{k}}w_{k}!}\binom{w_{j}}{l}=n_{\text{anti-edge}}^{l},
\label{character_04}
\end{equation}%
which is a generalization of Eq. (\ref{character_03}).

We then analyze the case for arbitrary $\lambda $ and $j$. Fix $\lambda $
and let $j$ becomes larger, when removing a strip of boxes of length $j$,
there will be the situation that some are not allowed, due to all possible
allowed shapes of a Young tableau. And there will also be the situation that
a strip of boxes will reach two corners or more. Therefore we see that when $%
j>\min_{i}\{L_{i},M_{i}\}$, as $j$ becomes larger, $\langle \hat{N}%
_{j}\rangle _{\lambda }$ will decrease. Finally it becomes zero when $j$
exceeds the total length of every edge, that is
\begin{equation}
\langle \hat{N}_{j}\rangle _{\lambda }=0,\ \ \ \ \
j>\sum_{i}L_{i}+\sum_{i}M_{i}~.
\end{equation}

Using the above relation (\ref{N_j_02}) between $\langle \hat{N}_{j}\rangle
_{\lambda }$ and $n_{\text{anti-edge}}$, we can relate the entanglement
entropy with these quantities. Specifically, for a geometry with large radii
of curvature, $L_{i},M_{i}$ are of order $N$, and hence they are large at
large $N$, and we have the following%
\begin{equation}
s_{j}=s_{\max }.
\end{equation}

There are interesting results for large tableau. Consider $\lambda $
corresponding to a multi-edge geometry and the size of $\lambda $ very
large, for example $\lambda =\Box _{LM}$ with $L,M$ very large. Note that
the maximal entropy $s_{\text{max}}$ is obtained at $x_{l}^{(j)}=e^{\alpha
-1+\beta l}$. And
\begin{equation}
e^{\alpha -1}=\frac{1}{1+\langle \hat{N}_{j}\rangle _{\lambda }},\quad
e^{\beta }=\frac{\langle \hat{N}_{j}\rangle _{\lambda }}{1+\langle \hat{N}%
_{j}\rangle _{\lambda }},
\end{equation}%
which can be written as $p_{l}^{(j)}=x_{l}^{(j)}=(1-x)x^{l}$, with $x=\frac{%
n_{\text{anti-edge}}}{1+n_{\text{anti-edge}}}$ where we have used $\langle
\hat{N}_{j}\rangle _{\lambda }=n_{\text{anti-edge}}$ from our computation (%
\ref{N_j_02}). Since in this case the subsystem reaches the thermal
distribution, ${\tilde{\beta}}=-\beta $ can be viewed as an analog of
inverse temperature for this reduced subsystem. Then by using Eq. (\ref%
{generating_probability}), it follows that
\begin{equation}
\lim_{\text{size of }\lambda \rightarrow \infty }Z_{\lambda ,j}(t)=\frac{1}{%
1+{n}_{\mathrm{anti-edge}}(1-e^{ijt})}.  \label{Z_lambda_j}
\end{equation}%
Here ${n}_{\mathrm{anti-edge}}$ is the number of the inner edges of the
black annuli in the multi-edge geometry, and is equivalent to the number of
long horizontal edges of the Young tableau (see Figure \ref%
{Figure_Young_general_droplet}).

To summarize, $s_{j}=s_{\max }$, in the large $N$ and large $L_{i},M_{i}~$%
limit. At the same time, the large $N$ limit is the limit in which the
gravity dual has large radii of curvature. Away from that, $s_{j}$ is
smaller than $s_{\max },$ and the deviation is in terms of $1/L_{i}$ and $%
1/M_{i}$ corrections. For a Young tableau corresponding to a multi-edge
geometry, and for $j\leq \min_{i}\{L_{i},M_{i}\}$, $\langle \hat{N}%
_{j}\rangle _{\lambda }=n_{\text{anti-edge}}$, as described above. When $%
j>\min_{i}\{L_{i},M_{i}\}~$and as $j$ increases, $\langle \hat{N}_{j}\rangle
_{\lambda }$ will decrease and vanish when $j$ exceeds the total length of
every edge. And $\langle \hat{N}_{j}\rangle _{\lambda }$ is always bounded
by $n_{\text{cor}}$.

\subsection{Rectangular tableau states}

Let us consider rectangular Young tableaux. We can denote a rectangular
Young tableau with $M$ rows and $L$ columns (see Figure \ref%
{Figure_young_general01}) as ${\Box _{LM}}$. The state corresponding to this
rectangular Young tableau is denoted $\lvert {\Box _{LM}}\rangle $. The
gravity dual of this state has the configuration of a black annulus and a
black disk on the black and white plane of the bubbling geometry (see Figure %
\ref{Figure_droplet} (a)). The main step is to calculate $\langle {\Box _{LM}%
}\rvert a_{j}^{\dagger }a_{j}\lvert {\Box _{LM}}\rangle $. The rectangular
tableau ${\Box _{LM}}$ is a special case of a general Young tableau
discussed in Section \ref{sec_General tableau states}. On the one hand
\begin{equation}
\langle {\Box _{LM}}\rvert a_{j}^{\dagger }a_{j}\lvert {\Box _{LM}}\rangle
=\ \mathrm{Tr}\ (\hat{\rho}_{j}a_{j}^{\dagger }a_{j})=\sum_{l=0}^{\infty
}p_{l}^{(j)}\langle {l}\rvert _{j}a_{j}^{\dagger }a_{j}\lvert {l}\rangle
_{j}=j\sum_{l=0}^{\infty }lp_{l}^{(j)}.
\end{equation}%
The above formula can be generalized to
\begin{equation}
\langle {\Box _{LM}}\rvert (a_{j}^{\dagger }a_{j})^{k}\lvert {\Box _{LM}}%
\rangle =j^{k}\sum_{l=0}^{\infty }l^{k}p_{l}^{(j)}.
\end{equation}

We denote
\begin{equation}
y:=\langle \hat{N}_{j}\rangle _{{_{{\Box _{LM}}}}}=\frac{1}{j}\langle \Box
_{LM}\rvert a_{j}^{\dagger }a_{j}\lvert \Box _{LM}\rangle .
\end{equation}%
As derived in Section \ref{sec_General tableau states}, we have shown that
\begin{equation}
s_{max}=(y+1)\log (y+1)-y\log (y),
\end{equation}%
and $s_{j}=s_{max}$ in the large $L,M$ limit.

We then calculate $y$ directly through calculating $a_{j}\lvert \Box
_{LM}\rangle $. The calculation mainly use Murnaghan-Nakayama rule. In the
following calculation, we assume $L\leq M$. Calculation for $L\geq M$ is
slightly different, but the conclusion that as $j$ increases, $y$ does not
increase also hold, and the corresponding result is given in \cite%
{Berenstein:2017abm}.

First for $j\leq L$.

\begin{tikzpicture}[xscale=0.9]
\draw (-3.5,0) --(-3.5,3) -- (-1.5,3) -- (-1.5,0)--(-3.5,0);
\node[left] at (-3.5,1.5) {$a_j$};
\draw [decorate,decoration={brace,amplitude=10pt,mirror},xshift=4pt,yshift=0pt]
(-1.5,0) -- (-1.5,3) node [black,midway,xshift=0.6cm] {\footnotesize $M$};
\draw [decorate,decoration={brace,amplitude=10pt},xshift=0pt,yshift=3pt]
(-3.5,3) -- (-1.5,3) node [black,midway,yshift=0.6cm] {\footnotesize $L$};
\node at (-0.3,1.5) {$=$};
\draw (0,0) --(0,3) -- (2,3) -- (2,0.1)--(1,0.1)--(1,0)--(0,0);
\draw [dashed] (1,0)--(2,0)--(2,0.1);
\draw [decorate,decoration={brace,amplitude=5pt,mirror},xshift=0pt,yshift=-3pt]
(1,0) -- (2,0) node [black,midway,yshift=-0.4cm] {\footnotesize $j$};
\node at (2.5,1.5) {$-$};
\draw (3,0) --(3,3) -- (5,3) -- (5,0.2)--(4.9,0.2)--(4.9,0.1)--(4.1,0.1)--(4.1,0)--(3,0);
\draw [dashed] (4.1,0)--(5,0)--(5,0.2);
\draw [decorate,decoration={brace,amplitude=5pt,mirror},xshift=0pt,yshift=-3pt]
(4.1,0) -- (5,0) node [black,midway,yshift=-0.4cm] {\footnotesize $j-1$};
\node at (5.5,1.5) {$+$};
\draw (6,0) --(6,3) -- (8,3) -- (8,0.3)--(7.9,0.3)--(7.9,0.1)--(7.2,0.1)--(7.2,0)--(6,0);
\draw [dashed] (7.2,0)--(8,0)--(8,0.3);
\draw [decorate,decoration={brace,amplitude=5pt,mirror},xshift=0pt,yshift=-3pt]
(7.2,0) -- (8,0) node [black,midway,yshift=-0.4cm] {\footnotesize $j-2$};
\node at (8.5,1.5) {$-$};
\node at (9,1.5) {$\dots$};
\node at (9.5,1.5) {$\pm$};
\draw (10,0) --(10,3) -- (12,3) -- (12,1)--(11.9,1)--(11.9,0)--(10,0);
\draw [dashed] (12,1)--(12,0)--(11.9,0);
\draw [decorate,decoration={brace,amplitude=5pt,mirror},xshift=3pt,yshift=0pt]
(12,0) -- (12,1) node [black,midway,xshift=0.4cm] {\footnotesize $j$};
\end{tikzpicture}

Since the right hand side have totally $j$ terms, in which each term is a
Young tableau state and they are orthogonal and each have coefficient $\pm 1$%
, therefore we get
\begin{equation}
y=\frac{1}{j}\langle \Box _{LM}\rvert a_{j}^{\dagger }a_{j}\lvert \Box
_{LM}\rangle =\frac{1}{j}j=1,
\end{equation}%
which is $n_{\text{anti-edge}}$ in this case.

Then consider $L\leq j\leq M$.

\begin{tikzpicture}[xscale=0.9]
\draw (0,0) --(0,3) -- (2,3) -- (2,0)--(0,0);
\node[left] at (0,1.5) {$a_j$};
\node at (2.3,1.5) {$=$};
\node at (2.7,1.5) {$\pm$};
\draw (3,0.1) --(3,3) -- (5,3) -- (5,0.5)--(4.9,0.5)--(4.9,0.1)--(3,0.1);
\draw [dashed] (3,0.1)--(3,0)--(5,0)--(5,0.5);
\draw [decorate,decoration={brace,mirror},xshift=2pt,yshift=0pt]
(5,0) -- (5,0.5) node [black,midway,xshift=0.8cm] {\footnotesize $j-L+1$};
\node at (6,1.5) {$\pm$};
\draw (7,0) --(7,3) -- (9,3) -- (9,0.6)--(8.9,0.6)--(8.9,0.1)--(7.1,0.1)--(7.1,0)--(7,0);
\draw [dashed] (7.1,0)--(9,0)--(9,0.6);
\draw [decorate,decoration={brace,mirror},xshift=2pt,yshift=0pt]
(9,0) -- (9,0.6) node [black,midway,xshift=0.8cm] {\footnotesize $j-L+2$};
\node at (9.5,1.5) {$\pm$};
\node at (10,1.5) {$\dots$};
\node at (10.5,1.5) {$\pm$};
\draw (11,0) --(11,3) -- (13,3) -- (13,1)--(12.9,1)--(12.9,0)--(11,0);
\draw [dashed] (13,1)--(13,0)--(12.9,0);
\draw [decorate,decoration={brace,amplitude=5pt,mirror},xshift=3pt,yshift=0pt]
(13,0) -- (13,1) node [black,midway,xshift=0.4cm] {\footnotesize $j$};
\end{tikzpicture}

The right hand side has totally $L$ terms, therefore
\begin{equation}
y=\frac{1}{j}\langle \Box _{LM}\rvert a_{j}^{\dagger }a_{j}\lvert \Box
_{LM}\rangle =\frac{L}{j}.
\end{equation}%
It decrease as $j$ becomes large.

Then consider $M\leq j < L + M$, the calculation is as

\begin{tikzpicture}[xscale=0.9]
\draw (0,0) --(0,3) -- (2,3) -- (2,0)--(0,0);
\node[left] at (0,1.5) {$a_j$};
\node at (2.3,1.5) {$=$};
\node at (2.7,1.5) {$\pm$};
\draw (3,0) --(3,3) -- (4.9,3) -- (4.9,0.5)--(4.9,0.1)--(4,0.1)--(4,0)--(3,0);
\draw [dashed] (4.9,3)--(5,3)--(5,0)--(4,0);
\draw [decorate,decoration={brace,mirror},xshift=0pt,yshift=-2pt]
(4,0) -- (5,0) node [black,midway,yshift=-0.3cm] {\footnotesize $j-M+1$};
\node at (5.5,1.5) {$\pm$};
\draw (6,0) --(6,3) -- (8,3) -- (8,2.9)--(7.9,2.9)--(7.9,0.1)--(6.9,0.1)--(6.9,0)--(6,0);
\draw [dashed] (8,2.9)--(8,0)--(6.9,0);
\draw [decorate,decoration={brace,mirror},xshift=0pt,yshift=-2pt]
(6.9,0) -- (8,0) node [black,midway,yshift=-0.3cm] {\footnotesize $j-M+2$};
\node at (8.5,1.5) {$\pm$};
\draw (9,0) --(9,3) -- (11,3) -- (11,2.8)--(10.9,2.8)--(10.9,0.1)--(9.8,0.1)--(9.8,0)--(9,0);
\draw [dashed] (11,2.8)--(11,0)--(9.8,0);
\draw [decorate,decoration={brace,mirror},xshift=0pt,yshift=-2pt]
(9.8,0) -- (11,0) node [black,midway,yshift=-0.3cm] {\footnotesize $j-M+3$};
\node at (11.5,1.5) {$\dots$};
\node at (12,1.5) {$\pm$};
;\draw (12.5,0) --(12.5,3) -- (14.5,3) -- (14.5,2)--(14.4,2)--(14.4,0.1)--(12.5,0.1);
\draw [dashed] (12.5,0)--(14.5,0)--(14.5,2);
\draw [decorate,decoration={brace,mirror},xshift=2pt,yshift=0pt]
(14.5,0) -- (14.5,2) node [black,midway,xshift=0.8cm] {\footnotesize $j-L+1$};
\end{tikzpicture}

The right hand side has totally $L+M-j$ terms, therefore
\begin{equation}
y=\frac{1}{j}\langle \Box _{LM}\rvert a_{j}^{\dagger }a_{j}\lvert \Box
_{LM}\rangle =\frac{L+M}{j}-1.
\end{equation}

For $j\geq L+M$, then $a_{j}\lvert \Box _{LM}\rangle =0$, therefore $y=0$.

To summarize, we have:
\begin{equation}
y=%
\begin{cases}
1 & j\leq L \\
\frac{L}{j} & L\leq j\leq M \\
\frac{L+M}{j}-1 & M\leq j\leq L+M \\
0 & j\geq L+M%
\end{cases}%
.
\end{equation}%
This means that as $j$ increases, $y$ becomes smaller and decreases from $1$
to $0$. It is interesting that when $j$ exceeds $L+M$, the observable
quantity $\langle \hat{N}_{j}\rangle _{\Box _{LM}}$ becomes zero. For the
rectangular tableau states, $n_{\mathrm{anti-edge}}=1$. This result is also
true if we replace $\Box _{LM}$ by a Young tableau corresponding to a
multi-edge geometry. The proof is almost the same but the discussion will
become more complicated because there will be more situations to discuss.
For the rectangular tableau states, since $n_{\mathrm{anti-edge}}=1$, the
above formula gives $s_{j}=2\log 2$, at large $L,M$, for $j\leq \min \{L,M\}$%
.

\section{Coherent states and general Young tableau states}

\renewcommand{\theequation}{3.\arabic{equation}} \setcounter{equation}{0} %
\renewcommand{\theprop}{3.\arabic{prop}} \setcounter{prop}{0}

\label{sec_Coherent states and general Young tableau states}

As mentioned before, another interesting type of states are coherent states.
A general coherent state can be written as
\begin{align}
\left\vert Coh\right\rangle & =\prod_{k=1}^{\infty }\exp (\Lambda _{k}\frac{%
t_{k}}{k})=\prod_{k=1}^{\infty }(\sum_{l_{k}=0}^{\infty }\frac{1}{l_{k}!}%
(\Lambda _{k}\frac{t_{k}}{k})^{l_{k}})  \notag \\
& =\sum_{\vec{l}}\prod_{k=1}^{\infty }\frac{1}{l_{k}!}(\Lambda _{k}\frac{%
t_{k}}{k})^{l_{k}}
\end{align}%
where the last sum is over all $\vec{l}=(l_{1},l_{2},\dots ,)$. By using the
commutation relations, we see that%
\begin{equation}
a_{k}\left\vert Coh\right\rangle =\Lambda _{k}\left\vert Coh\right\rangle .
\end{equation}%
Hence $\Lambda _{k}$ are the eigenvalues of the $a_{k}$ operators, and by
definition it is a coherent state. Coherent states are very important in
quantum optics and quantum information theory \cite{Quantum information}. A
detailed review of coherent states and their physical and mathematical
implications is in, for example \cite{Zhang:1990fy}. The setup here provides
a new perspective for studying them, namely the gravity dual of coherent
states.

Generic coherent states above the vacuum correspond to the geometries with
the same topology as the vacuum geometry, on the gravity side. These
coherent states, in the dual description on the gravity side, correspond to
creating ripples or deformations \cite%
{Grant:2005qc,Mandal:2005wv,Skenderis:2007yb,Takayama:2005yq} on the vacuum
geometry, and thus do not change the topology of the vacuum geometry. See
Figure \ref{Figure_droplet} (b,c). While, for generic Young tableau states
with long edges, they correspond to geometries with a different topology
than the vacuum geometry. There are various non-contractible cycles that
were absent in the vacuum geometry. See Figure \ref{Figure_droplet} (a).
These different states can be distinguished from each other, by observing
carefully correlation functions \cite%
{Skenderis:2007yb,Christodoulou:2016nej,Balasubramanian:2007qv}. The
coherent states we discuss in this paper are the coherent states around the
vacuum, see also \cite{Berenstein:2017abm,Vazquez:2006id}. These coherent
states have included those corresponding to perturbations around the AdS
vacuum. Here, the geometries dual to coherent states are constructed as
ten-dimensional geometries asymptotic to $AdS_{5}\times S^{5}$ in string
theory. Some classes of these geometries can also be reduced to lower
dimensions and viewed as geometries in lower dimensional gravity \cite%
{Lin:2004nb,Chong:2004ce,Chen:2007du,Liu:2007xj}. Geometries in lower
dimensional gravity that are dual to coherent states have also been
considered in \cite{Gentle:2013fma}. Incidentally, there are other coherent
states as excitations around the Young tableau states \cite%
{Berenstein:2017abm}.

In Ref. \cite{Berenstein:2017abm} they considered coherent states
\begin{equation}
B_{+,\Lambda }\lvert {0}\rangle =\exp (\sum_{k}\Lambda ^{k}\frac{%
a_{k}^{\dagger }}{k})\lvert {0}\rangle .  \label{coh_B_+_02}
\end{equation}%
This corresponds to the case $\Lambda _{k}=\Lambda ^{k}$. Then we can define
coherent states $\lvert {Coh}\rangle =\prod_{k=1}^{\infty }\exp (\Lambda _{k}%
\frac{t_{k}}{k})$, with
\begin{equation}
\Lambda _{k}=\sum_{i}x_{i}^{k},
\end{equation}%
and this can be written as%
\begin{equation}
\left\vert Coh\right\rangle =\left\vert Coh(x_{1},x_{2},\dots )\right\rangle
=B_{+}(x_{1},x_{2},...)\lvert {0}\rangle :=\prod_{i}B_{+,x_{i}}\lvert {0}%
\rangle .
\end{equation}

The dual field theory side is a quantum mechanical system. We can superpose
states and compute transition probabilities between different states. Under
the state-operator correspondence, the inner products or overlaps between
the states are equivalent to correlation functions between the operators
corresponding to the states.

We consider the inner product of coherent states and general tableau states,
and we have the following

\begin{prop}
\label{proposition_coh_01}Consider the coherent state given above
\begin{equation}
\lvert Coh\rangle =\prod_{k=1}^{\infty }\exp (\Lambda _{k}\frac{t_{k}}{k}),
\label{prop_coh_01}
\end{equation}%
in which $\Lambda _{k}=\sum_{i=1}^{m}x_{i}^{k}$. Then $\langle \lambda
|Coh\rangle =s_{\lambda }$ where $s_{\lambda }=s_{\lambda
}(x_{1},x_{2},\dots ,x_{m})$ is the Schur polynomial corresponding to $%
\lambda $.
\end{prop}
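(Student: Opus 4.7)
\medskip

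\noindent\textbf{Proof proposal.} The plan is to compute $\langle\lambda|Coh\rangle$ directly by expanding both sides in the power-sum (i.e., $\prod_k t_k^{w_k}$) basis and then to recognize the resulting expression as the Frobenius character formula for the Schur polynomial.

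First I would expand the coherent state (\ref{prop_coh_01}) as a single sum over sequences $\vec{l}=(l_1,l_2,\dots)$,
\begin{equation}
\lvert Coh\rangle = \sum_{\vec{l}} \prod_k \frac{\Lambda_k^{l_k}}{k^{l_k}\, l_k!}\,\lvert t_k^{l_k}\rangle,
\end{equation}
and rewrite the Young tableau state using the definition (\ref{Def_Young}) as
\begin{equation}
\lvert \lambda\rangle = \sum_{\vec{w}\in p(n)} \chi_\lambda(\vec{w})\prod_k \frac{1}{k^{w_k} w_k!}\,\lvert t_k^{w_k}\rangle.
\end{equation}
Using the orthogonality relation $\langle \prod_k t_k^{w_k} \mid \prod_k t_k^{l_k}\rangle = \delta_{\vec{w},\vec{l}}\,\prod_k k^{w_k} w_k!$ recorded earlier in Section \ref{sec_Young tableau states and entanglement}, the double sum collapses onto $\vec{l}=\vec{w}$, and (since $\chi_\lambda$ is supported on partitions of $n$) one is automatically restricted to $\vec{w}\in p(n)$. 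After cancelling the $k^{w_k} w_k!$ norm factors against one copy of $\prod_k \frac{1}{k^{w_k} w_k!}$ I expect to land on
\begin{equation}
\langle \lambda \mid Coh\rangle \;=\; \sum_{\vec{w}\in p(n)} \chi_\lambda(\vec{w}) \prod_k \frac{\Lambda_k^{w_k}}{k^{w_k} w_k!}. \label{plan_intermediate}
\end{equation}

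The final step is to identify (\ref{plan_intermediate}) with the Schur polynomial. Substituting $\Lambda_k=\sum_{i=1}^{m}x_i^k = p_k(x_1,\dots,x_m)$, the product $\prod_k \Lambda_k^{w_k}$ becomes the power-sum symmetric function $p_{\vec{w}}(x)$, and the denominator $\prod_k k^{w_k} w_k!$ is precisely the order $z_{\vec{w}}$ of the centralizer of a permutation of cycle type $\vec{w}$. Thus (\ref{plan_intermediate}) reads $\sum_{\vec{w}\vdash n} z_{\vec{w}}^{-1}\chi_\lambda(\vec{w})\, p_{\vec{w}}(x)$, which is the classical Frobenius formula expressing the Schur polynomial $s_\lambda(x_1,\dots,x_m)$ in the power-sum basis (see, e.g., \cite{Fulton,Sagan}). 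This yields $\langle\lambda|Coh\rangle = s_\lambda(x_1,\dots,x_m)$ as claimed.

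There is no serious obstacle: the computation is essentially bookkeeping of normalization factors followed by invoking a standard identity from symmetric function theory. The one place to be careful is ensuring that the $k^{w_k}$ factors from the nonstandard commutation relation (\ref{commutation_relation_01}) cancel correctly, since they appear in both the normalization of $\lvert t_k^{w_k}\rangle$ and in the denominators of the definitions of $\lvert\lambda\rangle$ and $\lvert Coh\rangle$; matching them carefully is what produces exactly the centralizer weight $z_{\vec{w}}^{-1}$ needed for the Frobenius formula.
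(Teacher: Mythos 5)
Your proposal is correct and follows essentially the same route as the paper's proof: expand both states in the $\prod_k t_k^{w_k}$ basis, collapse the double sum via the orthogonality relation so the $k^{w_k}w_k!$ norm factors cancel one copy of the denominators, and then identify $\sum_{\vec{w}\vdash n} z_{\vec{w}}^{-1}\chi_\lambda(\vec{w})p_{\vec{w}}(x)$ with $s_\lambda$ via the classical Frobenius expansion (the paper cites this as equation (4.23) of Sagan, which is precisely the power-sum expansion of the Schur function you invoke). Your explicit identification of $\prod_k k^{w_k}w_k!$ with the centralizer order $z_{\vec{w}}$ is a nice clarification but not a different argument.
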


\begin{proof}
	Before the proof, we want to discuss the number of variables $x_i$. In fact, we can define Schur polynomial of infinitely many variables \cite{Sagan}, and every Schur polynomial of finitely many variables can be given by the Schur polynomial of infinitely many variables by just sitting extra variables to zero. So our proof works for both finite number of variables and also infinite number of variables.
	
	The proof is straightforward calculation.
	\begin{align}
	\langle \lambda|Coh\rangle &= \sum_{\vec{w} \in p(n)}\chi_{\lambda}(\vec{w})\prod_{k}\frac{1}{k^{w_k}w_k!}\langle t_k^{w_k}\lvert\sum_{\vec{l}} \prod_{k}\frac{1}{l_k!}(\Lambda_k\frac{1}{k})^{l_k}\left\vert {t_k^{w_k}}\right\rangle \nonumber\\
	&= \sum_{\vec{w} \in p(n)}\chi_{\lambda}(\vec{w})\prod_{k}\frac{1}{k^{w_k}w_k!}\frac{1}{w_k!}(\Lambda_k\frac{1}{k})^{w_k}k^{w_k}w_k! \nonumber\\
	& = \sum_{\vec{w} \in p(n)}\chi_{\lambda}(\vec{w})\prod_{k}\frac{1}{k^{w_k}w_k!}(\Lambda_k)^{w_k}.
	\end{align}
	Using formula in \cite{Sagan} (equation (4.23)), when $\Lambda_k = \sum_{i}x_i^k$, the above formula becomes
	\begin{align}
	\langle \lambda|Coh\rangle = s_{\lambda}(x_1,x_2, \dots, x_m).
	\end{align}
\end{proof}

The above proof works in the case of large $N$ limit. In the case of finite $%
N$ the summation of partitions above should be limited to have no more than $%
N$ parts. The proposition \ref{proposition_coh_01} can be easily generalized
to other states. Note that in the definition Eq. (\ref{Def_Young}) of Young
tableau states, the character $\chi _{\lambda }$ play an important role. We
can replace the character $\chi _{\lambda }$ to arbitrary class function
\cite{Sagan,James Kerber} on the symmetric group and this defines another
state. Thus, for a class function $\chi $ on the symmetric group, we define
a state
\begin{equation}
\left\vert \chi \right\rangle =\sum_{\vec{w}\in p(n)}\overline{\chi (\vec{w})%
}\prod_{k}\frac{1}{k^{w_{k}}w_{k}!}(t_{k})^{w_{k}},
\end{equation}%
where $\chi (\vec{w})$ is a general class function of $\vec{w}~$on the
symmetric group, which is in general complex and we put a complex
conjugation on it in our above definition. For $\chi =\chi _{\lambda }$ this
gives back to the definition of Young tableau state $\left\vert \chi
_{\lambda }\right\rangle =\left\vert \lambda \right\rangle $.

We then have the following generalization of Proposition \ref%
{proposition_coh_01}

\begin{prop}
\label{proposition_coh_02} Consider the coherent state $|Coh\rangle$ as in
Proposition \ref{proposition_coh_01} and the state $\left\vert \chi
\right\rangle$ defined by a general class function above. Then $\langle \chi
|Coh\rangle =F(\chi )$, where $F$ is the Frobenius characteristic map which
sends a class function $\chi $ to the corresponding symmetric function $%
F(\chi )$.
\end{prop}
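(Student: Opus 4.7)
The plan is to essentially retrace the calculation in the proof of Proposition \ref{proposition_coh_01}, observing that its argument never used irreducibility of $\chi_\lambda$---only that $\chi_\lambda$ is a class function on $S_n$, so that it pairs with each basis element $\prod_k t_k^{w_k}$ through a single scalar $\chi_\lambda(\vec{w})$. With the complex conjugate built into the definition of $\lvert \chi \rangle$, the overlap $\langle \chi | Coh\rangle$ will produce $\chi(\vec{w})$ (rather than its conjugate) in the resulting sum, which is exactly the combination the Frobenius map requires.

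First I would expand $\langle \chi \rvert$ and $\lvert Coh\rangle$ in the orthogonal basis $\{\prod_k t_k^{w_k}\}$, using the norms $\prod_k k^{w_k} w_k!$ established in Section \ref{sec_Young tableau states and entanglement}. As in the proof of Proposition \ref{proposition_coh_01}, orthogonality collapses the double sum over $\vec{w}$ (indexing $\langle \chi \rvert$) and $\vec{l}$ (indexing $\lvert Coh\rangle$) to a single sum over $\vec{w} \in p(n)$, and the ratio $(k^{w_k} w_k!)/(k^{w_k} w_k!)$ coming from the norms cancels one copy of the denominators, giving
\begin{equation*}
\langle \chi | Coh\rangle \;=\; \sum_{\vec{w} \in p(n)} \chi(\vec{w}) \prod_{k} \frac{\Lambda_k^{w_k}}{k^{w_k} w_k!}.
\end{equation*}

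Second, I would identify this expression with the Frobenius characteristic $F(\chi)$. Substituting $\Lambda_k = \sum_i x_i^k = p_k(x_1,\dots,x_m)$ turns $\prod_k \Lambda_k^{w_k}$ into the power-sum symmetric function $p_{\vec{w}}(x)$, while $\prod_k k^{w_k} w_k!$ is the standard normalization $z_{\vec{w}}$. The sum then reads $\sum_{\vec{w}} \chi(\vec{w})\, p_{\vec{w}}(x)/z_{\vec{w}}$, which is the defining formula of the Frobenius characteristic map as in \cite{Sagan}.

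There is no serious obstacle: Proposition \ref{proposition_coh_01} is recovered as the special case $\chi = \chi_\lambda$, since $F(\chi_\lambda) = s_\lambda$. The only point requiring care is the complex-conjugation convention in the definition of $\lvert \chi \rangle$, which is precisely what makes the pairing linear (rather than conjugate-linear) in $\chi$, so that $\chi \mapsto \langle \chi | Coh\rangle$ is literally the Frobenius isomorphism between class functions on $S_n$ and degree-$n$ symmetric functions, evaluated at the point $(x_1,x_2,\dots,x_m)$. As with Proposition \ref{proposition_coh_01}, the argument is valid in the large $N$ limit; for finite $N$ one restricts the sum to partitions with at most $N$ parts.
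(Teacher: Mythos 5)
Your proposal is correct and follows essentially the same route as the paper: expand $\langle\chi\rvert$ and $\lvert Coh\rangle$ in the orthogonal basis $\{\prod_k t_k^{w_k}\}$, use the norms $\prod_k k^{w_k}w_k!$ to collapse the sums to $\sum_{\vec{w}\in p(n)}\chi(\vec{w})\prod_k \Lambda_k^{w_k}/(k^{w_k}w_k!)$, and recognize this as the defining formula $\sum_{\vec{w}}\chi(\vec{w})p_{\vec{w}}/z_{\vec{w}}$ of the Frobenius characteristic map. Your remarks on the role of the complex conjugation in the definition of $\lvert\chi\rangle$ and on the finite-$N$ restriction are consistent with the paper's treatment.
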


\begin{proof}

	The proof is almost the same as that in Proposition \ref{proposition_coh_01}.
	\begin{align}
	\langle\chi|Coh\rangle &= \sum_{\vec{w} \in p(n)}\chi(\vec{w})\prod_{k}\frac{1}{k^{w_k}w_k!}\langle t_k^{w_k}\lvert\sum_{\vec{l}} \prod_{k}\frac{1}{l_k!}(\Lambda_k\frac{1}{k})^{l_k}\left\vert {t_k^{w_k}}\right\rangle \nonumber\\
	&= \sum_{\vec{w} \in p(n)}\chi(\vec{w})\prod_{k}\frac{1}{k^{w_k}w_k!}\frac{1}{w_k!}(\Lambda_k\frac{1}{k})^{w_k}k^{w_k}w_k! \nonumber\\
	& = \sum_{\vec{w} \in p(n)}\chi(\vec{w})\prod_{k}\frac{1}{k^{w_k}w_k!}(\Lambda_k)^{w_k}.
	\end{align}
	Comparing with the definition of the Frobenius characteristic map (See \cite{Sagan} Definition 4.23), when $\Lambda_k = \sum_{i}x_i^k$, the above formula becomes
	\begin{align}
	\langle \chi|Coh\rangle = F(\chi).
	\end{align}
\end{proof}

As a special case, the Frobenius characteristic map sends the character $%
\chi _{\lambda }$ to the corresponding Schur function. Then we have $F(\chi
_{\lambda })=s_{\lambda }$, which coincides with our previous Proposition %
\ref{proposition_coh_01}.

To summarize, our above result can be stated as%
\begin{equation}
\langle \chi \rvert \prod_{i}B_{+,x_{i}}\lvert {0}\rangle =F(\chi
)(x_{1},x_{2},\dots ).
\end{equation}%
Or specifically for Young tableau states%
\begin{equation}
\langle \lambda \rvert \prod_{i}B_{+,x_{i}}\lvert {0}\rangle =s_{\lambda
}(x_{1},x_{2},\dots ).
\end{equation}

The Young tableau states provide an orthonormal basis and a coherent state
can be expanded by the Young tableau states. And our above formula gives us
the expansion coefficient for the superposition:
\begin{equation}
\prod_{i}B_{+,x_{i}}\lvert {0}\rangle =\sum_{\lambda }s_{\lambda
}(x_{1},x_{2},\dots )\lvert {\lambda }\rangle ,
\end{equation}%
where the summation is over all possible Young tableaux $\lambda $. This
equation shows that the coherent states can be written as the quantum
superpositions of Young tableau states.

Now we look at some special cases. For example, if we only have one variable
$x_{1}$, then the above formula is just the expansion (4.4) in \cite%
{Berenstein:2017abm}, and for two variables $x_{1},x_{2}$, the above formula
gives Eq. (4.43) in \cite{Berenstein:2017abm}.

In Ref. \cite{Berenstein:2017abm} another dual version of coherent states is
defined%
\begin{equation}
B_{-,\Lambda }\lvert {0}\rangle =\exp (-\sum_{k}\Lambda ^{k}\frac{%
a_{k}^{\dagger }}{k})\lvert {0}\rangle .  \label{coh_B_-_02}
\end{equation}%
Note that in the definition, there is an important minus sign in state (\ref%
{coh_B_-_02}), with respect to state (\ref{coh_B_+_02}). Operators $B_{\pm
,\Lambda }$ will be very useful. For example, for the value $\Lambda
=e^{i\gamma }$, the corresponding picture of $B_{+,\Lambda }\lvert {0}%
\rangle $ is a Dirac delta function centered at angle $\gamma $.

There is duality between the $B_{+}$ and $B_{-}$, which is connected to the
duality between Young tableau state $\lvert {\lambda }\rangle $ and its
transpose $\lvert {\lambda ^{T}}\rangle $. We have the following results.

\begin{prop}
\label{proposition_coh_03}We have a duality of inner product between a Young
tableau state and a coherent state as follows
\begin{equation}
\langle \lambda \rvert \prod_{i}B_{+,x_{i}}\lvert {0}\rangle =\langle {%
\lambda ^{T}}\rvert \prod_{i}B_{-,-x_{i}}\lvert {0}\rangle .
\label{prop_coh_03}
\end{equation}
\end{prop}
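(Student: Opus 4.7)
The plan is to reduce both sides of \eqref{prop_coh_03} to the same Schur polynomial $s_\lambda(x_1,x_2,\dots,x_m)$ using Proposition \ref{proposition_coh_01} on the left and a sign-tracking computation on the right. By Proposition \ref{proposition_coh_01} we already have
\begin{equation*}
\langle \lambda \rvert \prod_{i} B_{+,x_i}\lvert 0\rangle = s_\lambda(x_1,x_2,\dots,x_m),
\end{equation*}
so the whole task is to show that the right-hand side of \eqref{prop_coh_03} equals the same Schur polynomial.

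First I would rewrite $\prod_i B_{-,-x_i}\lvert 0\rangle$ as a coherent state of the standard form \eqref{prop_coh_01}. Since $B_{-,\Lambda}\lvert 0\rangle = \exp(-\sum_k \Lambda^k a_k^\dagger/k)\lvert 0\rangle$, the product $\prod_i B_{-,-x_i}\lvert 0\rangle$ is the coherent state with parameters
\begin{equation*}
\Lambda_k' = -\sum_{i}(-x_i)^k = (-1)^{k+1}\sum_i x_i^k.
\end{equation*}
Applying the same calculation that proved Proposition \ref{proposition_coh_01} (now with $\lambda^T$ in place of $\lambda$ and with $\Lambda_k'$ in place of $\Lambda_k$) gives
\begin{equation*}
\langle \lambda^T\rvert \prod_i B_{-,-x_i}\lvert 0\rangle = \sum_{\vec w\in p(n)} \chi_{\lambda^T}(\vec w)\prod_k \frac{1}{k^{w_k}w_k!}\bigl((-1)^{k+1}\textstyle\sum_i x_i^k\bigr)^{w_k}.
\end{equation*}

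Next I would collect the two sign factors. The product of the sign factors from the $\Lambda_k'$'s is
\begin{equation*}
\prod_k (-1)^{(k+1)w_k} = \prod_k (-1)^{(k-1)w_k} = (-1)^{\sum_k (k-1)w_k},
\end{equation*}
which is exactly $\mathrm{sgn}(\vec w)$ as a function of cycle type (since a permutation with cycle type $\vec w$ has sign $(-1)^{n-\sum_k w_k}$). On the other hand, Theorem \ref{thm_tableaux} provides the character identity $\chi_{\lambda^T}(\vec w) = \mathrm{sgn}(\vec w)\chi_\lambda(\vec w)$. Substituting these two observations, the two signs cancel since $\mathrm{sgn}(\vec w)^2 = 1$, and the remaining sum is precisely
\begin{equation*}
\sum_{\vec w\in p(n)} \chi_\lambda(\vec w)\prod_k \frac{1}{k^{w_k}w_k!}\bigl(\textstyle\sum_i x_i^k\bigr)^{w_k} = s_\lambda(x_1,x_2,\dots,x_m),
\end{equation*}
again by Proposition \ref{proposition_coh_01}.

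The only nontrivial point is the bookkeeping: recognizing that the sign twist built into $B_-$ (relative to $B_+$) and the simultaneous negation $x_i\mapsto -x_i$ together produce exactly the factor $\mathrm{sgn}(\vec w)$, which is precisely the factor needed to convert $\chi_{\lambda^T}$ into $\chi_\lambda$ via Theorem \ref{thm_tableaux}. I do not anticipate a serious obstacle beyond this sign accounting; the result is essentially the statement that transposition of Young diagrams on the tableau side is dual to the involution $p_k\mapsto (-1)^{k+1}p_k$ on power sums, which is the standard $\omega$-involution in symmetric function theory.
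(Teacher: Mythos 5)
Your proposal is correct and follows essentially the same route as the paper: expand $\prod_i B_{-,-x_i}\lvert 0\rangle$ over conjugacy classes, observe that the combined sign from the $B_-$ definition and the substitution $x_i\mapsto -x_i$ is exactly $\mathrm{sgn}(\vec w)=(-1)^{\sum_k(k-1)w_k}$, and cancel it against the character identity $\chi_{\lambda^T}(\vec w)=\mathrm{sgn}(\vec w)\chi_\lambda(\vec w)$ to recover $s_\lambda$. In fact your sign bookkeeping is cleaner than the paper's own intermediate displays, which contain typographical slips (e.g.\ writing $-(-1)^{w_1+2w_2+\cdots}$ for $\mathrm{sgn}(\vec w)$, which would be the constant $-(-1)^n$); your identification of the sign with the standard $\omega$-involution $p_k\mapsto(-1)^{k+1}p_k$ is the right way to see it.
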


\begin{proof}
	We compute $\langle{\lambda^T}\rvert\prod_{i}B_{-,-x_i}\lvert{0}\rangle$ to show that it is equal to $s_{\lambda}$.
	\begin{align}
	\langle{\lambda^T}\rvert\prod_{i}B_{-,-x_i}\lvert{0}\rangle &= \sum_{\vec{w} \in p(n)}\chi_{\lambda^T}(\vec{w})\prod_{k}\frac{1}{k^{w_k}w_k!}\langle{t_k^{w_k}}\rvert\sum_{\vec{l}} \prod_{k}\frac{-1}{l_k!}(\sum_i (-x_i)^k\frac{1}{k})^{l_k}\lvert{t_k^{l_k}}\rangle \nonumber\\
	& = \sum_{\vec{w} \in p(n)}\chi_{\lambda^T}(\vec{w})\prod_{k}\frac{-1}{k^{w_k}w_k!}({(-1)^k}\Lambda_k)^{w_k}
	\end{align}
	where $\Lambda_k = \sum_i x_i^k$. Now that $\chi_{\lambda^T}(\vec{w}) = sgn(\vec{w})\chi_{\lambda}(\vec{w})$. And sgn is the $\pm$ according to whether $\vec{w}$ correspond to an odd or even permutation. It can be shown that $sgn(\vec{w}) = -(-1)^{w_1 + 2w_2 + 3w_3+\cdots}$. So we insert $\chi_{\lambda^T}(\vec{w}) = -(-1)^{w_1 + 2w_2 + 3w_3+\cdots}\chi_{\lambda}(\vec{w})$ into the above formula and get
	\begin{align}
	\langle{\lambda^T}\rvert\prod_{i}B_{-,-x_i}\lvert{0}\rangle & = \sum_{\vec{w} \in p(n)}-(-1)^{w_1 + 2w_2 + 3w_3+\cdots}\chi_{\lambda}(\vec{w})\prod_{k}\frac{-1}{k^{w_k}w_k!}({(-1)^k}\Lambda_k)^{w_k} \nonumber\\
	& = \sum_{\vec{w} \in p(n)}\chi_{\lambda}(\vec{w})\prod_{k}\frac{1}{k^{w_k}w_k!}(\Lambda_k)^{w_k} \nonumber\\
	& = s_{\lambda}(x_1,x_2,\dots).
	\end{align}
\end{proof}

Hence we also have
\begin{equation}
\prod_{i}B_{-,x_{i}}\lvert {0}\rangle =\sum_{\lambda }s_{\lambda
}(-x_{1},-x_{2},\dots )\lvert {\lambda }^{T}\rangle .
\end{equation}

The norm-squared $\parallel B_{+,x_{1}}B_{+,x_{2}}\lvert {0}\rangle
\parallel ^{2}$ was computed in Eq. (4.44) in Ref. \cite{Berenstein:2017abm}%
. We can generalize their formula to arbitrary many $B_{+,x_{i}}$, and we
have the following result
\begin{equation}
\langle {0}\rvert (\prod_{i}B_{+,y_{i}}^{\dagger
})(\prod_{i}B_{+,x_{i}})\lvert {0}\rangle =\prod_{i,j}\frac{1}{1-x_{i}\bar{y}%
_{j}}.  \label{inner_product_02}
\end{equation}%
The proof of this formula (\ref{inner_product_02}) is in Appendix \ref%
{appendix_inner product}. It is easy to see the norm of the $%
\prod_{i}B_{+,x_{i}}\lvert {0}\rangle $ from Eq. (\ref{inner_product_02}),
when identifying $y_{i}$ with $x_{i}$, hence
\begin{equation}
\parallel \prod_{i}B_{+,x_{i}}\lvert {0}\rangle \parallel ^{2}=\prod_{i,j}%
\frac{1}{1-x_{i}\bar{x}_{j}}.
\end{equation}%
There is a dual version of the above formula (\ref{inner_product_02}) for
operators $B_{-,x_{i}}$, which gives the same result as above
\begin{equation}
\langle {0}\rvert (\prod_{i}B_{-,y_{i}}^{\dagger
})(\prod_{i}B_{-,x_{i}})\lvert {0}\rangle =\prod_{i,j}\frac{1}{1-x_{i}\bar{y}%
_{j}}.  \label{inner_product_03}
\end{equation}%
There is also a formula for inner product of $B_{+}$ with $B_{-}$, which is
as follows
\begin{equation}
\langle {0}\rvert (\prod_{i}B_{-,y_{i}}^{\dagger
})(\prod_{i}B_{+,x_{i}})\lvert {0}\rangle =\prod_{i,j}(1-x_{i}\bar{y}_{j}).
\label{inner_product_04}
\end{equation}%
The proofs of the above formulas (\ref{inner_product_03}) and (\ref%
{inner_product_04}) are given in Appendix \ref{appendix_inner product}.

Now we discuss Schur polynomials for some simple states, such as the single
row states and single column states, and then rectangular tableau states.
For the Young tableau $\lambda = (n)$, which is a row of $n$ boxes, the
Schur polynomial is
\begin{equation}
s_{(n)}(x_{1},x_{2},\dots )=\sum_{i_{1}\leq i_{2}\leq \dots
i_{n}}x_{i_{1}}x_{i_{2}}\cdots x_{i_{n}}.
\end{equation}%
And we will write this function as $h_{n}:=s_{(n)}$.

For the Young tableau $\lambda =(1^{n})$, which is a column of $n$ boxes,
the Schur polynomial is
\begin{equation}
s_{(1^{n})}(x_{1},x_{2},\dots )=\sum_{i_{1}<i_{2}<\cdots
i_{n}}x_{i_{1}}x_{i_{2}}\cdots x_{i_{n}}.
\end{equation}%
And we will write this function as $e_{n}:=s_{(1^{n})}$.

For a general Young tableau\ with $m$ rows, $\lambda =(\lambda _{1},\lambda
_{2},\dots ,\lambda _{m}),$
\begin{equation}
s_{\lambda }=\det (h_{\lambda _{i}-i+j})_{1\leq i,j\leq m}.
\end{equation}%
In the above formula, it is possible that some $\lambda _{i}-i+j$ is zero,
we assume that $h_{n}$ with negative $n$ is zero.

Apply this formula for a rectangular tableau state $\lambda =(\underbrace{
L,L,\dots ,L}_{M})$, the $s_{LM}$ has expression
\begin{equation}
s_{LM}(x_{1},x_{2},\dots )=%
\begin{vmatrix}
h_{L} & h_{L+1} & \cdots & h_{L+M-1} \\
h_{L-1} & h_{L} & \cdots & h_{L+M-2} \\
\vdots &  &  & \vdots \\
h_{L-M+1} & h_{L-M+1} & \cdots & h_{L}%
\end{vmatrix}%
.
\end{equation}

There is another expression for $s_{LM}$ which is useful
\begin{equation}
s_{LM}=\sum_{\{i_{kl}\}\in T_{LM}}\prod_{k,l}x_{i_{kl}}\ ,
\end{equation}%
where the summation is over all possible index $i_{k,l}\in T_{LM}$, which is
described by the following constraint
\begin{equation}
\begin{matrix}
i_{11} & \leq & i_{12} & \leq & \cdots & i_{1L} \\
\wedge & ~ & \wedge & ~ & ~ & \wedge \\
i_{21} & \leq & i_{22} & \leq & \cdots & i_{2L} \\
\wedge & ~ & \wedge & ~ & ~ & \wedge \\
\vdots &  & \vdots &  &  & \vdots \\
i_{M1} & \leq & i_{M2} & \leq & \cdots & i_{ML}%
\end{matrix}%
.
\end{equation}%
These two descriptions of $s_{LM}$ are equivalent.

An immediate consequence of the above formula is
\begin{equation}
s_{LM}(x_{1},x_{2},\dots ,x_{m},0,0,\dots )=0,\;\;\text{for }m<M,
\end{equation}%
which translates to the following formula
\begin{equation}
\langle {\Box _{LM}}\rvert \prod_{i=1}^{m}B_{+,x_{i}}\lvert {0}\rangle
=0,\;\;\text{for }m<M.
\end{equation}%
More generally, for a Young tableau with $M$ rows, we have
\begin{equation}
\langle {\lambda }\rvert \prod_{i=1}^{m}B_{+,x_{i}}\lvert {0}\rangle =0,\;\;%
\text{for }m<M.
\end{equation}

The Schur polynomial for a general Young tableau is complicated, however for
a rectangular Young tableau $\Box_{LM}$ and $m = M$, we have a simple
formula,

\begin{equation}
\langle {\Box _{LM}}\rvert \prod_{i=1}^{M}B_{+,x_{i}}\lvert {0}\rangle
=s_{LM}(x_{1},x_{2},\dots ,x_{M})=(\prod_{i=1}^{M}x_{i})^{L}.
\end{equation}%
The above formula will be crucial in the next section, where we will further
analyze the overlap.

\section{Bound of overlap and entanglement entropy}

\label{sec_Bound of overlap and entanglement entropy}\vspace{1pt}

\renewcommand{\theequation}{4.\arabic{equation}} \setcounter{equation}{0}

In this section we further analyze the overlap $\langle {\lambda }\rvert
\prod_{i=1}^{M}B_{+,x_{i}}\lvert {0}\rangle $ for $\lvert {\lambda }\rangle $
a rectangular tableau state with $M$ rows and $L$ columns, written as $%
\rvert \Box _{LM}\rangle $. The Young tableau states are normalized with $%
\langle \Box _{LM}\lvert \Box _{LM}\rangle =1$, however, the coherent states
are not normalized therefore we must take their norms into account. We
compute the upper bound of $|\langle {\Box _{LM}}\rvert
\prod_{i=1}^{M}B_{+,x_{i}}\lvert {0}\rangle |$ divided by the norm of $%
\prod_{i=1}^{M}B_{+,x_{i}}\lvert {0}\rangle $. Our analysis of the
properties of the overlap reveals interesting physics.

We will analyze the function%
\begin{equation}
f(x_{1},x_{2},\dots ,x_{M}):=\frac{|\langle {\Box _{LM}}\rvert
\prod_{i=1}^{M}B_{+,x_{i}}\lvert {0}\rangle |^{2}}{\Vert
\prod_{i=1}^{M}B_{+,x_{i}}\lvert {0}\rangle \Vert ^{2}},
\end{equation}%
which is the normalized overlap between coherent states and the rectangular
tableau states $\lvert {\Box _{LM}}\rangle ${.} We also denote $\lvert
Coh(x_{1},\dots ,x_{M})\rangle =\prod_{i=1}^{M}B_{+,x_{i}}\lvert {0}\rangle $%
. The normalized overlap is%
\begin{equation}
f(x_{1},x_{2},\dots ,x_{M})=\prod_{i,j}(1-x_{i}\bar{x}_{j})\times
(\prod_{i=1}^{M}|x_{i}|^{2})^{L}.
\end{equation}

In Appendix \ref{appendix_bound of overlap} we make the derivation of the
supremum of the above normalized inner product. And the final result is%
\begin{equation}
\sup_{\{x_{i}\}}\left\vert \frac{\langle {\Box _{LM}}\rvert
\prod_{i=1}^{M}B_{+,x_{i}}\lvert {0}\rangle }{\Vert
\prod_{i=1}^{M}B_{+,x_{i}}\lvert {0}\rangle \Vert }\right\vert ^{2}=(\frac{M%
}{L+M})^{M}(\frac{L}{L+M})^{L}.  \label{supremum_01}
\end{equation}%
\vspace{1pt}This is the supremum of the normalized overlap. This is the most
strict upper bound, in the sense that there is a state that can actually
saturate this upper bound.

When $L$ and $M$ are both large, the areas of the white annulus and black
annulus are both large, and the geometry have large radii of curvature. We
can consider the behavior of this upper bound at both large $L$ and large $M$%
, and the above formula gives rise to%
\begin{equation}
\sup_{\{x_{i}\}}\left\vert \frac{\langle {\Box _{LM}}\rvert
\prod_{i=1}^{M}B_{+,x_{i}}\lvert {0}\rangle }{\Vert
\prod_{i=1}^{M}B_{+,x_{i}}\lvert {0}\rangle \Vert }\right\vert ^{2}\leq
2^{-(L+M)},  \label{bound_02}
\end{equation}%
where the equal sign is taken when $L=M.$ We think that this result is very
simple and beautiful.

We can use the entanglement entropies to quantify the bound of the overlaps,
or equivalently the correlation functions. The bound is related to the
entanglement entropies, and can be written as
\begin{equation}
\left\vert \frac{\langle {\Box _{LM}}\rvert Coh(x_{1},\dots ,x_{M})\rangle }{%
\Vert Coh(x_{1},\dots ,x_{M})\Vert }\right\vert ^{2}\lesssim \exp \left( -%
\frac{1}{2}\sum_{j=1}^{L+M}s_{j}({\Box _{LM})}\right) ,  \label{bound_03}
\end{equation}%
where the sign $\lesssim $ means smaller than or of the same order. This is
a slightly weaker bound than (\ref{bound_02}). The bound here is sharper
than the bound in \cite{Berenstein:2017abm}. It is a refined version of the
bound in \cite{Berenstein:2017abm}. This is in perfect agreement with the
prediction in \cite{Berenstein:2017abm}, for large $L,M$. This overlap
quantifies the fidelity \cite{Uhlmann} of the two states $\lvert {\Box _{LM}}%
\rangle $ and $\lvert Coh(x_{1},\dots ,x_{M})\rangle $. This expression (\ref%
{bound_02}) is a nontrivial overlap between topologically distinct
geometries (for instance between the geometries depicted in Figure \ref%
{Figure_Bump} (b) and in Figure \ref{Figure_bridge}), and furthermore it is
related to the entanglement entropies as described in (\ref{bound_03}).

The superposition of states corresponding to the same topology gives rise to
a new state that corresponds to a new geometry with a different topology
than these states participating in the superposition. The topology is
changed before and after the superposition. This is similar to the scenario
in \cite{VanRaamsdonk:2010pw}. After the superposition, there is an increase
in the entanglement entropies for the superposed ${\Box _{LM}}$ state and at
the same time a creation of a new bridge structure (see Figure \ref%
{Figure_bridge}).

\begin{figure}[!h]
\centering
\vspace{7pt} \includegraphics[width=0.75\textwidth]{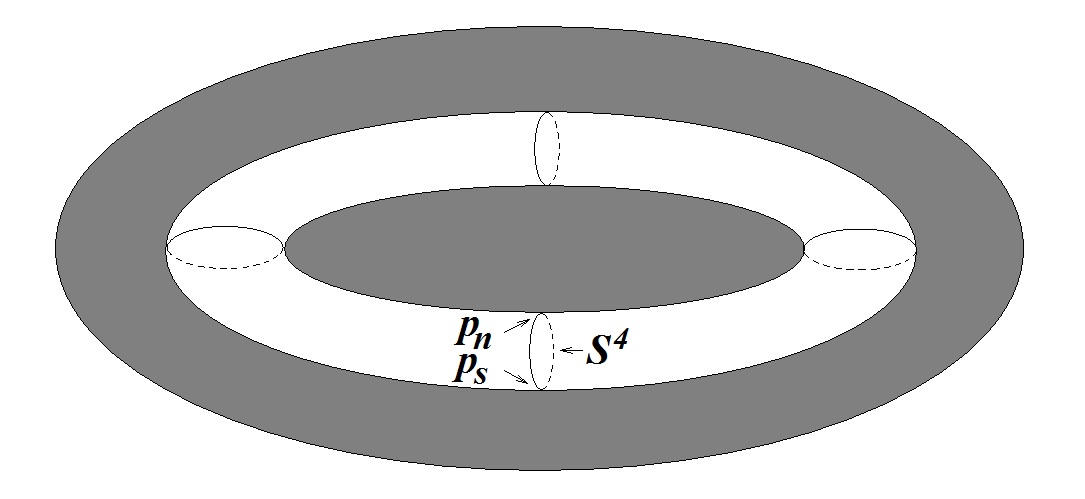} \vspace{1pt%
}
\caption{The bridge connecting the black annulus with the inner black disk.}
\label{Figure_bridge}
\end{figure}

Due to the new bridge, the topology is changed. This bridge connects the
central black droplet with the black annulus (see Figure \ref{Figure_bridge}%
). This bridge is the $S^{3}$ fibration over the white annulus. This is a
bridge connecting two different regions of the same spacetime. It has
topology $S^{1}\times S^{4},$ where $S^{1}$ is the circular direction on the
black and white plane. The $S^{4}$ is formed due to that a $S^{3}$ shrinks
smoothly at the outer and inner edges of the white annulus, and it generates
a nontrivial fourth homology class of the geometry. This bridge connects
with the inner black disk along $S^{1}\times \{p_{\mathrm{n}}\}$ and with
the black annulus\ along $S^{1}\times \{p_{\mathrm{s}}\}$, where $p_{\mathrm{%
n}}$ and $p_{\mathrm{s}}$ are the north pole and south pole of the $S^{4}$.
The geometric property of the bridge depends on $L,M$. The emergence of this
bridge structure is closely related to the entanglement between modes of the
rectangular tableau state. This is reminiscent to the proposal in \cite%
{VanRaamsdonk:2010pw}.

\vspace{0.5cm}
\begin{figure}[h]
\centering
\par
\begin{subfigure}{0.4\textwidth}
	\vspace{2pt} \includegraphics[width=0.75\textwidth]{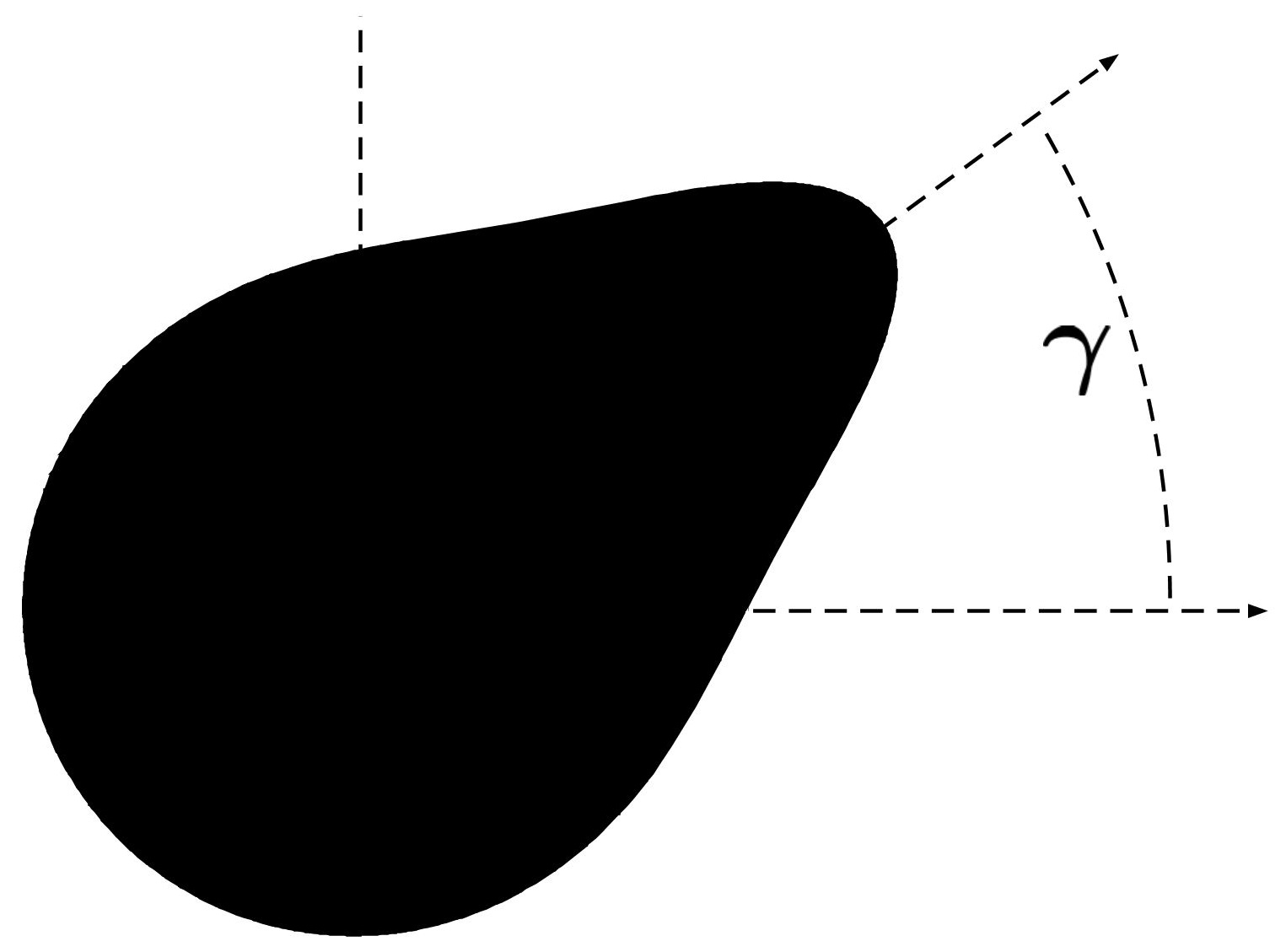}
	\vspace{2pt}
		\caption{Coherent state with a single bump}
		\label{fig:Single_Bump}
	\end{subfigure}
\;\;\;\;\quad
\begin{subfigure}{0.4\textwidth}
	\vspace{3pt} \includegraphics[width=0.6\textwidth]{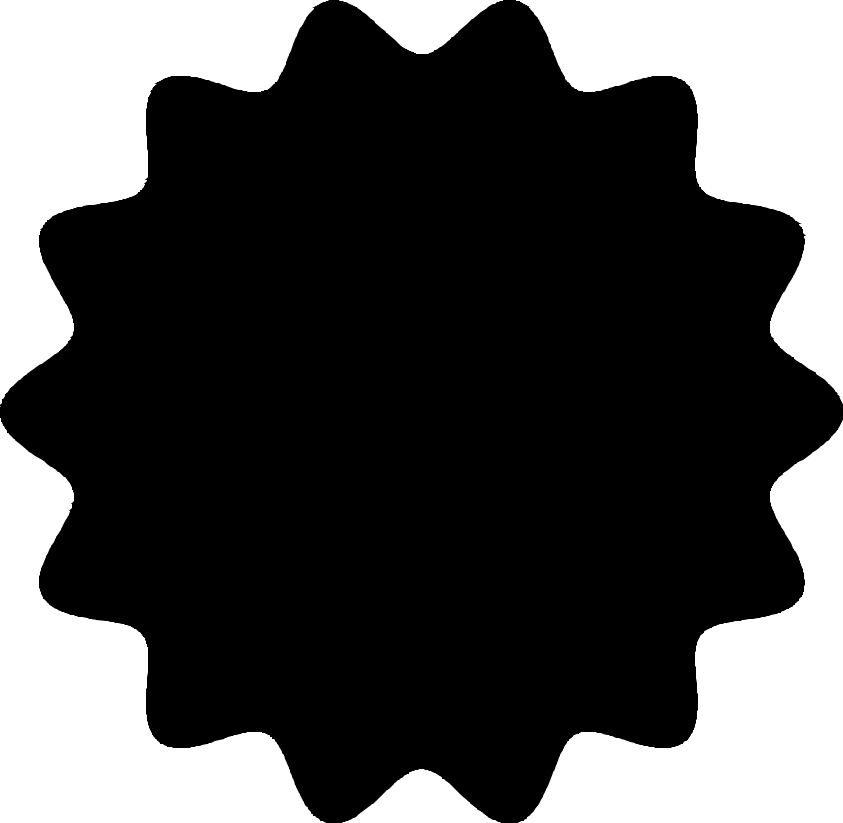}
	\vspace{1pt}
	\caption{Coherent state with multiple bumps that are uniformly distributed}
	\label{fig:multi_Bump}
\end{subfigure}
\caption{Droplet picture of the bubbling geometries for coherent states.
Figure (a) corresponds to a single bump at angle $\protect\gamma$ which
corresponds to the state $B_{+,x}\lvert {0}\rangle$, where $x = r e^{i%
\protect\gamma}$. The parameter $r\in (0,1)$ determines the height of the
bump. The larger the $r$ is, the higher the bump is. And at the limit $r \to
1$, this bump becomes a Dirac delta function. Figure (b) shows $M$ bumps,
where all $x_i$ uniformly distribute around a circle in the complex plane. }
\label{Figure_Bump}
\end{figure}

In deriving this result, we have assumed that all $r_{i}=|x_{i}|$ are equal
for saturation of the bound, which is verified in Appendix B. There is also
a physical interpretation to this. The physical meaning of the above
inequality is to find the upper bound of the overlap of the two states $%
\lvert {\Box _{LM}}\rangle $ and $\prod_{i=1}^{M}B_{+,x_{i}}\lvert {0}%
\rangle $. The geometry corresponding to the state $\lvert {\Box _{LM}}%
\rangle $ is a circular black droplet surrounded by a black annulus. This
geometry has an axial symmetry. On the other hand, we can look at the
geometry corresponding to the state $B_{+,x}\lvert {0}\rangle $. We have a
chiral field $\hat{\phi}(\theta )=\sum_{k>0}(a_{k}\exp (-ik\theta
)+a_{k}^{\dagger }\exp (ik\theta ))$. We can calculate the expectation value
of the chiral field $\phi (\theta )=\langle \hat{\phi}(\theta )\rangle
_{B_{+,x}\lvert {0}\rangle }$, which is
\begin{align}
\phi (\theta )& =\sum_{k=1}^{\infty }x^{k}\exp (-ik\theta )+\bar{x}^{k}\exp
(ik\theta )  \notag \\
& =2\Re (\frac{xe^{-i\theta }}{1-xe^{-i\theta }}).
\end{align}%
The chiral field $\phi (\theta )$ can be regarded as the displacement of the
geometric interface of the two regions, black and white, associated to the
coherent state $B_{+,x}\lvert {0}\rangle $. Writing $x=re^{i\gamma }$, we
can then draw a picture of $\phi (\theta )$, which is like a bump at angle $%
\gamma $ whose height is determined by $r$ (see Figure \ref{Figure_Bump}
(a)). For state $\prod_{i=1}^{M}B_{+,x_{i}}\lvert {0}\rangle $, with $x_{i}$
uniformly distributed around the circle, the corresponding geometry will be
like $M$ bumps with the same height that uniformly distributed along the
angular direction (see Figure \ref{Figure_Bump} (b)). This is the state with
the most possible axial symmetry. Therefore we expect that the maximum will
only be obtained at this state.

\section{Generalized expansion formula}

\label{sec_Generalized superposition formula}

\renewcommand{\theequation}{5.\arabic{equation}} \setcounter{equation}{0} %
\renewcommand{\thethm}{5.\arabic{thm}} \setcounter{thm}{0}

We consider a generalization of the expansion formula for the Young tableau
states in terms of coherent states, in integral representation. Let's first
consider $B_{+}(e^{i\theta })\lvert {0}\rangle =\sum_{n}e^{in\theta }\lvert {%
\triangle _{n}}\rangle $, for the single row states. We write $z=re^{i\theta
}$, then $B_{+}(z)\lvert {0}\rangle =\sum_{n}z^{n}\lvert {\triangle _{n}}%
\rangle $. Then $\lvert {\triangle _{n}}\rangle $ can be written as
\begin{equation}
\lvert {\triangle _{n}}\rangle =\frac{1}{2\pi i}\oint_{\mathcal{C}}\frac{%
\mathrm{d}z}{z}z^{-m}B_{+}(z)\lvert {0}\rangle ,
\end{equation}%
where $\mathcal{C}$ can be any path that encloses $0$.

Hence we can use Fourier transform to represent the single row state $\lvert
{\triangle _{n}}\rangle $ through coherent state
\begin{equation}
\lvert {\triangle _{n}}\rangle =\frac{1}{2\pi }\int_{0}^{2\pi }\mathrm{d}%
\gamma e^{-in\gamma }B_{+,e^{i\gamma }}\lvert {0}\rangle .
\label{single_column_B_02}
\end{equation}%
We then find a generalization of this formula to any Young tableau state.

\begin{thm}
Let $\lambda $ be a Young tableau with $M$ rows (that is, $\lambda =(\lambda
_{1},\lambda _{2},\dots ,\lambda _{M})$). Then the Young tableau state $%
\lvert \lambda \rangle $ can be represented by coherent state $%
B_{+}(x_{1},\dots ,x_{M})\lvert {0}\rangle $ by the following formula
\begin{equation}
\lvert \lambda \rangle =\frac{1}{M!(2\pi )^{M}}\int_{[0,2\pi ]^{M}}\mathrm{d}%
\theta _{1}\dots \mathrm{d}\theta _{M}s_{\lambda }(e^{-i\theta _{1}},\dots
,e^{-i\theta _{M}})\prod_{1\leq i<j\leq M}|e^{i\theta _{i}}-e^{i\theta
_{j}}|^{2}B_{+}(e^{i\theta _{1}},\dots ,e^{i\theta _{M}})\lvert {0}\rangle .
\label{Schur_B_int_02}
\end{equation}
\end{thm}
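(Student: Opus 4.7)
The plan is to verify the identity by expanding the right-hand side in the basis of Young tableau states and showing that only the $|\lambda\rangle$ coefficient survives. By Proposition \ref{proposition_coh_01}, we can write
\begin{equation}
B_{+}(e^{i\theta_1},\dots,e^{i\theta_M})\lvert 0\rangle = \sum_{\mu} s_{\mu}(e^{i\theta_1},\dots,e^{i\theta_M})\lvert \mu\rangle,
\end{equation}
where the sum is over all Young tableaux $\mu$. Since we only have $M$ variables, $s_{\mu}(e^{i\theta_1},\dots,e^{i\theta_M})$ automatically vanishes whenever $\mu$ has more than $M$ rows, so the effective sum ranges over $\mu$ with at most $M$ rows.

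Substituting this expansion into the right-hand side of (\ref{Schur_B_int_02}) and using that the Schur polynomials have real integer coefficients so that $s_{\lambda}(e^{-i\theta_1},\dots,e^{-i\theta_M}) = \overline{s_{\lambda}(e^{i\theta_1},\dots,e^{i\theta_M})}$, the claim reduces to the orthogonality relation
\begin{equation}
\frac{1}{M!(2\pi)^{M}}\int_{[0,2\pi]^{M}} \overline{s_{\lambda}(e^{i\theta})}\, s_{\mu}(e^{i\theta})\prod_{1\le i<j\le M}\lvert e^{i\theta_i}-e^{i\theta_j}\rvert^{2}\, d\theta_1\cdots d\theta_M = \delta_{\lambda,\mu},
\end{equation}
valid for partitions $\lambda,\mu$ with at most $M$ parts. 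This is precisely the Weyl integration formula combined with the orthonormality of the irreducible characters $s_{\lambda}$ of $U(M)$ with respect to Haar measure on the maximal torus, weighted by the Weyl denominator $\lvert\Delta(e^{i\theta})\rvert^{2}=\prod_{i<j}\lvert e^{i\theta_i}-e^{i\theta_j}\rvert^{2}$. I would cite this from a standard reference such as Sagan or Fulton and plug it in; the sum then collapses to $\lvert\lambda\rangle$.

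The only subtlety worth flagging is the interplay between the row-bound and the range of $\mu$. Since $\lambda$ has exactly $M$ rows by hypothesis, both sides are consistent; if $\mu$ has more than $M$ rows it drops out of the expansion automatically because $s_\mu$ in $M$ variables is zero, while for $\mu$ with at most $M$ rows the orthogonality above applies directly. So there is no genuinely hard step, the main ingredient being the identification of the measure $\tfrac{1}{M!(2\pi)^{M}}\prod_{i<j}\lvert e^{i\theta_i}-e^{i\theta_j}\rvert^{2}d\theta$ as the one making $\{s_{\lambda}\}_{\ell(\lambda)\le M}$ an orthonormal family. For the single-row special case $M=1$, the product over $i<j$ is empty and the formula recovers (\ref{single_column_B_02}), providing a consistency check.
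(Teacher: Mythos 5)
Your proof is correct and follows essentially the same route as the paper: expand $B_{+}(e^{i\theta_{1}},\dots,e^{i\theta_{M}})\lvert 0\rangle=\sum_{\mu}s_{\mu}(e^{i\theta})\lvert\mu\rangle$, multiply by $\overline{s_{\lambda}}$ times the squared Vandermonde, and collapse the sum via the orthogonality of Schur polynomials against the measure $\frac{1}{M!(2\pi)^{M}}\prod_{i<j}\lvert e^{i\theta_{i}}-e^{i\theta_{j}}\rvert^{2}\,d\theta$. The only cosmetic difference is that you source this orthogonality from the Weyl integration formula for $U(M)$, whereas the paper obtains it as the $\alpha=1$ (Schur) specialization of a Jack-polynomial identity from the circular-ensemble literature.
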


\begin{proof}
Our proof of the above formula requires using a formula presented in \cite%
{Prob}
\begin{eqnarray}
&&\frac{1}{(2\pi )^{M}}\int_{[0,2\pi ]^{M}}\mathrm{d}\theta _{1}\cdots
\mathrm{d}\theta _{n}J_{\lambda }^{(\alpha )}(e^{i\theta _{1}},\dots
,e^{i\theta _{M}})\overline{J_{\mu }^{(\alpha )}(e^{i\theta _{1}},\dots
,e^{i\theta _{M}})}\prod_{1\leq i<j\leq M}|e^{i\theta _{i}}-e^{i\theta
_{j}}|^{2/{\alpha}}  \notag \\
&=&\delta _{\lambda \mu }\frac{\Gamma (M/\alpha +1)}{\Gamma (1+1/\alpha )^{M}%
}C_{\lambda }(\alpha )\mathcal{N}_{\lambda }^{\alpha }(M)
\end{eqnarray}%
where they used Jack polynomial $J_{\lambda }^{(\alpha )}$ (see Section 4.1
of Ref. \cite{Prob} for details).

For $\alpha =1$, the Jack polynomial $J_{\lambda }^{(\alpha )}$ is a scalar
multiple of the Schur polynomial, $J_{\lambda }^{(1)}=h(\lambda )s_{\lambda
} $ where $h(\lambda )$ is the hook length product. For $\alpha =1$, the
above formula becomes
\begin{equation}
\frac{1}{(2\pi )^{M}}\int_{[0,2\pi ]^{M}}\mathrm{d}\theta _{1}\cdots \mathrm{%
d}\theta _{M}s_{\mu }(e^{i\theta _{1}},\dots ,e^{i\theta _{M}})\overline{%
s_{\lambda }(e^{i\theta _{1}},\dots ,e^{i\theta _{M}})}\prod_{1\leq i<j\leq
M}|e^{i\theta _{i}}-e^{i\theta _{j}}|^{2}=\delta _{\lambda \mu }M! .
\label{Schur_angle_02}
\end{equation}%
Since
\begin{equation}
B_{+}(e^{i\theta _{1}},\dots ,e^{i\theta _{M}})=\sum_{\mu }s_{\mu
}(e^{i\theta _{1}},\dots ,e^{i\theta _{M}})\lvert \mu \rangle ,
\label{B_Schur_02}
\end{equation}%
then we times $\frac{1}{(2\pi )^{M}M!}\overline{s_{\lambda }(e^{i\theta
_{1}},\dots ,e^{i\theta _{M}})}\prod_{1\leq i<j\leq M}|e^{i\theta
_{i}}-e^{i\theta _{j}}|^{2}$ to both side of equation (\ref{B_Schur_02}) and
integrate all $\theta _{l}$, the left hand side is just the right hand side
of (\ref{Schur_B_int_02}), and the right hand side becomes%
\begin{equation}
\sum_{\mu }\delta _{\lambda \mu }\lvert \mu \rangle =\lvert \lambda \rangle .
\end{equation}%
This gives us equation (\ref{Schur_B_int_02}).
\end{proof}

For special case $\lambda =\Box _{LM}$, the above formula becomes
\begin{equation}
\lvert \Box _{LM}\rangle =\frac{1}{M!(2\pi )^{M}}\int_{[0,2\pi ]^{M}}\mathrm{%
d}\theta _{1}\cdots \mathrm{d}\theta _{M}e^{-iL(\theta _{1}+\dots +\theta
_{M})}\prod_{1\leq i<j\leq M}|e^{i\theta _{i}}-e^{i\theta
_{j}}|^{2}B_{+}(e^{i\theta _{1}},\dots ,e^{i\theta _{M}})\lvert {0}\rangle .
\end{equation}%
For special case $\lambda =(n)$, then this formula just gives our previous
formula (\ref{single_column_B_02}).

This is a generalized expansion formula in integral representation. The
relation between this formula and $B_{+}(x_{1},x_{2},\dots )=\sum_{\lambda
}s_{\lambda }\lvert \lambda \rangle $ is just like the relation between
formula $\lvert {\triangle _{n}}\rangle =\frac{1}{2\pi }\int_{0}^{2\pi }%
\mathrm{d}\gamma e^{-in\gamma }B_{+,e^{i\gamma }}\lvert {0}\rangle $ and $%
B_{+}(x)\lvert {0}\rangle =\sum_{n}x^{n}\lvert {\triangle _{n}}\rangle $,
where we take inverse Fourier transform of $B_{+}(x)\lvert {0}\rangle $ to
get $\lvert {\triangle _{n}}\rangle $. This is some kind of generalized
Fourier transform in the case of symmetric polynomial.

There is also an analogous formula for $B_{-}(-x_{1},-x_{2},\dots )$. The
dual of a coherent state $B_{+}(x_{1},x_{2},\dots )$ is just $%
B_{-}(-x_{1},-x_{2},\dots )$. Then we just need to take the dual of the
formula to give%
\begin{equation}
\lvert \lambda ^{T}\rangle =\frac{1}{M!(2\pi )^{M}}\int_{[0,2\pi ]^{M}}%
\mathrm{d}\theta _{1}...\mathrm{d}\theta _{M}s_{\lambda }(e^{-i\theta
_{1}},...,e^{-i\theta _{M}})\prod_{1\leq i<j\leq M}|e^{i\theta
_{i}}-e^{i\theta _{j}}|^{2}B_{-}(-e^{i\theta _{1}},...,-e^{i\theta
_{M}})\lvert {0}\rangle,  \label{Schur_B_-_int_02}
\end{equation}%
where $M$ is the number of columns of the Young tableau $\lambda ^{T}$. This
is easily proved in the same way as in the above proof. The equation (\ref%
{Schur_B_int_02}) is dual to (\ref{Schur_B_-_int_02}).

It is very interesting that the paper \cite{Prob} is on Probability Theory
and actually discuss Random matrix. This is not a surprise because our
system also has a matrix theory description. And it's also very interesting
that some formulas given in \cite{Prob} will be useful in our system.

\section{Squeezed states and multi-mode entangled states from Young tableau
states}

\label{sec_Squeezed states, multi-mode entangled states from Young tableau
states}\renewcommand{\theequation}{6.\arabic{equation}} %
\setcounter{equation}{0}

Here we discuss other states that are of interest, including the two-mode
squeezed states and multi-mode entangled states. In quantum information
theory and quantum optics, these states naturally arise for their enormous
applications. Putting these states into our setting, we find other
interesting results. We compute their overlaps with the Young tableau
states, and further make expansions of them in terms of Young tableau states.

Let us first consider the two-mode squeezed states
\begin{equation}
\lvert \mathrm{Squ}_{kk^{\prime }}\rangle =\exp \left[ \mu (a_{k}^{\dagger
}a_{k^{\prime }}^{\dagger }-a_{k}a_{k^{\prime }})\right] \lvert {0}\rangle .
\end{equation}%
One mode is created by $a_{k}^{\dagger }$ and belongs to the Hilbert space $%
\mathcal{H}_{k}$. At the same time, the other mode is created by $%
a_{k^{\prime }}^{\dagger }$ and belongs to the Hilbert space $\mathcal{H}%
_{k^{\prime }}$. For simplicity, we consider the parameter $\mu $ to be
real. After using the commutation relations
\begin{equation}
\left[ \frac{a_{k_{1}}}{\sqrt{k_{1}}},~\frac{a_{k_{2}}^{\dagger }}{\sqrt{%
k_{2}}}\right] =\delta _{k_{1}k_{2}},
\end{equation}%
we have that%
\begin{equation}
\lvert \mathrm{Squ}_{kk^{\prime }}\rangle =\left( 1-\tanh ^{2}(\sqrt{%
kk^{\prime }}\mu )\right) ^{\frac{1}{2}}\sum\limits_{l=0}^{\infty }\left(
\tanh (\sqrt{kk^{\prime }}\mu )\right) ^{l}\frac{1}{l!(\sqrt{kk^{\prime }}%
)^{l}}\lvert t_{k}^{l}\rangle \otimes \lvert t_{k^{\prime }}^{l}\rangle .
\end{equation}%
We denote $q=\tanh (\sqrt{kk^{\prime }}\mu ).$

We use the generating function $Z_{j}(t)=\langle \exp (iH_{j}t)\rangle
_{\rvert \text{Squ}_{kk^{\prime }}\rangle }=\langle \exp (ia_{j}^{\dagger
}a_{j}t)\rangle _{\rvert \text{Squ}_{kk^{\prime }}\rangle }$, where the
expectation value is taken on squeezed state $\rvert \text{Squ}_{kk^{\prime
}}\rangle $. For $j\neq k,k^{\prime }$, it is simple because $\rvert \text{%
Squ}_{kk^{\prime }}\rangle $ do not contain a $j$ mode. For $j=k$, consider
\begin{equation}
\exp (iH_{k}t)\rvert \text{Squ}_{kk^{\prime }}\rangle =(1-q^{2})^{\frac{1}{2}%
}\sum_{l=0}^{\infty }\frac{q^{l}}{l!(\sqrt{kk^{\prime }})^{l}}\exp
(ilkt)\lvert t_{k}^{l}\rangle \otimes \lvert t_{k^{\prime }}^{l}\rangle .
\end{equation}%
Thus we have
\begin{equation}
Z_{k}(t)=\langle \exp (iH_{k}t)\rangle _{\rvert \text{Squ}_{kk^{\prime
}}\rangle }=(1-q^{2})\sum_{l=0}^{\infty }q^{2l}\exp (ilkt)=\frac{1-q^{2}}{%
1-q^{2}\exp (ikt)}.
\end{equation}%
Likewise, for $j=k^{\prime }$, we get
\begin{equation}
Z_{k^{\prime }}(t)=\frac{1-q^{2}}{1-q^{2}\exp (ik^{\prime }t)}.
\end{equation}%
This is just a statement that for mode $k$ or $k^{\prime }$, the probability
distribution is $p_{l}\propto q^{2l}$. This can also be shown by directly
calculating the reduced density matrix through tracing.

The overlaps between the above squeezed states and the Young tableau states
are hence%
\begin{equation}
\langle \lambda \lvert \text{Squ}_{kk^{\prime }}\rangle =\left(
1-q^{2}\right) ^{\frac{1}{2}}\frac{q^{l}}{l!(\sqrt{kk^{\prime }})^{l}}\chi
_{\lambda }(\vec{w})\rvert _{\substack{ w_{k}=w_{k^{\prime }}=l;  \\ %
w_{p}=0,p\neq k,k^{\prime }}}.
\end{equation}%
And $\lambda \vdash l(k+k^{\prime })$. \

It's easy to see that the above squeezed states have the inner products
\begin{equation}
\langle \mathrm{Squ}_{k_{1}k_{2}}\rvert \mathrm{Squ}_{k_{3}k_{4}}\rangle
=\delta _{k_{1}k_{3}}\delta _{k_{2}k_{4}}+\delta _{k_{1}k_{4}}\delta
_{k_{2}k_{3}}.
\end{equation}%
As the Young tableau states are orthonormal \cite{Corley:2001zk}, the
squeezed states are the linear combinations of the Young tableau states as%
\begin{equation}
\rvert \mathrm{Squ}_{kk^{\prime }}\rangle =\left( 1-q^{2}\right) ^{\frac{1}{2%
}}\sum_{l=0}^{\infty }\sum_{\lambda \vdash l(k+k^{\prime })}\left( \frac{%
q^{l}}{l!(\sqrt{kk^{\prime }})^{l}}\chi _{\lambda }(\vec{w})\rvert
_{\substack{ w_{k}=w_{k^{\prime }}=l;  \\ w_{p}=0,p\neq k,k^{\prime }}}%
\right) \lvert \lambda \rangle .
\end{equation}

Taking a limit $\sqrt{kk^{\prime }}\mu \rightarrow \infty $, or equivalently
$q\rightarrow 1$, the above two-mode squeezed states become maximally
entangled states, or EPR states $\lvert \mathrm{EPR}\rangle =\lim_{\sqrt{%
kk^{\prime }}\mu \rightarrow \infty }\lvert \mathrm{Squ}_{kk^{\prime
}}\rangle $, which are%
\begin{equation}
\lvert \mathrm{EPR}\rangle =\mathcal{N}^{-\frac{1}{2}}\sum\limits_{l=0}^{%
\infty }\frac{1}{l!(\sqrt{kk^{\prime }})^{l}}\lvert t_{k}^{l}\rangle \otimes
\lvert t_{k^{\prime }}^{l}\rangle ,  \label{EPR_02}
\end{equation}%
where $\mathcal{N}^{-\frac{1}{2}}$ is a normalization factor. The
normalization factor can be understood as follows. One can take an
infinitesimal positive cutoff $\epsilon \rightarrow 0$, such that $%
1-q=\epsilon $ and $\mathcal{N}=\frac{1}{2\epsilon }$. The squeezed state $%
\lvert \mathrm{Squ}_{kk^{\prime }}\rangle $ with $q$ close to 1, is a good
approximation to the EPR state. These are the entangled states in $\mathcal{H%
}_{k}\otimes \mathcal{H}_{k^{\prime }}$, in which the states $\lvert
l\rangle _{k}=\frac{1}{\sqrt{l!k^{l}}}\lvert t_{k}^{l}\rangle $ in mode $k$,
are entangled with the states $\lvert l\rangle _{k^{\prime }}=\frac{1}{\sqrt{%
l!(k^{\prime })^{l}}}\lvert t_{k^{\prime }}^{l}\rangle $ in mode $k^{\prime
} $.$~$Consider mode $k$ as a IR mode with $k$ very small, and consider mode
$k^{\prime }$ as a UV mode with $k^{\prime }$ very big. Then this state is
an entangled state between a IR mode and a UV mode. Integrating out the UV
mode, gives rise to a reduced density matrix of the IR mode.

The EPR states and the Young tableau states have overlaps
\begin{equation}
\langle \lambda \rvert \mathrm{EPR}\rangle =\mathcal{N}^{-\frac{1}{2}}\frac{1%
}{l!(\sqrt{kk^{\prime }})^{l}}\chi _{\lambda }(\vec{w})\rvert _{\substack{ %
w_{k}=w_{k^{\prime }}=l;  \\ w_{p}=0,p\neq k,k^{\prime }}} .
\end{equation}%
Thus the EPR states can be written as the linear combinations of the Young
tableau states as
\begin{equation}
\rvert \mathrm{EPR}\rangle =\mathcal{N}^{-\frac{1}{2}}\sum_{l=0}^{\infty
}\sum_{\lambda \vdash l(k+k^{\prime })}\left( \frac{1}{l!(\sqrt{kk^{\prime }}%
)^{l}}\chi _{\lambda }(\vec{w})\rvert _{\substack{ w_{k}=w_{k^{\prime }}=l;
\\ w_{p}=0,p\neq k,k^{\prime } }}\right) \lvert \lambda \rangle .
\end{equation}

Consider the state
\begin{equation}
\sqrt{1-q^{2}}(\sum_{l=0}^{\infty }q^{l}\lvert l\rangle _{k}\lvert l\rangle
_{k^{\prime }}\lvert l\rangle _{k^{\prime \prime }})=\sqrt{1-q^{2}}%
(\sum_{l=0}^{\infty }q^{l}\frac{1}{\sqrt{(l!)^{3}(kk^{\prime }k^{\prime
\prime })^{l}}}\lvert {t_{k}^{l}}\rangle \lvert {t_{k^{\prime }}^{l}}\rangle
\lvert {t_{k^{\prime \prime }}^{l}}\rangle ),
\end{equation}%
while the GHZ state corresponds to the $q\rightarrow 1$ limit. In our
system, a three mode entangled GHZ state, should generalize the above state
as
\begin{equation}
\lvert \mathrm{GHZ}\rangle =\tilde{\mathcal{N}}^{-\frac{1}{2}%
}(\sum_{l=0}^{\infty }\lvert l\rangle _{k}\lvert l\rangle _{k^{\prime
}}\lvert l\rangle _{k^{\prime \prime }}).  \label{GHZ_01}
\end{equation}%
This is an entangled state in $\mathcal{H}_{k}\otimes \mathcal{H}_{k^{\prime
}}\otimes \mathcal{H}_{k^{\prime \prime }}$. Then we trace out any two
modes, and this gives a density matrix
\begin{equation}
\hat{\rho}=(1-q^{2})\sum_{l=0}^{\infty }q^{2l}\lvert l\rangle _{k}\langle {l}%
\rvert _{k}~.
\end{equation}%
The entropy for this density matrix is $s=-(\log (1-q^{2})+\frac{2q^{2}\log
(q)}{1-q^{2}})$, and taking $q\rightarrow 1$ limit gives $s\rightarrow
\infty $.

The GHZ states and the Young tableau states have overlaps
\begin{equation}
\langle \lambda \rvert \mathrm{GHZ}\rangle =\tilde{\mathcal{N}}^{-\frac{1}{2}%
}\frac{1}{\sqrt{(l!)^{3}(kk^{\prime }k^{\prime \prime })^{l}}}\chi _{\lambda
}(\vec{w})\rvert _{\substack{ w_{k}=w_{k^{\prime }}=w_{k^{\prime \prime
}}=l;  \\ w_{p}=0,p\neq k,k^{\prime },k^{\prime \prime }}}.  \label{GHZ_02}
\end{equation}%
The GHZ states can thus be written as the linear combinations of the Young
tableau states as\vspace{1pt}%
\begin{equation}
\rvert \mathrm{GHZ}\rangle =\tilde{\mathcal{N}}^{-\frac{1}{2}%
}\sum_{l=0}^{\infty }\sum_{\lambda \vdash l(k+k^{\prime }+k^{\prime \prime
})}\left( \frac{1}{\sqrt{(l!)^{3}(kk^{\prime }k^{\prime \prime })^{l}}}\chi
_{\lambda }(\vec{w})\rvert _{\substack{ w_{k}=w_{k^{\prime }}=w_{k^{\prime
\prime }}=l;  \\ w_{p}=0,p\neq k,k^{\prime },k^{\prime \prime }}}\right)
\lvert \lambda \rangle  \label{GHZ_03}
\end{equation}%
and $\lambda \vdash l(k+k^{\prime }+k^{\prime \prime })$.

It is easy to generalize the state in (\ref{GHZ_01}), (\ref{GHZ_03}) to a $m$%
-mode GHZ state with $m>3$,
\begin{eqnarray}
\lvert \mathrm{GHZ}\rangle _{m} &=&\tilde{\mathcal{N}}_{m}^{-\frac{1}{2}%
}(\sum_{l=0}^{\infty }\lvert l\rangle _{k_{1}}\lvert l\rangle _{k_{2}}\dots
\lvert l\rangle _{k_{m}})  \notag \\
&=&\tilde{\mathcal{N}}_{m}^{-\frac{1}{2}}\sum_{l=0}^{\infty }\sum_{\lambda
\vdash l(k_{1}+k_{2}+\dots +k_{m})}\left( \frac{1}{\sqrt{(l!)^{m}(k_{1}k_{2}%
\dots k_{m})^{l}}}\chi _{\lambda }(\vec{w})\rvert _{\substack{ %
w_{k_{1}}=w_{k_{2}}=\dots =w_{k_{m}}=l;  \\ w_{p}=0,p\neq k_{1},k_{2},\dots
,k_{m}}}\right) \lvert \lambda \rangle .  \notag \\
&&
\end{eqnarray}

We can also calculate the inner product $\langle \text{Squ}_{kk^{\prime
}}\lvert B_{+}(x_{1},\dots )\rvert 0\rangle $, which is just the symmetric
polynomial associated with $\rvert \text{Squ}_{kk^{\prime }}\rangle $. We
get the inner product%
\begin{equation}
\langle \text{Squ}_{kk^{\prime }}\lvert B_{+}(x_{1},x_{2},\dots )\rvert
0\rangle =(1-q^{2})^{\frac{1}{2}}\sum_{l=0}^{\infty }\frac{q^{l}}{l!(\sqrt{%
kk^{\prime }})^{l}}p_{k}^{l}p_{k^{\prime }}^{l}=(1-q^{2})^{\frac{1}{2}}\exp (%
\frac{q}{\sqrt{kk^{\prime }}}p_{k}p_{k^{\prime }}),
\end{equation}%
where $p_{k}=x_{1}^{k}+x_{2}^{k}+\dots $ , and $p_{k^{\prime
}}=x_{1}^{k^{\prime }}+x_{2}^{k^{\prime }}+\dots $. Hence,
\begin{equation}
\langle \text{Squ}_{kk^{\prime }}\lvert B_{+}(x_{1},x_{2},\dots )\rvert
0\rangle =(1-q^{2})^{\frac{1}{2}}\exp (\frac{q}{\sqrt{kk^{\prime }}}%
(x_{1}^{k}+x_{2}^{k}+\dots )(x_{1}^{k^{\prime }}+x_{2}^{k^{\prime }}+\dots
)).
\end{equation}%
This result is just a special case of the fact that the overlap of any state
with $B_{+}(x_{1},\dots )\rvert 0\rangle $ is the symmetric polynomial
associated with that state. See proposition \ref{proposition_coh_02}.

Consider the field
\begin{equation}
\hat{\phi}(\theta )=\sum_{m>0}a_{m}\exp (-im\theta )+a_{m}^{\dagger }\exp
(im\theta ).
\end{equation}%
First the expectation value of this field on the above squeezed states is
zero, $\langle \hat{\phi}(\theta )\rangle =0$. Then we compute the variance $%
\langle $Squ$_{kk^{\prime }}\lvert :\hat{\phi}^{2}(\theta ):\rvert $Squ$%
_{kk^{\prime }}\rangle $, where $:\hat{\phi}^{2}(\theta ):$ is the normal
ordering of $\hat{\phi}^{2}(\theta )$. The variance is
\begin{equation}
\langle \text{Squ}_{kk^{\prime }}\lvert :\hat{\phi}^{2}(\theta ):\rvert
\text{Squ}_{kk^{\prime }}\rangle =\frac{2q}{(1-q^{2})}(q(k+k^{\prime })+2%
\sqrt{kk^{\prime }}\cos ((k+k^{\prime })\theta )).
\end{equation}%
The detailed derivation of the above expression is in Appendix \ref%
{appendix_variance}. This result also reveals that the squeezed state is not
rotationally symmetric. We also calculate other quantities like $\langle {%
\hat{N}}_{j}\rangle =\langle a_{j}^{\dagger }a_{j}\rangle $ and $\langle
a_{m}^{\dagger }a_{j}\rangle ,$
\begin{equation}
\langle a_{m}^{\dagger }a_{j}\rangle =%
\begin{cases}
\frac{q^{2}}{(1-q^{2})}k & ~~~m=j=k \\
\frac{q^{2}}{(1-q^{2})}k^{\prime } & ~~~m=j=k^{\prime } \\
0 & ~~~\text{others}%
\end{cases}%
.
\end{equation}

The squeezed state is interesting that it tells us that we can create a EPR
pair by squeezing the vacuum. They and the multi-mode entangled states can
be expanded by Young tableau states in the setup here and are very
interesting states in gauge/gravity correspondence. For discussions of
squeezed states in quantum information theory and quantum optics, see for
example \cite{Quantum information,Squeezed01,Squeezed02,Squeezed03}. GHZ
states are also very important in quantum information theory \cite%
{Horodecki:2009zz}.

\section{Discussion}

\renewcommand{\theequation}{7.\arabic{equation}} \setcounter{equation}{0}

\label{sec_Discussion}

We computed a momentum space version of the entanglement spectrum and
entanglement entropy of Young tableau states and one-point functions on
Young tableau states. The Young tableau states are not direct product
states, and they have non-zero entanglement between modes. The entanglement
spectrum and entanglement entropy of general Young tableau states are
obtained. We have also computed the generating functions for one-point
functions on Young tableau states. These physical quantities in the field
theory side are used to measure the topology of the dual spacetime
geometries, such as the number of annuli in the geometries and the existence
of bridge structures which connect different regions of the same spacetime.
Our results indicate that the emergence of the bridge structure is closely
related to the entanglement between modes of the Young tableau states.

On one hand, we can expand a coherent state above the vacuum as the linear
combination of Young tableau states through our explicit expression for the
inner products between coherent states and general Young tableau states.
This is an analog of Fourier transform. On the other hand, the Young tableau
states can be obtained by superposition of coherent states, and we further
presented an integral formula as an inverse transform. Thus we get two sets
of formulas, one is to express coherent states by Young tableau states and
the second is to express Young tableau states by coherent states, and the
relation between these two sets of formulas is an analogy to the Fourier
transform and its inverse transform. These formulas are also of mathematical
interest. Form a physical point of view, we are particularly interested in
expressing the rectangular Young tableau state by coherent states, since the
superposed geometry (see Figure \ref{Figure_droplet} (a)) has a different
topology than the original geometries (see Figure \ref{Figure_droplet} (b,
c)) participating in the superposition. At the same time, the bridge
structure emerged after the topology change, and this is related to the
entanglement between modes of the Young tableau states. We can then
generalize the case of rectangular Young tableau states to other Young
tableau states. This further implies that we can superpose topology trivial
states to get states with complicated geometry and topology in the gravity
side.

One important feature of our system is that these states with different
topologies live in the same Hilbert space, hence one can concretely study
the transition amplitudes between different states from the dual quantum
mechanical system. We analyzed the overlaps between Young tableau states and
coherent states, and carried out in detail for the rectangular tableaux,
corresponding to the geometries with one black annulus. As shown in Section %
\ref{sec_Bound of overlap and entanglement entropy}, we have a refined bound
for the overlap between coherent states and a rectangular Young tableau
state. The overlap between two states differed by this topology change is
exponentially suppressed. Hence to produce a topologically distinct geometry
by superposing coherent states dual to geometries with a trivial topology,
it requires at least an exponentially large number of states in the
superposition. This is essentially a non-perturbative effect in quantum
gravity. It is further found that the norm squared of the overlaps is
bounded above by the inverse powers of the exponential of the entanglement
entropies. Our results put into firmer footing the insights and observations
in \cite{Berenstein:2017abm,Berenstein:2016pcx,Berenstein:2016mxt}.
Incidentally, we also find that the overlap of any state with the coherent
state defined in Section 3, is a symmetric function associated with that
state. This greatly generalized the results in \cite{Berenstein:2017abm}.
And it provides us an approach to analyze the overlap between a coherent
state and an arbitrary state with more complicated topology, like those with
more white rings, which correspond to Young tableaux with more long edges.
Since one can understand the norm squared of the overlap as the transition
probability between a topology trivial state and a state with a distinct
topology, it would be valuable to explore the relation between the
transition probabilities and the geometric and topological properties of the
states. Also, our analysis related the characters of the symmetric groups
\cite{Sagan,James Kerber} to the topologies of the bubbling geometries. We
hope that more physical intuition will give further mathematical insights
into related subjects.

Here, these exponentially large number of states participating in the
superposition have caused the topology change. On the other hand, the
situations with a small number of states participating in the superposition
can be different from the situations with an exponentially large number of
states in the superposition. The former cases would imply relatively bigger
overlaps between individual states. Their differences are also pointed out
in \cite{Almheiri:2016blp} and \cite{Nomura:2016aww} in closely related
discussions.

The theorem in Section 2 and a proposition in Section 3 are analogous to
giant/dual-giant duality. These are dualities between giant gravitons
wrapping AdS directions and dual giant gravitons wrapping internal
directions. They have also appeared in other context, see for example \cite%
{Lin:2012ey,Bissi:2011dc,Caputa:2012yj,Carlson:2011hy,deMelloKoch:2012ck,Koch:2011hb}
in two-point and three-point functions of giant gravitons. Moreover, bearing
the different aspects to look at our system in mind, this giant/dual-giant
duality can have interpretations in other ways. In the droplet picture, this
duality is a particle/hole duality. Since each state is associated with a
symmetric function, this duality can be explained in terms of an involution
on the ring of symmetric functions.

Topology change in bubbling geometries were also discussed in \cite%
{Horava:2005pv,Mosaffa:2006qk}. Other aspects of describing topology on the
gravity side from the field theory side have been put forward, in \cite%
{Chen:2007gh,Koch:2008ah,Lin:2010sba} by using correlation functions for the
states dual to strings on bubbling geometries, in \cite{Brown:2006zk} by
using correlation functions for the states undergoing topology change, and
in \cite{Diaz:2015tda} by using probability theory on the graphs of
representations \cite{Borodin Olshanski}.

To sum over different topologies and geometries are important issues in
quantum gravitational theories, see for example \cite{Berenstein:2017abm,
Horowitz:2006ct,Jafferis:2017tiu,Kiefer:2005uk}. Various other similar
geometries in the context of string theory and quantum gravity have been
analyzed, see for example [67$-$76, 17] and their related discussions. Our
approach may also be related to fuzzball proposal \cite{Mathur:2005ai} and
to 2d Yang-Mills \cite{Dijkgraaf:2005bp}. It would also be good to
understand in more detail the relation to the scenarios of building
spacetime geometries, as proposed in for example \cite%
{VanRaamsdonk:2010pw,Maldacena:2013xja,Rangamani:2016dms}.

Two mode squeezed states and multi-mode entangled states are also discussed,
and the two mode maximally entangled state, the EPR state, is a particular
limit of the squeezed state. They have similarities with those states
appeared in quantum optics and quantum information theory \cite{Quantum
information,Squeezed01,Squeezed02,Squeezed03}. However, our setup provides
another framework to explore their properties. By computing their overlaps
with Young tableau states, we expanded them as linear combinations of Young
tableau states. Maybe we can consider the EPR state in Section 6 as similar
to creating a squeezed state string connecting between mode $k$ and $%
k^{\prime }$. We think that it would be similar to the situations in the
ER=EPR proposal \cite{Maldacena:2013xja}. This proposal \cite%
{Maldacena:2013xja} has related the geometry side \cite{Einstein:1935tc} to
the side of quantum mechanics system \cite{Einstein:1935rr} and conjectured
that entangled black holes are connected by worm hole. It may be interesting
to understand more the relations to this proposal for the squeezed states
and EPR states discussed in Section 6.

The approach here describes measuring the topologies of spacetime from the
dual field theory side, and therefore will be interesting and useful for
understanding the emergence of spacetime structures, see for example \cite%
{Rangamani:2016dms,VanRaamsdonk:2010pw,Horowitz:2006ct,Koch:2009gq}. These
geometries are very explicit and they serve as a good laboratory to perform
quantitative calculations and predictions. Moreover, the system is UV
finite, since it has UV completion in string theory. Various other
discussions on entanglement entropies with bubbling geometries are recently
in for example \cite{Balasubramanian:2017hgy} and it would be interesting to
see the relation to the discussions here.

The entanglement entropies here arise after partial tracing out other
momentum modes living in the momentum space version of the Hilbert space
decomposition. The subsystem in this case, is a region in the momentum
space. Consider high energy UV modes entangled with low energy IR modes.
Then tracing out the high energy modes gives a reduced density matrix for
the low energy modes. Hence in these cases, physics at low energy can still
be sensitive to the details of the physics at high energy. This momentum
space version of the entanglement entropy is similar but slightly different
from the usual real space version of the entanglement entropy, where the
real space version of the Hilbert space decomposition is used and the
subsystems are domains in real space. The two are related by a different
decomposition of the Hilbert space.

\section*{Acknowledgments}

We would like to thank B. Czech, R. de Mello Koch, Q. T. Li, J. Maldacena,
R. Miao, S. Ramgoolam, J. Shock, J. Simon, H. Verlinde, S.-T. Yau, and J. Wu
for discussions and communications. The work was supported in part by NSF
grant DMS-1159412, NSF grant PHY-0937443 and NSF grant PHY-1306313, and in
part by YMSC and Tsinghua University.



\appendix

\section{Overlap of coherent states and Young tableau states}

\renewcommand{\theequation}{A.\arabic{equation}} \setcounter{equation}{0}

\label{appendix_inner product}

The norm-squared $\Vert B_{+,x_{1}}B_{+,x_{2}}\lvert {0}\rangle \Vert ^{2}$
was computed in Eq. (4.44) of \cite{Berenstein:2017abm}. We can generalize
this formula to the inner products of arbitrarily many $B_{+,x_{i}}$, and we
have the following result
\begin{equation}
\langle {0}\rvert (\prod_{i}B_{+,y_{i}}^{\dagger
})(\prod_{i}B_{+,x_{i}})\lvert {0}\rangle =\prod_{i,j}\frac{1}{1-x_{i}\bar{y}%
_{j}}.  \label{inner_product_01}
\end{equation}%
\begin{proof}
	Write
\begin{equation}\prod_{i}B_{+,x_i} \lvert{0}\rangle = \sum_{\lambda} s_{\lambda}(x_1,x_2,\dots) \lvert{\lambda}\rangle,
\end{equation}
then insert it in the above formula
	\begin{align}
	\langle{0}\rvert(\prod_iB_{+,y_i}^\dagger)(\prod_{i}B_{+,x_i})\lvert{0}\rangle & = \langle{0}\rvert(\prod_iB_{+,y_i}^\dagger)\sum_{\lambda} s_{\lambda}(x_1,x_2,\dots) \lvert{\lambda}\rangle   \nonumber\\
	& =  \sum_{\lambda} s_{\lambda}(x_1,x_2,\dots) \langle{0}\rvert(\prod_iB_{+,y_i}^\dagger)\lvert {\lambda} \rangle \nonumber\\
	& = \sum_{\lambda} s_{\lambda}(x_1,x_2,\dots) s_{\lambda}(\bar{y}_1,\bar{y}_2,\dots) \nonumber\\
	& = \prod_{i,j}\frac{1}{1 - x_i\bar{y}_j}.
	\end{align}
In the last line, we used the Cauchy identity.
\end{proof}

Consider the case that we only have one variable $x_{1}$ and $y_{1}=x_{1}$,
then the above formula gives Eq. (4.6) in \cite{Berenstein:2017abm}, and for
two variables $x_{1}=y_{1},\;x_{2}=y_{2}$, our formula gives Eq. (4.44) in
\cite{Berenstein:2017abm}.

It is easy to see the norm of the $\prod_{i}B_{+,x_{i}}\lvert {0}\rangle $
from (\ref{inner_product_01}), when identifying $y_{i}$ with $x_{i}$, hence
\begin{equation}
\parallel \prod_{i}B_{+,x_{i}}\lvert {0}\rangle \parallel ^{2}=\prod_{i,j}%
\frac{1}{1-x_{i}\bar{x}_{j}}.
\end{equation}

There is a dual version of the above formula for operators $B_{-,x_{i}}$,
which gives the same result as above
\begin{equation}
\langle {0}\rvert (\prod_{i}B_{-,y_{i}}^{\dagger
})(\prod_{i}B_{-,x_{i}})\lvert {0}\rangle =\prod_{i,j}\frac{1}{1-x_{i}\bar{y}%
_{j}}.
\end{equation}%
\begin{proof}
	According to Proposition \ref{prop_coh_03},
\begin{equation}
\langle{\lambda^T}\rvert\prod_{i}B_{-,-x_i}\lvert{0} \rangle = s_{\lambda},
\end{equation}
therefore
\begin{equation}
\prod_{i}B_{-,x_i}\lvert{0}\rangle = \sum_{\lambda} s_{\lambda}(-x_1,-x_2,\dots)\lvert{\lambda^T} \rangle.
\end{equation}
Inserting this into the inner product,
	\begin{align}
	\langle{0}\rvert(\prod_iB_{-,y_i}^\dagger)(\prod_{i}B_{-,x_i})\lvert{0}\rangle &= \langle{0}\rvert(\prod_iB_{-,y_i}^\dagger)\sum_{\lambda} s_{\lambda}(-x_1,-x_2,\dots)\lvert{\lambda^T}\rangle \nonumber\\
	&=\sum_{\lambda} s_{\lambda}(-x_1,-x_2,\dots)\langle{0}\rvert(\prod_iB_{-,y_i}^\dagger) \lvert{\lambda^T} \rangle \nonumber\\
	&=\sum_{\lambda} s_{\lambda}(-x_1,-x_2,\dots)s_{\lambda}(-\bar{y}_1,-\bar{y_2},\dots) \nonumber\\
	&= \prod_{i,j}\frac{1}{1 - x_i\bar{y}_j}.
	\end{align}
\end{proof}

There is also a formula for inner product of $B_{+}$ with $B_{-}$, which is
as follows
\begin{equation}
\langle {0}\rvert (\prod_{i}B_{-,y_{i}}^{\dagger
})(\prod_{i}B_{+,x_{i}})\lvert {0}\rangle =\prod_{i,j}(1-x_{i}\bar{y}_{j}).
\end{equation}%
\begin{proof}
\begin{align}
\langle{0}\rvert(\prod_iB_{-,y_i}^\dagger)(\prod_{i}B_{+,x_i})\lvert {0} \rangle & = \langle{0}\rvert(\prod_iB_{-,y_i}^\dagger)\sum_{\lambda} s_{\lambda}(x_1,x_2,\dots) \lvert {\lambda} \rangle  \nonumber\\
& = \sum_{\lambda} s_{\lambda}(x_1,x_2,\dots) \langle{0}\rvert(\prod_iB_{-,y_i}^\dagger)\lvert{\lambda}\rangle  \nonumber\\
& = \sum_{\lambda} s_{\lambda}(x_1,x_2,\dots) s_{\lambda^T}(-\bar{y}_1,-\bar{y}_2,\dots)  \nonumber\\
& = \prod_{i,j}(1 - x_i\bar{y}_j).
\end{align}
\end{proof}

These formulas also provide the normalizations of the two-point functions
for coherent states.

\section{Bound of overlap}

\renewcommand{\theequation}{B.\arabic{equation}} \setcounter{equation}{0} %
\renewcommand{\theprop}{B.\arabic{prop}} \setcounter{prop}{0}

\label{appendix_bound of overlap}

As described in Section \ref{sec_Bound of overlap and entanglement entropy},
we will analyze the function%
\begin{equation}
f(x_{1},x_{2},\dots ,x_{M}):=\frac{|\langle {\lambda }\rvert
\prod_{i=1}^{M}B_{+,x_{i}}\lvert {0}\rangle |^{2}}{\Vert
\prod_{i=1}^{M}B_{+,x_{i}}\lvert {0}\rangle \Vert ^{2}}.
\end{equation}%
We consider the case with the ${\Box _{LM}}$ state which corresponds to $n_{%
\mathrm{anti-edge}}=1$.

\begin{prop}
For a rectangular Young tableau with $M$ rows and $L$ columns, denoted by ${%
\Box _{LM}}$, and for coherent states $\prod_{i=1}^{M}B_{+,x_{i}}\lvert {0}%
\rangle $ with arbitrary $\{x_{i}\}$, the supremum of their normalized inner
products is given by%
\begin{equation}
\sup_{\{x_{i}\}}\left\vert \frac{\langle {\Box _{LM}}\rvert
\prod_{i=1}^{M}B_{+,x_{i}}\lvert {0}\rangle }{\Vert
\prod_{i=1}^{M}B_{+,x_{i}}\lvert {0}\rangle \Vert }\right\vert ^{2}=(\frac{M%
}{L+M})^{M}(\frac{L}{L+M})^{L}.  \label{prop_inner_product_01}
\end{equation}
\end{prop}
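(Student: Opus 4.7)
The plan is to combine the closed-form evaluations from Section~3 with a symmetrization argument that reduces the problem to a one-variable optimization. By Proposition~\ref{proposition_coh_01}, $\langle \Box_{LM}\rvert\prod_{i=1}^M B_{+,x_i}\lvert 0\rangle = s_{\Box_{LM}}(x_1,\dots,x_M) = (x_1\cdots x_M)^L$, while Eq.~(\ref{inner_product_02}) gives $\Vert\prod_i B_{+,x_i}\lvert 0\rangle\Vert^2 = \prod_{i,j}(1-x_i\bar x_j)^{-1}$. Writing $x_i = r_i e^{i\theta_i}$, the task becomes to compute
\[
\sup_{\{x_i\}} f(x_1,\dots,x_M) := \sup_{\{x_i\}} \Bigl(\prod_{i=1}^M r_i^{2L}\Bigr)\prod_{i,j=1}^M (1-x_i\bar x_j),
\]
with the sup over the open polydisc $\{|x_i|<1\}$, outside of which either the denominator $\Vert Coh\Vert^2$ blows up or the numerator vanishes; since $f$ is continuous and vanishes on the boundary, the sup is attained in the interior.

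Next I would argue that the supremum is attained on the maximally symmetric orbit $x_i = r\,\omega^{i-1}$, with $\omega = e^{2\pi i/M}$ and a common modulus $r\in(0,1)$. Because $f$ is invariant under the symmetric group $S_M$ permuting the $x_i$'s and under the simultaneous $U(1)$ rotation $x_i\mapsto e^{i\alpha}x_i$, one can pass to this orbit space. Expanding $\log\prod_{i,j}(1-x_i\bar x_j) = -\sum_{k\ge 1}\lvert p_k\rvert^2/k$ with $p_k = \sum_i x_i^k$, the angular part (for fixed radii) is optimized when $p_1 = \cdots = p_{M-1} = 0$; by Newton's identities this forces the $x_i$'s to be the $M$-th roots of a common number, hence equally spaced on a circle. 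A symmetry/concavity argument in the symmetric function $\sum_i \log r_i^2$ versus the remaining log-potential then pins down all $r_i$ to a common value.

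With $x_i = r\,\omega^{i-1}$, the cyclotomic identity $\prod_{k=0}^{M-1}(1-r^2\omega^k) = 1-r^{2M}$, together with the fact that $(i-j)\bmod M$ takes each value in $\{0,\dots,M-1\}$ exactly $M$ times over ordered pairs, gives
\[
\prod_{i,j=1}^M (1-x_i\bar x_j) = \Bigl(\prod_{k=0}^{M-1}(1-r^2\omega^k)\Bigr)^M = (1-r^{2M})^M,
\]
so $f$ collapses to $(1-r^{2M})^M r^{2LM}$. Setting $u = r^{2M}\in(0,1)$ reduces the problem to maximizing $h(u) = (1-u)^M u^L$; its derivative $(1-u)^{M-1}u^{L-1}[L(1-u) - Mu]$ vanishes uniquely at $u_* = L/(L+M)$, yielding $h_{\max} = (M/(L+M))^M (L/(L+M))^L$, matching (\ref{prop_inner_product_01}).

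The main obstacle is the symmetrization in the second step: the permutation and rotation symmetries only guarantee that critical orbits are symmetric, not that the global maximum lies on the most symmetric one. This is subtle because the power-sum energy $\sum_k \lvert p_k\rvert^2/k$ exhibits a genuine trade-off---the uniform configuration minimizes every $\lvert p_k\rvert^2$ with $M\nmid k$ but simultaneously maximizes the terms with $M\mid k$---so one must show the weighted sum is nevertheless globally minimized there. This can be done either via a logarithmic-potential argument on the angular coordinates (reflecting the physical intuition in Figure~\ref{Figure_Bump} that the most axially symmetric bump configuration maximizes overlap with the axially symmetric state $\rvert \Box_{LM}\rangle$), or by a direct Lagrange-multiplier analysis on the compact orbit space. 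Once this symmetric reduction is established, the remaining steps are routine algebra.
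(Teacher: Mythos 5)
Your overall route coincides with the paper's (Appendix B): reduce the normalized overlap to $f=(\prod_i r_i^{2})^{L}\prod_{i,j}(1-x_i\bar x_j)$ using Proposition \ref{proposition_coh_01} and the Cauchy-identity norm (\ref{inner_product_02}), argue that the optimum sits at $x_i=r\,e^{2\pi i(l-1)/M}$, collapse $\prod_{i,j}(1-x_i\bar x_j)$ to $(1-r^{2M})^{M}$ via the roots-of-unity identity, and finish with the one-variable maximization of $(1-u)^{M}u^{L}$ at $u_*=L/(L+M)$. All of those computations are correct, and your power-sum rewriting $\log\prod_{i,j}(1-x_i\bar x_j)=-\sum_{k\ge 1}|p_k|^2/k$ is a cleaner bookkeeping device than the paper's direct $\theta_l$- and $r$-derivatives.

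The genuine gap is the symmetrization step, which you correctly flag as "the main obstacle" but then do not close: you assert that the angular part at fixed radii is optimized when $p_1=\cdots=p_{M-1}=0$ and defer the proof to an unspecified "logarithmic-potential argument" or "Lagrange-multiplier analysis." As stated the step is not even self-consistent, because for fixed radii that are \emph{not} all equal the condition $p_1=\cdots=p_{M-1}=0$ is unattainable (by Newton's identities it forces all $|x_i|$ equal), so the proposed angular optimization does not decouple from the radial one; and the trade-off you yourself identify (the uniform configuration maximizes $|p_{mM}|^2$ while killing the other power sums) means the conclusion is not automatic. The paper does not give a fully general argument either, but it does more than you do: it computes the angular stationarity condition under the equal-radius ansatz and identifies its solutions, and it explicitly rules out the competing family of points distributed over two (and, it claims, more) concentric circles by the comparison (\ref{appendix_B_01})--(\ref{appendix_B_02}). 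To turn your sketch into a proof you would need to actually establish that the equally spaced, equal-modulus configuration globally minimizes $\sum_k |p_k|^2/k - L\sum_i\log r_i^2$, e.g.\ by the log-gas/Fekete-point argument on the circle combined with an explicit comparison against multi-circle configurations as in the appendix; without that, the identity you prove is only that the symmetric critical point has value $(M/(L+M))^{M}(L/(L+M))^{L}$, not that this is the supremum.
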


\newpage
\begin{proof}
Writing $x_{i}=r_{i}e^{\theta _{i}}$, and inserting them into the above
formula we have $f=f_{1}f_{2}=\prod_{i,j}(1-r_{i}r_{j}e^{i(\theta
_{i}-\theta _{j})})\times (\prod_{i=1}^{M}r_{i}^{2})^{L}$, we can then write
the first term as
\begin{align}
f_{1}& :=\prod_{i,j}(1-r_{i}r_{j}e^{i(\theta _{i}-\theta _{j})})  \notag \\
& =\prod_{i}(1-r_{i}^{2})\prod_{1\leq i<j\leq M}(1-2r_{i}r_{j}\cos (\theta
_{i}-\theta _{j})+r_{i}^{2}r_{j}^{2}).
\end{align}%
Also we write the second term $f_{2}:=(\prod_{i=1}^{M}r_{i}^{2})^{L}$.

Note that the whole expression is symmetric under permutation of $r_{i},r_{j}
$. Because of the axial symmetry of the Young tableau states, the
distribution must be uniform along the angular direction, otherwise it will
not correspond to axial symmetry. Therefore, if the maximum is unique, then
it must be at $r_{i}=r,\;\forall i$. This assumption needs to be verified,
and later we will verify this. Note also that the whole expression is
invariant under $\theta _{i}\rightarrow \theta _{i}+\Delta \theta $,
therefore we are free to choose $\theta _{1}=0$. Under the assumption that $%
r_{i}=r$, the expression can be simplified as
\begin{equation}
f=(1-r^{2})^{M}\prod_{1\leq i<j\leq M}(1-2r^{2}\cos (\theta _{i}-\theta
_{j})+r^{4})\times r^{2LM}.
\end{equation}

First consider the derivative with respect to $\theta _{l}$, $\frac{\partial
\log f}{\partial \theta _{l}}$, we get
\begin{equation}
\frac{\partial \log f}{\partial \theta _{l}}=\sum_{j=1}^{M}\frac{2r^{2}\sin
(\theta _{l}-\theta _{i})}{1-2r^{2}\cos (\theta _{l}-\theta _{i})+r^{4}}=0.
\end{equation}%
Solution to this equation is $\theta _{l}\equiv 0$ or $\theta _{l}=\frac{%
2\pi (l-1)}{M}$. For the first possibility, $\theta _{l}\equiv 0$. Because
in this case, $f_{1}=\prod_{i,j}(1-r_{i}r_{j})$, every factor is smaller
than 1, this cannot be a maximum. So we are left with the second solution $%
\theta _{l}=\frac{2\pi (l-1)}{M}$.

Then we consider determining $r$ that maximizes $f$. Consider the derivative
$\frac{\partial \log f}{\partial r}$, we have
\begin{equation}
\frac{\partial \log f}{\partial r}=\frac{2LM}{r}-\frac{2rM}{1-r^{2}}%
-4r\sum_{k=1}^{\infty }r^{2k-2}\sum_{1\leq l<j\leq M}\Re \left( e^{i\frac{%
2\pi k(j-l)}{M}}\right) .
\end{equation}%
Then if $k\equiv 0\mod M$, then $\sum_{1\leq l<j\leq M}\Re \left( e^{i\frac{%
2\pi k(j-l)}{M}}\right) =\frac{M(M-1)}{2}$. If $k\neq 0\mod M$, $\sum_{1\leq
l<j\leq M}\Re \left( e^{i\frac{2\pi k(j-l)}{M}}\right) =-\frac{M}{2}$.
Therefore we have
\begin{equation}
\frac{\partial \log f}{\partial r}=\frac{2LM}{r}-\frac{2M^{2}r^{2M-1}}{%
1-r^{2M}}=0.
\end{equation}%
Then $r_{0}=(\frac{L}{L+M})^{\frac{1}{2M}}$ solves this. Therefore a maximum
will be obtained at $\theta _{l}=\frac{2\pi (l-1)}{M}$, $r_{l}=r_{0}=(\frac{L%
}{L+M})^{\frac{1}{2M}}$. This configuration is very interesting, it means
that $x_{i}$ is uniformly distributed at the circle $r=r_{0}$.

Now evaluate $f$ at this point%
\begin{equation}
f=f_{1}f_{2}=\prod_{i,j}(1-r_{0}^{2}e^{i(\theta _{i}-\theta _{j})})\times
r_{0}^{2LM},
\end{equation}%
in which $f_{2}=r_{0}^{2LM}$. Observe that for any fixed $i\in \mathbb{Z}$,
when $j$ runs through $1,2,\dots ,M$, $\;i-j\mod M$ will also run through $%
1,2\dots ,M$. Therefore $e^{i\frac{2\pi (i-j)}{M}}$ will run through $e^{i%
\frac{2\pi }{M}},e^{i\frac{2\pi \times 2}{M}},\dots ,e^{i\frac{2\pi M}{M}}$.
Define a set $\Theta =\{\frac{2\pi l}{M}|l=1,2,\dots ,M\}$. Then $f_{1}$ can
be written as
\begin{equation}
f_{1}=(\prod_{\phi \in \Theta }(1-r_{0}^{2}e^{i\phi }))^{M}=\exp
(M\sum_{k=1}^{\infty }(-\sum_{\phi \in \Theta }\frac{r_{0}^{2k}e^{ik\phi }}{k%
})).
\end{equation}%
When $k\neq 0\mod M$, when $l$ runs through $1,2,\dots ,M$, $kl\mod M$ also
run through $1,2,\dots ,M$, therefore $\sum_{\phi \in \Theta }e^{ik\phi }=0$
for $k\neq 0\mod M$. When $k\equiv 0\mod M$, $e^{ik\phi }=1$, and $%
\sum_{\phi \in \Theta }e^{ik\phi }=M$. Therefore the above expression will
be
\begin{equation}
f_{1}=\exp (M\sum_{k=1}^{\infty }(-M\frac{r_{0}^{2kM}}{kM}%
))=(1-r_{0}^{2M})^{M}.
\end{equation}%
Then we maximize $f$ with respect to $r_{0}$.

And the final result is%
\begin{equation}
\sup f=(\frac{M}{L+M})^{M}(\frac{L}{L+M})^{L}.  \label{bound_01_}
\end{equation}

However we can't just conclude that the state with most overlap is obtained
at $x_{l}$ uniformly distributed around one circle in the complex plane. It
is possible that $x_{l}$ can uniformly distribute around two or more circles
but the angular positions coincide. But we can exclude this possibility by
doing computation. Let's compare the situations that $x_{l}$ uniformly
distributed around one circle and two circles. For $x_{l}$ distributed
around two circles write $M=2M^{\prime }$, $x_{l}=re^{i\frac{2\pi l}{%
M^{\prime }}}$ for $l=1,\dots ,M^{\prime }$ and $x_{l}=Re^{i\frac{2\pi l}{%
M^{\prime }}}$ for $l=M^{\prime }+1,\dots ,2M^{\prime }=M$. In this case, the norm squared of the overlap will be less than
\begin{align}
& (1-r^{2M^{\prime }})^{M^{\prime }}(1-R^{2M^{\prime }})^{M^{\prime
}}(r^{M^{\prime }}R^{M^{\prime }})^{2L}  \notag \\
& =(1-r^{2M^{\prime }})^{M^{\prime }}r^{2M^{\prime }L}(1-R^{2M^{\prime
}})^{M^{\prime }}R^{2M^{\prime }L}  \notag \\
& \leq ((\frac{M^{\prime }}{L+M^{\prime }})^{M^{\prime }}(\frac{L}{%
L+M^{\prime }})^{L})^{2}=(\frac{M}{2L+M})^{M}(\frac{2L}{2L+M})^{2L}. \label{appendix_B_01}
\end{align}%
And we have inequality
\begin{equation}
(\frac{M}{2L+M})^{M}(\frac{2L}{2L+M})^{2L}<(\frac{M}{L+M})^{M}(\frac{L}{L+M}%
)^{L}. \label{appendix_B_02}
\end{equation}%
Similar calculation can also be done for other situations like distribution around three or
more circles, and we have verified that all these situations have similar behavior as in the above case (\ref{appendix_B_01}) and (\ref{appendix_B_02}). The corresponding value is always smaller than the case for $x_l$ distribute uniformly around one circle. Hence, in conclusion, (\ref%
{bound_01_}) is a global maximum.
\end{proof}

This is the most strict upper bound, in the sense that there is a state that
can actually saturate this upper bound. We can consider the behavior of this
upper bound at both large $L$ and large $M$, and the above formula (\ref%
{bound_01_}) gives rise to
\begin{equation}
\sup f\leq f_{\text{bound}}=2^{-(L+M)},
\end{equation}%
where $f_{\text{bound}}$ denotes an upper bound of $f~$that is not
necessarily a supremum.

This upper bound has significance in the superposition of coherent states to
form a rectangular Young tableau state. The number of coherent states
participating in the superposition has consequently a lower bound. On one
hand, there are a large amount of states participating in the superposition.
On the other hand, the individual overlap is very small. This is reminiscent
to the observation in \cite{Berenstein:2017abm}.

\section{Chiral field and its variances on squeezed states}

\renewcommand{\theequation}{C.\arabic{equation}} \setcounter{equation}{0}

\label{appendix_variance}

Consider the field%
\begin{equation}
\hat{\phi}(\theta )=\sum_{m>0}a_{m}\exp (-im\theta )+a_{m}^{\dagger }\exp
(im\theta ).
\end{equation}%
Now we compute the variance $\langle $Squ$_{kk^{\prime }}\lvert :\hat{\phi}%
^{2}(\theta ):\rvert $Squ$_{kk^{\prime }}\rangle $, where $:\hat{\phi}%
^{2}(\theta ):$ is the normal ordering of $\hat{\phi}^{2}(\theta )$. First
\begin{equation}
:\hat{\phi}^{2}(\theta ):=\sum_{j,m>0}(a_{m}a_{j}\exp (-i(m+j)\theta
)+h.c.)+\sum_{j,m>0}(a_{m}^{\dagger }a_{j}\exp (i(m-j)\theta )+h.c.).
\end{equation}

This gives
\begin{align}
& \langle \text{Squ}_{kk^{\prime }}\lvert :\hat{\phi}^{2}(\theta ):\rvert
\text{Squ}_{kk^{\prime }}\rangle  \notag \\
& =2\Re \left( 2\langle \text{Squ}_{kk^{\prime }}\lvert a_{k}a_{k^{\prime
}}\exp (-i(k+k^{\prime })\theta )\rvert \text{Squ}_{kk^{\prime }}\rangle
+\sum_{j,m>0}\langle \text{Squ}_{kk^{\prime }}\lvert a_{m}^{\dagger
}a_{j}\exp (i(m-j)\theta )\rvert \text{Squ}_{kk^{\prime }}\rangle \right) .
\notag \\
&
\end{align}

While the first term can be computed to be
\begin{align}
& 4\Re \left( \langle \text{Squ}_{kk^{\prime }}\lvert a_{k}a_{k^{\prime
}}\exp (-i(k+k^{\prime })\theta )\rvert \text{Squ}_{kk^{\prime }}\rangle
\right)  \notag \\
& =4(1-q^{2})\Re \left( \exp (-i(k+k^{\prime })\theta )\sum_{l}(\frac{q}{%
\sqrt{kk^{\prime }}})^{l}\frac{1}{l!}(\frac{q}{\sqrt{kk^{\prime }}})^{l+1}%
\frac{kk^{\prime }(l+1)^{2}}{(l+1)!}(kk^{\prime })^{l}l!^{2}\right)  \notag
\\
& =4(1-q^{2})\cos ((k+k^{\prime })\theta )\left( \sum_{l}q^{2l+1}\sqrt{%
kk^{\prime }}(l+1)\right)  \notag \\
& =\frac{4q}{(1-q^{2})}\sqrt{kk^{\prime }}\cos ((k+k^{\prime })\theta ).
\end{align}

The second term is
\begin{align}
& 2\Re \left( \sum_{j,m>0}\langle \text{Squ}_{kk^{\prime }}\lvert
a_{m}^{\dagger }a_{j}\exp (i(m-j)\theta )\rvert \text{Squ}_{kk^{\prime
}}\rangle \right) =2\Re \left( \sum_{m>0}\langle \text{Squ}_{kk^{\prime
}}\lvert a_{m}^{\dagger }a_{m}\rvert \text{Squ}_{kk^{\prime }}\rangle \right)
\notag \\
& =2\Re \left( \langle \text{Squ}_{kk^{\prime }}\lvert a_{k}^{\dagger
}a_{k}\rvert \text{Squ}_{kk^{\prime }}\rangle \right) +2\Re \left( \langle
\text{Squ}_{kk^{\prime }}\lvert a_{k^{\prime }}^{\dagger }a_{k^{\prime
}}\rvert \text{Squ}_{kk^{\prime }}\rangle \right)  \notag \\
& =2(1-q^{2})\left( \sum_{l}q^{2l}(k+k^{\prime })l\right)  \notag \\
& =\frac{2q^{2}}{(1-q^{2})}(k+k^{\prime }).
\end{align}

Thus the variance is
\begin{equation}
\langle \text{Squ}_{kk^{\prime }}\lvert :\hat{\phi}^{2}(\theta ):\rvert
\text{Squ}_{kk^{\prime }}\rangle =\frac{2q}{(1-q^{2})}(q(k+k^{\prime })+2%
\sqrt{kk^{\prime }}\cos ((k+k^{\prime })\theta )).
\label{variance_squeezed_01}
\end{equation}

We can also calculate higher order variances, $\langle \text{Squ}%
_{kk^{\prime }}\lvert :\hat{\phi}^{p}(\theta ):\rvert \text{Squ}_{kk^{\prime
}}\rangle $, using a generating function technique. Define $M(t)=\langle
\text{Squ}_{kk^{\prime }}\lvert :\exp (t\hat{\phi}(\theta )):\rvert \text{Squ%
}_{kk^{\prime }}\rangle $. We make use of the fact that $:\exp (t\hat{\phi}%
(\theta )):=\prod_{m>0}\exp (ta_{m}^{\dagger }e^{im\theta })\exp
(ta_{m}e^{-im\theta })$. We calculate that
\begin{equation}
M(t)=\exp \left[ \frac{q}{1-q^{2}}t^{2}(q(k+k^{\prime })+2\sqrt{kk^{\prime }}%
\cos (k+k^{\prime })\theta )\right] .  \label{generating_02}
\end{equation}%
Using the relation
\begin{equation}
M(t)=\sum_{p}\frac{t^{p}}{p!}\langle \text{Squ}_{kk^{\prime }}\lvert :\hat{%
\phi}^{p}(\theta ):\rvert \text{Squ}_{kk^{\prime }}\rangle ,
\end{equation}%
and expanding (\ref{variance_squeezed_01}), we have that
\begin{align}
& \langle \text{Squ}_{kk^{\prime }}\lvert :{\hat{\phi}}^{2l-1}(\theta
):\rvert \text{Squ}_{kk^{\prime }}\rangle =0,  \notag \\
& \langle \text{Squ}_{kk^{\prime }}\lvert :\hat{\phi}^{2l}(\theta ):\rvert
\text{Squ}_{kk^{\prime }}\rangle =\frac{(2l)!}{l!}\left( \frac{q}{1-q^{2}}%
(q(k+k^{\prime })+2\sqrt{kk^{\prime }}\cos (k+k^{\prime })\theta )\right)
^{l}.
\end{align}%
For $l=1$, this gives the above variance formula (\ref{variance_squeezed_01}%
).



\begin{thebibliography}{99}
\bibitem{Maldacena:1997re} J.~M.~Maldacena, ``The Large N limit of
superconformal field theories and supergravity,'' Adv.\ Theor.\ Math.\
Phys.\ \textbf{2} (1998) 231 
[hep-th/9711200]. 

\bibitem{Gubser:1998bc} S.~S.~Gubser, I.~R.~Klebanov and A.~M.~Polyakov,
``Gauge theory correlators from noncritical string theory,'' Phys.\ Lett.\ B
\textbf{428} (1998) 105 
[hep-th/9802109]. 

\bibitem{Witten:1998qj} E.~Witten, ``Anti-de Sitter space and holography,''
Adv.\ Theor.\ Math.\ Phys.\ \textbf{2} (1998) 253 [hep-th/9802150].

\bibitem{Rangamani:2016dms} M.~Rangamani and T.~Takayanagi, ``Holographic
Entanglement Entropy,'' arXiv:1609.01287 [hep-th].

\bibitem{VanRaamsdonk:2010pw} M.~Van Raamsdonk, ``Building up spacetime with
quantum entanglement,'' Gen.\ Rel.\ Grav.\ \textbf{42} (2010) 2323 [Int.\
J.\ Mod.\ Phys.\ D \textbf{19} (2010) 2429]
[arXiv:1005.3035 [hep-th]].

\bibitem{Horowitz:2006ct} G.~T.~Horowitz and J.~Polchinski,
\textquotedblleft Gauge/gravity duality,\textquotedblright\ In: D. Oriti,
Approaches to quantum gravity, Cambridge University Press [gr-qc/0602037].

\bibitem{Koch:2009gq} R.~de Mello Koch and J.~Murugan, ``Emergent
Spacetime,'' arXiv:0911.4817 [hep-th]. 

\bibitem{Lin:2004nb} H.~Lin, O.~Lunin and J.~M.~Maldacena, ``Bubbling AdS
space and 1/2 BPS geometries,'' JHEP \textbf{0410} (2004) 025
[hep-th/0409174]. 

\bibitem{Berenstein:2004kk} D.~Berenstein, ``A Toy model for the AdS/CFT
correspondence,'' JHEP \textbf{0407} (2004) 018
[hep-th/0403110]. 

\bibitem{Corley:2001zk} S.~Corley, A.~Jevicki and S.~Ramgoolam, ``Exact
correlators of giant gravitons from dual N=4 SYM theory,'' Adv.\ Theor.\
Math.\ Phys.\ \textbf{5} (2002) 809 [hep-th/0111222].

\bibitem{Berenstein:2017abm} D.~Berenstein and A.~Miller, ``Superposition
induced topology changes in quantum gravity,'' arXiv:1702.03011 [hep-th].

\bibitem{Diaz:2015tda} P.~Diaz, H.~Lin and A.~Veliz-Osorio, ``Graph duality
as an instrument of Gauge-String correspondence,'' J.\ Math.\ Phys.\ \textbf{%
57} (2016) no.5, 052302 
[arXiv:1505.04837 [hep-th]]. 

\bibitem{Brown:2006zk} T.~W.~Brown, R.~de Mello Koch, S.~Ramgoolam and
N.~Toumbas, ``Correlators, Probabilities and Topologies in N=4 SYM,'' JHEP
\textbf{0703} (2007) 072 
[hep-th/0611290]. 

\bibitem{Skenderis:2007yb} K.~Skenderis and M.~Taylor, ``Anatomy of bubbling
solutions,'' JHEP \textbf{0709} (2007) 019
[arXiv:0706.0216 [hep-th]].

\bibitem{Christodoulou:2016nej} A.~Christodoulou and K.~Skenderis,
``Holographic Construction of Excited CFT States,'' JHEP \textbf{1604}
(2016) 096 
[arXiv:1602.02039 [hep-th]]. 

\bibitem{Balasubramanian:2007qv} V.~Balasubramanian, B.~Czech, V.~E.~Hubeny,
K.~Larjo, M.~Rangamani and J.~Simon, ``Typicality versus thermality: An
Analytic distinction,'' Gen.\ Rel.\ Grav.\ \textbf{40} (2008) 1863
[hep-th/0701122]. 

\bibitem{Hawking:1979zw} S.~W.~Hawking, ``Space-Time Foam,'' Nucl.\ Phys.\ B
\textbf{144} (1978) 349. 

\bibitem{Horowitz:1990qb} G.~T.~Horowitz, ``Topology change in classical and
quantum gravity,'' Class.\ Quant.\ Grav.\ \textbf{8} (1991) 587.

\bibitem{Dowker:2002hm} F.~Dowker, ``Topology change in quantum gravity,''
arXiv:gr-qc/0206020. 

\bibitem{Mathur:2005ai} S.~D.~Mathur, ``The Quantum structure of black
holes,'' Class.\ Quant.\ Grav.\ \textbf{23} (2006) R115
[hep-th/0510180]. 

\bibitem{Quantum information} M. A. Nielsen, I. L. Chuang, ``Quantum
computation and quantum information,'' Cambridge University Press,
Cambridge, United Kingdom, 2000.

\bibitem{Ramgoolam:2008yr} S.~Ramgoolam, ``Schur-Weyl duality as an
instrument of Gauge-String duality,'' AIP Conf.\ Proc.\ \textbf{1031} (2008)
255 
[arXiv:0804.2764 [hep-th]]. 

\bibitem{Sagan} B. E. Sagan, ``The symmetric group. Representations,
combinatorial algorithms, and symmetric functions,'' Graduate Texts in
Mathematics \textbf{203}, Springer-Verlag, New York, 2001.

\bibitem{James Kerber} G. James, A. Kerber, ``The representation theory of
the symmetric group,'' Addison-Wesley, Reading, Massachusetts, 1981.

\bibitem{Fulton} W. Fulton, ``Young tableaux,'' London Mathematical Society
Student Texts \textbf{35}, Cambridge University Press, Cambridge, 1997.

\bibitem{Goldschmidt} D. M. Goldschmidt, ``Group characters, symmetric
functions, and the Hecke algebra,'' University Lecture Series \textbf{4},
American Mathematical Society, Providence, RI, 1993.

\bibitem{Corley:2002mj} S.~Corley and S.~Ramgoolam, ``Finite factorization
equations and sum rules for BPS correlators in N=4 SYM theory,'' Nucl.\
Phys.\ B \textbf{641} (2002) 131 
[hep-th/0205221].

\bibitem{Kristjansen:2002bb} C.~Kristjansen, J.~Plefka, G.~W.~Semenoff and
M.~Staudacher, ``A New double scaling limit of N=4 superYang-Mills theory
and pp-wave strings,'' Nucl.\ Phys.\ B \textbf{643} (2002) 3
[hep-th/0205033].

\bibitem{Koch:2008cm} R.~de Mello Koch, N.~Ives and M.~Stephanou,
``Correlators in Nontrivial Backgrounds,'' Phys.\ Rev.\ D \textbf{79} (2009)
026004 
[arXiv:0810.4041 [hep-th]].

\bibitem{Caputa:2014vaa} P.~Caputa, M.~Nozaki and T.~Takayanagi,
``Entanglement of local operators in large-N conformal field theories,''
PTEP \textbf{2014} (2014) 093B06 
[arXiv:1405.5946 [hep-th]]. 

\bibitem{Balasubramanian:2005mg} V.~Balasubramanian, J.~de Boer, V.~Jejjala
and J.~Simon, ``The Library of Babel: On the origin of gravitational
thermodynamics,'' JHEP \textbf{0512} (2005) 006
[hep-th/0508023]. 

\bibitem{Balasubramanian:2007zt} V.~Balasubramanian, B.~Czech, K.~Larjo,
D.~Marolf and J.~Simon, ``Quantum geometry and gravitational entropy,'' JHEP
\textbf{0712} (2007) 067 
[arXiv:0705.4431 [hep-th]].

\bibitem{Berenstein:2016pcx} D.~Berenstein and A.~Miller, ``Topology and
geometry cannot be measured by an operator measurement in quantum gravity,''
arXiv:1605.06166 [hep-th]. 

\bibitem{Berenstein:2016mxt} D.~Berenstein and A.~Miller, ``Reconstructing
spacetime from the hologram, even in the classical limit, requires physics
beyond the Planck scale,'' Int.\ J.\ Mod.\ Phys.\ D \textbf{25} (2016)
no.12, 1644012 
[arXiv:1605.05288 [hep-th]]. 

\bibitem{Poisson} P. Harremoes, ``Binomial and Poisson distributions as
maximum entropy distributions,'' IEEE Transactions on Information Theory
\textbf{47} (2001), No. 5, 2039.

\bibitem{Zhang:1990fy} W.~M.~Zhang, D.~H.~Feng and R.~Gilmore, ``Coherent
states: Theory and some Applications,'' Rev.\ Mod.\ Phys.\ \textbf{62}
(1990) 867. 

\bibitem{Grant:2005qc} L.~Grant, L.~Maoz, J.~Marsano, K.~Papadodimas and
V.~S.~Rychkov, ``Minisuperspace quantization of 'Bubbling AdS' and free
fermion droplets,'' JHEP \textbf{0508} (2005) 025
[hep-th/0505079]. 

\bibitem{Mandal:2005wv} G.~Mandal, ``Fermions from half-BPS supergravity,''
JHEP \textbf{0508} (2005) 052 
[hep-th/0502104]. 

\bibitem{Takayama:2005yq} Y.~Takayama and A.~Tsuchiya, ``Complex matrix
model and fermion phase space for bubbling AdS geometries,'' JHEP \textbf{%
0510} (2005) 004 
[hep-th/0507070]. 

\bibitem{Vazquez:2006id} S.~E.~Vazquez, ``Reconstructing half BPS space-time
metrics from matrix models and spin chains,'' Phys.\ Rev.\ D \textbf{75}
(2007) 125012 
[hep-th/0612014].

\bibitem{Chong:2004ce} Z.-W.~Chong, H.~Lu and C.~N.~Pope, ``BPS geometries
and AdS bubbles,'' Phys.\ Lett.\ B \textbf{614} (2005) 96
[hep-th/0412221]. 

\bibitem{Chen:2007du} B.~Chen, S.~Cremonini, A.~Donos, F.~L.~Lin, H.~Lin,
J.~T.~Liu, D.~Vaman and W.~Y.~Wen, ``Bubbling AdS and droplet descriptions
of BPS geometries in IIB supergravity,'' JHEP \textbf{0710} (2007) 003
[arXiv:0704.2233 [hep-th]].

\bibitem{Liu:2007xj} J.~T.~Liu, H.~Lu, C.~N.~Pope and J.~F.~Vazquez-Poritz,
``Bubbling AdS black holes,'' JHEP \textbf{0710} (2007) 030
[hep-th/0703184]. 

\bibitem{Gentle:2013fma} S.~A.~Gentle and M.~Rangamani, ``Holographic
entanglement and causal information in coherent states,'' JHEP \textbf{1401}
(2014) 120 
[arXiv:1311.0015 [hep-th]]. 

\bibitem{Uhlmann} A. Uhlmann, ``The transition probability in the state
space of a $*$-algebra,'' Rep. Math. Phys. \textbf{9} (1976) 273.

\bibitem{Prob} T. Jiang and S. Matsumoto, ``Moments of traces of circular
beta-ensembles,'' The Annals of Probability \textbf{43} (2015), No. 6,
3279--3336.

\bibitem{Squeezed01} H. Fan and G. Yu, ``Three-mode squeezed vacuum state in
Fock space as an entangled state,'' Phy. Rev. A \textbf{65} (2002) 033829.

\bibitem{Squeezed02} X. Xu, ``The Squeezing Effect of Three-Mode Operator as
an Extension from Two-Mode Squeezing Operator,'' Int. J. Theor. Phys.
\textbf{51} (2012) 2056.

\bibitem{Squeezed03} Z. Shaterzadeh-Yazdi, P. Turner and B. Sanders,
``SU(1,1) symmetry of multimode squeezed states,'' J. Phys. A: Math. Theor.
\textbf{41} (2008), no. 5, 055309.

\bibitem{Horodecki:2009zz} R.~Horodecki, P.~Horodecki, M.~Horodecki, et al.
``Quantum entanglement,'' Rev. Mod. Phys. \textbf{81} (2009) 865
[quant-ph/0702225]. 

\bibitem{Almheiri:2016blp} A.~Almheiri, X.~Dong and B.~Swingle, ``Linearity
of Holographic Entanglement Entropy,'' JHEP \textbf{1702} (2017) 074
[arXiv:1606.04537 [hep-th]]. 

\bibitem{Nomura:2016aww} Y.~Nomura, N.~Salzetta, F.~Sanches and
S.~J.~Weinberg, ``Spacetime Equals Entanglement,'' Phys.\ Lett.\ B \textbf{%
763} (2016) 370 
[arXiv:1607.02508 [hep-th]].

\bibitem{Lin:2012ey} H.~Lin, ``Giant gravitons and correlators,'' JHEP
\textbf{1212} (2012) 011 
[arXiv:1209.6624 [hep-th]]. 

\bibitem{Bissi:2011dc} A.~Bissi, C.~Kristjansen, D.~Young and K.~Zoubos,
``Holographic three-point functions of giant gravitons,'' JHEP \textbf{1106}
(2011) 085 
[arXiv:1103.4079 [hep-th]]. 

\bibitem{Caputa:2012yj} P.~Caputa, R.~de Mello Koch and K.~Zoubos,
``Extremal versus Non-Extremal Correlators with Giant Gravitons,'' JHEP
\textbf{1208} (2012) 143 
[arXiv:1204.4172 [hep-th]]. 

\bibitem{Carlson:2011hy} W.~Carlson, R.~de Mello Koch and H.~Lin,
``Nonplanar Integrability,'' JHEP \textbf{1103} (2011) 105
[arXiv:1101.5404 [hep-th]]. 

\bibitem{deMelloKoch:2012ck} R.~de Mello Koch and S.~Ramgoolam, ``A double
coset ansatz for integrability in AdS/CFT,'' JHEP \textbf{1206} (2012) 083
[arXiv:1204.2153 [hep-th]]. 

\bibitem{Koch:2011hb} R.~de Mello Koch, M.~Dessein, D.~Giataganas and
C.~Mathwin, ``Giant Graviton Oscillators,'' JHEP \textbf{1110} (2011) 009
[arXiv:1108.2761 [hep-th]]. 

\bibitem{Horava:2005pv} P.~Horava and P.~G.~Shepard, ``Topology changing
transitions in bubbling geometries,'' JHEP \textbf{0502} (2005) 063
[hep-th/0502127]. 

\bibitem{Mosaffa:2006qk} A.~E.~Mosaffa and M.~M.~Sheikh-Jabbari, ``On
classification of the bubbling geometries,'' JHEP \textbf{0604} (2006) 045
[hep-th/0602270]. 

\bibitem{Chen:2007gh} H.~Y.~Chen, D.~H.~Correa and G.~A.~Silva, ``Geometry
and topology of bubble solutions from gauge theory,'' Phys.\ Rev.\ D \textbf{%
76} (2007) 026003 
[hep-th/0703068 [HEP-TH]]. 

\bibitem{Koch:2008ah} R.~de Mello Koch, ``Geometries from Young Diagrams,''
JHEP \textbf{0811} (2008) 061 
[arXiv:0806.0685 [hep-th]].

\bibitem{Lin:2010sba} H.~Lin, A.~Morisse and J.~P.~Shock, ``Strings on
Bubbling Geometries,'' JHEP \textbf{1006} (2010) 055
[arXiv:1003.4190 [hep-th]]. 

\bibitem{Borodin Olshanski} A. Borodin, G. Olshanski, ``The Young bouquet
and its boundary,'' Mosc. Math. J. \textbf{13} (2013) 193-232
[arXiv:1110.4458].

\bibitem{Jafferis:2017tiu} D. L. Jafferis, ``Bulk reconstruction and the
Hartle-Hawking wavefunction,'' arXiv:1703.01519 [hep-th].

\bibitem{Kiefer:2005uk} C.~Kiefer, ``Quantum gravity: General introduction
and recent developments,'' Annalen Phys.\ \textbf{15} (2005) 129
[gr-qc/0508120]. 

\bibitem{Yamaguchi:2006te} S.~Yamaguchi, ``Bubbling geometries for half BPS
Wilson lines,'' Int.\ J.\ Mod.\ Phys.\ A \textbf{22} (2007) 1353
[hep-th/0601089]. 

\bibitem{Lunin:2006xr} O.~Lunin, ``On gravitational description of Wilson
lines,'' JHEP \textbf{0606} (2006) 026 
[hep-th/0604133]. 

\bibitem{Gomis:2006sb} J.~Gomis and F.~Passerini, ``Holographic Wilson
Loops,'' JHEP \textbf{0608} (2006) 074 
[hep-th/0604007]. 

\bibitem{DHoker:2007mci} E.~D'Hoker, J.~Estes and M.~Gutperle, ``Gravity
duals of half-BPS Wilson loops,'' JHEP \textbf{0706} (2007) 063
[arXiv:0705.1004 [hep-th]].

\bibitem{Gukov:2008sn} S.~Gukov and E.~Witten, ``Rigid Surface Operators,''
Adv.\ Theor.\ Math.\ Phys.\ \textbf{14} (2010) no.1, 87
[arXiv:0804.1561 [hep-th]]. 

\bibitem{Gomis:2007fi} J.~Gomis and S.~Matsuura, ``Bubbling surface
operators and S-duality,'' JHEP \textbf{0706} (2007) 025
[arXiv:0704.1657 [hep-th]].

\bibitem{Gomis:2006cu} J.~Gomis and C.~Romelsberger, ``Bubbling Defect
CFT's,'' JHEP \textbf{0608} (2006) 050 
[hep-th/0604155]. 

\bibitem{Balasubramanian:2007bs} V.~Balasubramanian, J.~de Boer, V.~Jejjala
and J.~Simon, ``Entropy of near-extremal black holes in AdS(5),'' JHEP
\textbf{0805} (2008) 067 
[arXiv:0707.3601 [hep-th]].

\bibitem{Fareghbal:2008ar} R.~Fareghbal, C.~N.~Gowdigere, A.~E.~Mosaffa and
M.~M.~Sheikh-Jabbari, ``Nearing Extremal Intersecting Giants and New
Decoupled Sectors in N = 4 SYM,'' JHEP \textbf{0808} (2008) 070
[arXiv:0801.4457 [hep-th]].

\bibitem{Bena:2006is} I.~Bena, C.~W.~Wang and N.~P.~Warner, ``The Foaming
three-charge black hole,'' Phys.\ Rev.\ D \textbf{75} (2007) 124026
[hep-th/0604110]. 

\bibitem{Dijkgraaf:2005bp} R.~Dijkgraaf, R.~Gopakumar, H.~Ooguri and
C.~Vafa, ``Baby universes in string theory,'' Phys.\ Rev.\ D \textbf{73}
(2006) 066002 
[hep-th/0504221]. 

\bibitem{Maldacena:2013xja} J.~Maldacena and L.~Susskind, ``Cool horizons
for entangled black holes,'' Fortsch.\ Phys.\ \textbf{61} (2013) 781
[arXiv:1306.0533 [hep-th]]. 

\bibitem{Einstein:1935tc} A.~Einstein and N.~Rosen, ``The Particle Problem
in the General Theory of Relativity,'' Phys.\ Rev.\ \textbf{48} (1935) 73.

\bibitem{Einstein:1935rr} A.~Einstein, B.~Podolsky and N.~Rosen, ``Can
quantum mechanical description of physical reality be considered
complete?,'' Phys.\ Rev.\ \textbf{47} (1935) 777.

\bibitem{Balasubramanian:2017hgy} V.~Balasubramanian, A.~Lawrence, A.~Rolph
and S.~Ross, ``Entanglement shadows in LLM geometries,'' arXiv:1704.03448
[hep-th]. 
\end{thebibliography}
\end{document}